\journal{}
\newtheorem{definition}{Definition}
\newtheorem{assumption}{Assumption}
\newtheorem{theorem}{Theorem}
\newtheorem{problem}{Problem}
\newtheorem{proposition}{Proposition}
\newif \ifmargincomments 
\begin{document}
\begin{frontmatter}
\title{Planning and Operations of Mixed Fleets in Mobility-on-Demand Systems}
\author[stanford]{Kaidi Yang}
\ead{kaidi.yang@stanford.edu}
\author[stanford]{Matthew W. Tsao}
\ead{mwtsao@stanford.edu}
\author[grab]{Xin Xu}
\ead{xin.xu@grabtaxi.com}
\author[stanford]{Marco Pavone}
\ead{pavone@stanford.edu}
\address[stanford]{Autonomous Systems Laboratory, Stanford University}
\address[grab]{Grab Holdings Inc. }
\begin{abstract}
Automated vehicles (AVs) are expected to be beneficial for Mobility-on-Demand (MoD), thanks to their ability of being globally coordinated. 
To facilitate the steady transition towards full autonomy, we consider the transition period of AV deployment, whereby an MoD system operates a mixed fleet of automated vehicles (AVs) and human-driven vehicles (HVs). 
In such systems, AVs are centrally coordinated by the operator, and the HVs might strategically respond to the coordination of AVs. 
We devise computationally tractable strategies to coordinate mixed fleets in MoD systems. Specifically, we model an MoD system with a mixed fleet using a Stackelberg framework where the MoD operator serves as the leader and human-driven vehicles serve as the followers. We develop two models: 1) a steady-state model to analyze the properties of the problem and determine the planning variables (e.g., compensations, prices, and the fleet size of AVs), and 2) a time-varying model to design a real-time coordination algorithm for AVs. The proposed models are validated using a case study inspired by real operational data of a MoD service in Singapore. Results show that the proposed algorithms can significantly improve system performance. 

\end{abstract}
\begin{keyword}
Automated Vehicles; Mobility-on-Demand Systems; Mixed Fleets; Stackelberg game; Model Predictive Control
\end{keyword}
\end{frontmatter}

\section{Introduction}


The past decade has witnessed the widespread deployment of Mobility-on-Demand services, thanks to the rapid adoption of smartphones, developments in wireless communication, and the boom of shared economies. These services, e.g., the ride-hailing services provided by Uber and DiDi,  present immense potential to enhance mobility and accessibility while reducing resource usage. One key operational challenge associated with these services is represented by the \emph{vehicle imbalances} due to asymmetric transportation demand: vehicles tend to accumulate in some regions while becoming depleted in  others, giving rise to inefficient operations of the MoD system.  Currently, MoD systems typically address this challenge by combining dynamic pricing \citep[see for example,][]{ZhaYinEtAl2018,BattifaranoQian2019,YangShaoEtAl2020, ChenZhengEtAl2020,MaXuEtAl2020} with a real-time heat-map of the passenger demand to rebalance their fleets. However, the rebalancing actions are still performed in a decentralized manner by human drivers who are interested in their own earnings, which may not yield optimal system performance. 

The emergence of automated vehicles (AVs) provides opportunities for sophisticated and centralized vehicle control, and thus can be beneficial to MoD systems. By integrating AVs into MoD systems, Autonomous Mobility-on-Demand (AMoD) is expected to be a promising paradigm for future mobility systems, whereby a fleet of autonomous taxi-like vehicles is coordinated by a central operator to provide on-demand mobility services.  Compared to traditional MoD services, AMoD offers several advantages. 
First, by eliminating the driver costs, AMoD can reduce the price of 
trips and thus improve the operator's profit. 
Second, since AVs can always be operational in the system, AMoD can provide continuous and reliable services regardless of the time of the day.
Third, unlike human drivers who can be ill-informed or self-interested, AVs can be coordinated in a centralized manner to provide better services which enable higher vehicle utilization and operational efficiency. 
Thanks to these advantages, AMoD has attracted increasing attention in the research fields of transportation and robotics, including demand analysis and prediction \citep{ChenLowEtAl2015,WenNassirEtAl2019}, real-time coordination algorithms~\citep{ZhangPavone2016,LiuBansalEtAl2018,TsaoIglesiasEtAl2018,IglesiasRossiEtAl2017}, interactions with public transport \citep{SalazarRossiEtAl2018} and power networks~\citep{TuckerTuranEtAl2019,EstandiaSchifferEtAl2019,BoewingSchifferEtAl2020}, transportation network design~\citep{PintoHylandEtAl2019,ZardiniLanzettiEtAl2020}, and system evaluation \citep{DiaJavanshour2017,Hoerl2018,FagnantKockelman2018,HorlRuchEtAl2019}.

Despite the benefits of AMoD systems, it is evident that AVs will only gradually be technologically mature and adopted in the market. 
During the transition period, MoD systems will conceivably be operating a mixed fleet of AVs and human-driven vehicles (HVs), whereby HVs might respond to the coordination of AVs strategically, making global optimization  challenging.  
To facilitate the steady transition towards full autonomy, this paper aims to devise computationally tractable strategies to design and coordinate mixed fleets in MoD systems, considering the interactions between AVs and HVs. Specifically, 
we frame a fleet coordination problem with a mixed equilibrium of AVs that are centrally coordinated and HVs that act according to their own interests.  From an operational perspective, such a system could also be interpreted as a mixed system of compliant drivers (e.g., contractor drivers who are paid to strictly follow the instructions given by the operator) and self-interested drivers--thus, the  tools and insights derived in this paper could be applied to existing systems (i.e., without AVs) as well.

\emph{Related work}. To the best of our knowledge, 
such a mixed fleet system has been rarely studied in the context of MoD services \citep{LokhandwalaCai2018,WeiPedarsaniEtAl2019}. 
\citet{LokhandwalaCai2018} analyzed the ride-sharing serviced provided by autonomous taxis and human-driven taxis based on an agent-based simulation of New York City, considering factors such as taxi shifts and passenger preference. 
\citet{AfecheLiuEtAl2018} focused on a two-location, four route loss network and investigated the impact of demand-side admission and supply-side rebalancing control on the spatial vehicle imbalances and the strategical behavior of drivers. 
The steady-state system equilibria are established in scenarios with different control regimes, ranging from minimal platform control to centralized admission and repositioning control. 
\citet{WeiPedarsaniEtAl2019} explicitly considered the interactions between AVs and HVs and analyzed the steady-state behavior of the mixed fleet system in a transportation network with equi-distant nodes, whereby the behavior of HVs is modeled after 
\citet{BimpikisCandoganEtAl2019} as a set of non-linear equations. 
Both \citet{AfecheLiuEtAl2018} and \citet{WeiPedarsaniEtAl2019} focused on the steady-state analysis of special types of transportation networks. To sum up, it remains unclear how mixed fleet systems with realistic road networks can be controlled, especially in real time. 

Several works analyze the behavior of HVs in the MoD context. Most works focus on the route choice of HVs delivering passengers  without considering the rebalancing behavior or the willingness of HVs to accept passengers \citep[e.g.,][]{ZhuLevinson2015,HeChenEtAl2017,LiWangEtAl2018}, or 
analyze the response of HVs to incentives such as surge or dynamic pricing at a single or at a few locations \citep{BanerjeeJohariEtAl2015, CastilloKnoepfle2017, Chen2016, ZhaChenEtAl2017,GudaSubramanian2019}. 
\citet{Buchholz2019}, on the other hand, empirically analyzed the dynamic spatial equilibrium of taxicabs based on the New York City taxi data, and showed that simple change in prices can improve the taxi services. 
\citet{BimpikisCandoganEtAl2019} studied the rebalancing behavior of HVs by formulating the equilibrium as the solution to a set of non-linear equations. However, it is assumed that the human drivers always accept the passengers assigned to them by the operator, which may not be the case in  real-world systems. Moreover, it is challenging to leverage the model proposed in \citet{BimpikisCandoganEtAl2019} for real-time control due to its non-linear structure.

 Related works on such mixed fleet systems also exist in the context of traffic assignment (i.e., without dispatch or rebalancing), which typically search for the optimal equilibrium in a Stackelberg game where the leader is the group of compliant drivers, and the followers are the self-interested drivers~\citep{YangZhangEtAl2007,FlorianMorosan2014,XieXie2014,ChenHeEtAl2017,BagloeeSarviEtAl2017}. These works, however, focus on congestion games between the two types of vehicles. It is nevertheless unclear how the proposed algorithms in these works can be adapted to the MoD services where rebalancing and  passenger assignment are important features.

\emph{Statement of contribution}. The contribution of this paper is three-fold. First, we initiate  research on mixed fleet systems with realistic road networks in a MoD context. We account for the interactions between AVs and HVs using a Stackelberg framework where the MoD operator with its AV fleet serves as the leader and human-driven vehicles serve as the followers. Second, we develop two models: 1) a steady-state model to analyze the properties of the problem and determine the planning variables (e.g., prices, compensations for HVs, and the fleet size of the AVs), and 2) a time-varying model to design a real-time coordination algorithm. Third, we conduct real-world case studies using real data in Singapore to validate the proposed algorithms and provide a guideline to deploy AVs in scenarios with various AV penetration rates. 

This paper is organized as follows. Section \ref{sec:system} presents an overview of MoD systems with a mixed fleet and introduces general notions. Section \ref{sec:ss} provides theoretical analysis in the steady state, establishes the equilibrium of the mixed fleet, and derives decisions for planning. Section \ref{sec:tv} builds upon the results of Section \ref{sec:ss} and develops a simplified model for real-time control. Section \ref{sec:simulation} presents simulation results to illustrate the benefits of employing AVs in the system. Section \ref{sec:conclusion} concludes the paper and proposes future directions.

\section{System Description}
\label{sec:system}
For reader's convenience, we summarize the most important variables in the following table. 

\begin{longtable}{p{0.8in}p{5.2in}}
\caption{List of most important variables}\\
\multicolumn{2}{c}{The following variables appear in both the steady-state and time-varying formulations } \\\toprule
$\mathcal{G}=(\mathcal{N},\mathcal{E})$ & graph representation of the transportation network, where the node set $\mathcal{N}$ is the set of stations and the edge set $\mathcal{E}$ is the set of shortest paths connecting the stations. \\\hline
$A$ & adjacency matrix of graph $\mathcal{G}$.\\\hline
$\mathcal{T}$ & set of discrete time intervals $\mathcal{T}=\{1,2,\cdots, T\}$. \\\hline
$\Delta T$ & length of time step. \\\hline
$\mathcal{M}$ & set of vehicle classes $\mathcal{M}=\{a,h\}$, indexed by $m$, where $a$ represents automated vehicles (AVs) and $h$ represents human-driven vehicles (HVs). \\\hline
$\tau_{ij}$ & travel time for vehicles traveling from station $i$ to station $j$.\\\hline
$\delta_{ij}$ & travel distance for vehicles traveling from station $i$ to station $j$.\\\hline
$N^m$ & upper bound on the number of vehicles of vehicle class $m\in \mathcal{M}$ in the transportation system. \\\hline
$\sigma$ & cost of operating vehicles of any class per vehicle per distance driving. \\\hline
$\mu$ & minimum earning rate such that HVs are willing to enter the system, referred as the value of time (VOT) of HV drivers. \\\hline
$q_{ijt}$ & passenger demand between origin-destination (OD) pair $(i,j) \in \mathcal{E}$ starting at time step $t\in\mathcal{T}$. \\\hline
$c_{ijt}$ & compensation for HVs for serving the requests between OD pair $(i,j) \in \mathcal{E}$  starting at time step $t\in\mathcal{T}$ \\\hline
$p_{ijt}$ & prices for the requests between OD pair $(i,j) \in \mathcal{E}$  starting at time step $t\in\mathcal{T}$. \\\hline
$x_{ijt}^m$& passenger flow, i.e., number of passengers with OD pair $(i,j)\in\mathcal{E}$ served by vehicle class $m\in \mathcal{M}$ within time step $t\in\mathcal{T}$. \\\hline
$y_{ijt}^m$& rebalancing flow, i.e., number of vehicles of class $m\in\mathcal{M}$ rebalancing  from station $i$ to station $j$ within time step $t\in\mathcal{T}$. \\\hline
$v$& earning rate of HVs at equilibrium. \\\hline
$\phi_i$& earnings of HVs at equilibrium starting from station $i$.\\\hline
$\Phi$ & function indicating HV behavior. \\ 
\bottomrule\\

\multicolumn{2}{c}{The following variables appear only in the steady-state formulation}\\\toprule
$\bar{q}_{ij} $ & remaining demand for HVs after the operator allocates demand to AVs. \\\hline
$F_{ij}(\cdot)$ & cumulative distribution function of the willingness to pay for OD pair $(i,j)\in\mathcal{E}$. \\\hline
$H_{ij}(\cdot)$ & inverse demand function for OD pair $(i,j)\in\mathcal{E}$. \\\hline
$u_{ij}$ & number of HVs waiting to serve trips from station $i$ to station $j$. \\\hline
$\zeta_{ij}$ & waiting time for OD pair $(i,j)\in\mathcal{E}$. \\\hline
$\mathcal{L}$ & set of trails of HVs (see Definition~\ref{def:trail}), indexed by $l$.\\\hline
$U_l$ & total earning along trail $l \in\mathcal{L}$.\\\hline
$T_l$ & total traversing time along trail $l\in\mathcal{L}$. \\\hline
\bottomrule\\

\multicolumn{2}{c}{The following variables appear only in the time-varying formulation}\\\toprule
$w_{ijt}$& number of passengers with OD pair $(i,j)\in\mathcal{E}$ waiting to be served at time step $t\in\mathcal{T}$. \\\hline
$r_{it}$& number of vacant vehicles at station $i\in\mathcal{N}$ at time step $t$. \\\hline
$\psi$& VOT of passengers. \\\hline 
$\epsilon$ & per time step probability that unserved passengers choose to stay in the system, $\epsilon\in [0,1]$. \\\hline
$\lambda$ & weight parameter that combines the objective of the leader model with the objective of the follower model. \\\bottomrule
\end{longtable}

\begin{figure}[htbp]
    \centering
    \includegraphics[width=0.8\textwidth]{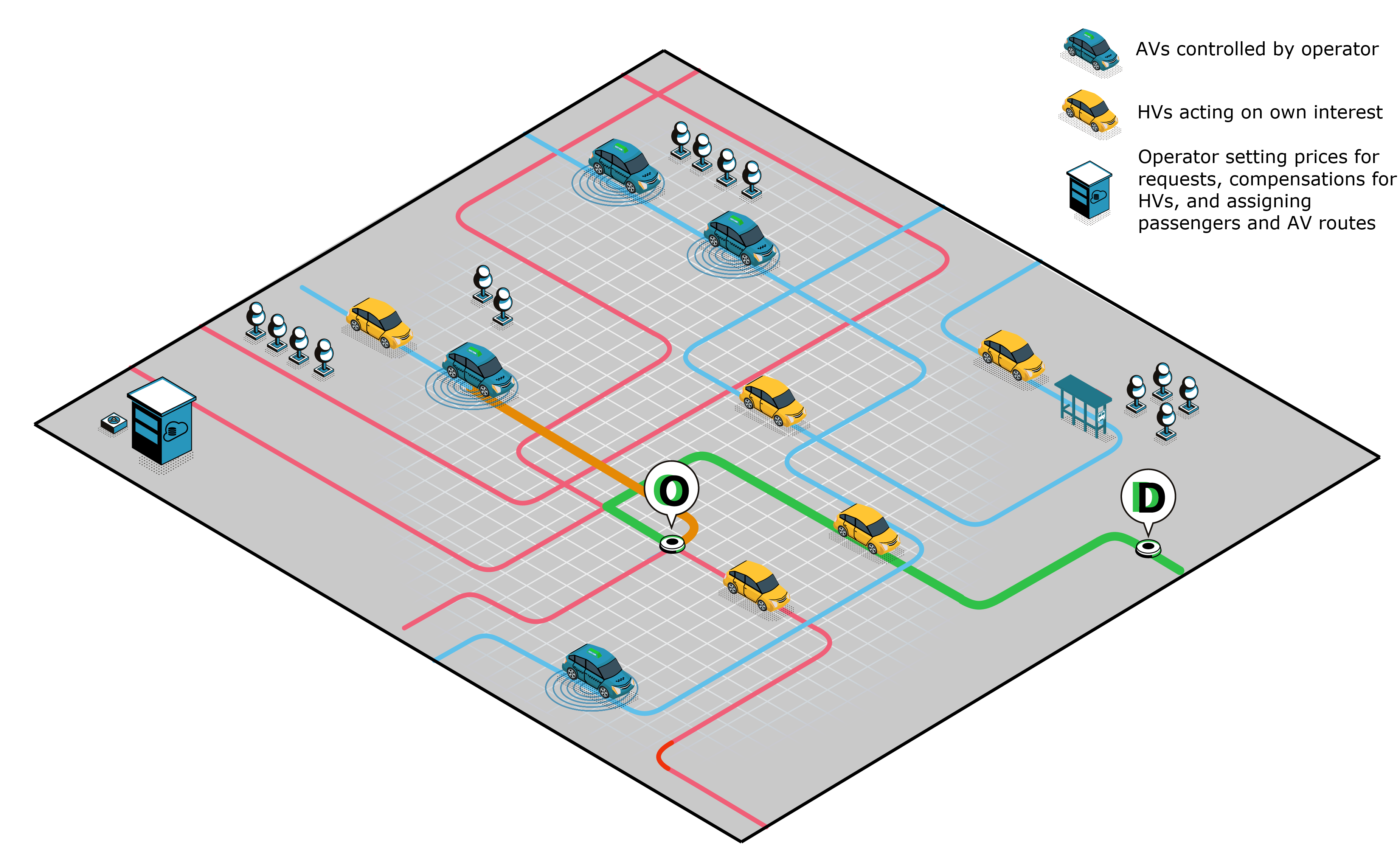}
    \caption{Illustration of the MoD system, where blue vehicles represent automated vehicles (AVs) and yellow vehicles represent human-driven vehicles (HVs). An operator is able to set prices for request and compensations for HVs, assign passengers, and coordinate the routes of AVs.  }
    \label{fig:system}
\end{figure}

We consider a city where an operator provides mobility on-demand (MoD) services with a mixed fleet of automated vehicles (AVs, denoted as $a$) and human-driven vehicles (HVs, denoted as $h$), illustrated in Figure~\ref{fig:system}. 
Mathematically,  we describe the urban transportation network as a weighted graph $\mathcal{G}=(\mathcal{N},\mathcal{E})$, where $\mathcal{N}$ is the set of stations (i.e., pick-up or drop-off locations) and $\mathcal{E}$ is a set of the directed edges, i.e., shortest paths between pairs of stations. For a typically city, we consider $\mathcal{G}$ to be fully connected such that a directed edge exists for each pair of stations. Denote matrix $A\in\mathbb{R}^{|N|\times|N|}$ as the adjacency matrix
of graph $\mathcal{G}$, where $|\cdot|$ represents the number of elements in a set.  We discretize the time horizon into a set of discrete intervals $\mathcal{T}=\{1,2,\cdots, T\}$ of a given length $\Delta T$. 
Without loss of generality, we assume both types of vehicles can operate on each station of the network. The proposed methodological framework can be readily adapted to account for the limited driving capacity of AVs during the transition period, whereby AVs can only operate within certain regions or take certain paths.  
 

The travel time for edge $(i,j)\in\mathcal{E}$ is defined as the number of time steps it takes a vehicle to travel along the shortest path between station $i$ and station $j$, denoted as an integer $\tau_{ij} \in \mathbb{Z}_+$. We make the following remarks for the travel times. First, we assume that the travel times are given and independent of the coordination of the MoD fleet. This assumption applies to cities where the MoD fleet constitutes a relatively small proportion of the entire vehicle population on the transportation network, and thus the impact of the MoD fleet on traffic   is marginal. Second, we observe that the travel times typically change slowly with time, and hence for simplicity of notation, we ignore the time dimension of $\tau_{ij}$. The proposed modeling framework can be easily adapted to consider time-varying exogenous travel times. Third, we assume the travel times to be independent of the vehicle classes, since vehicles of both classes have similar speeds and will follow the same shortest paths between two stations. The modeling framework can also be easily adapted to incorporate class-dependent travel times. Similar to travel times, we also define the travel distance for edge $(i,j)\in\mathcal{E}$ as the length of the shortest path between station $i$ and station $j$, denoted as $\delta_{ij}$.

We allow the number of each vehicle class to be bounded by an upper bound $N^m,~m\in\mathcal{M}$, where $N^a$ is the number of AVs owned by the operator, and $N^h$ represents the restriction for HVs. Unlike the existing works that assume infinite HV provision~\citep{WeiPedarsaniEtAl2019,BimpikisCandoganEtAl2019},  we consider the scenarios where certain restrictions can be imposed on the number of HVs in the transportation system. One of such scenarios is that traffic authorities limit the number of vehicles on the streets to alleviate traffic congestion, reduce emissions, or improve the earnings of drivers, e.g., in  New York City~\citep{NYC2019}. 

The AV fleet is always available for service operation, whereas HVs determine their own availability. 
HVs would only participate in the system if their expected net earnings (the difference between their earnings and operational costs) per time step is higher than a given value $\mu$, which can be seen as the combination of drivers' value of leisure and drivers' external earnings if they do not participate in the system. Herein, we refer to this value $\mu$ as the value of time (VOT) of HV drivers. 
Furthermore, the operation of vehicles with any class incurs a fixed operational cost (e.g., the energy cost, vehicle depreciation, and vehicle maintenance) $\sigma$  per vehicle per unit distance driving. 

Passengers make transportation requests at each time step. We denote the origin-destination (OD) pairs as tuples $(i,j),~i,j \in\mathcal{N}$, and the demand for OD pair $(i,j)$ starting at time step $t\in\mathcal{T}$  as $q_{ijt}\geq 0$. Clearly, passengers that depart at time step $t$ arrive at their destinations at time step $t+\tau_{ij}$. We denote $w_{ijt}$ as the number of waiting passengers at time step $t$ that have not been matched with any vehicle. For  simplicity of presentation, we consider the assumption as in the existing literature~\citep{WeiPedarsaniEtAl2019} that passengers can only be taken by vehicles located at the same station as their origins. Nevertheless, we can 
straightforwardly adapt the proposed models to relax this assumption (see Section~\ref{sec:tvpickup}).

With the passenger requests, the MoD operator determines its strategies to optimize system performance (e.g., its profits, system-level earnings, social welfare, etc.). 
From the planning perspective, the operator can set the prices $p_{ijt}\geq 0$ for the requests with OD pair $(i,j)$ served at time $t$, the compensation $c_{ijt}\geq 0$ to HVs that serve these requests, and the fleet size of AVs in operation. 
From the operational perspective, the operator can fully coordinate the AVs, dynamically assigning passengers to them, and determining their routes.  
HVs will then freely choose the remaining passengers to optimize their own interests. This setting applies to systems where the operator broadcasts the passenger requests to drivers (e.g., in many taxi systems) or systems where the operator makes recommendations, but HVs can reject the assigned requests with minimal penalties. The models proposed in this paper, nevertheless, can be tailored to analyze the scenarios where HVs always accept the assigned requests, or be extended to consider scenarios with heterogeneous HV fleets where different types of HVs (compliant or strategic in passenger assignment) co-exist. 
After a vehicle (AV or HV) is matched with a request, the vehicle will pick up the matched passengers, and then deliver them to their destinations. 
Vehicles not matched with any passengers may either stay at the same station or rebalance to other stations. The rebelancing decisions for AVs are made by the operator, whereas HVs make such decisions on their own.
Mathematically, let us define the passenger flow $x_{ijt}^m\geq 0 $ and rebalancing  flow $y_{ijt}^m\geq 0$ as the number of vehicles of class $m\in\mathcal{M}$ that start moving along edge $(i,j)$ within time step $t$ with and without passengers, respectively. 
Passengers that are not matched can either stay in the system and enter the next round of assignment or leave. Let $\epsilon\in [0,1]$ be the per time step probability that unserved passengers choose to stay in the system, which characterizes the impatience of passengers. We assume that the probability $\epsilon$ is determined exogenously and does not depend on the real-time operation of the service (or equivalently, the coordination algorithms). Let  $w_{ijt}$ be the number of passengers waiting to be assigned at the beginning of time step $t$. Notice that $(1-\epsilon) w_{ijt}$ passengers will abandon their requests within time step $t$.

As a summary, the operator aims to optimize system performance by 1) determining the prices for passenger requests, compensations for HVs, and the fleet size of AVs; and 2) coordinating the passenger assignment and routing of AVs. HVs, on the other hand, make three types of decisions to maximize their earnings: 1) the participation in  the system, 2) the passenger pick-up strategy, and 3) the rebalancing strategy. 

Following the sequential property of such systems, we model the operations of mixed fleet systems using a Stackelberg game framework, where the leader is the MoD operator, and the followers are the HVs. We will present the details of the Stackelberg game in the rest of the paper, for both the steady-state formulation used for planning and the time-varying formulation used for real-time control.

\section{Planning of  MoD Systems With Mixed Fleets: Steady-State Formulation}
\label{sec:ss}
In this section, we propose a steady-state formulation of MoD systems with mixed fleets, based on the notion of Stackelberg games. In this formulation, we consider static demand and supply, and therefore ignore the time indices in all variables\footnote{More specifically, the demand, compensations, and prices for requests with OD pair $(i,j)$  become $q_{ij}$, $c_{ij}$, and $p_{ij}$, respectively. The passenger flow and rebalancing flow for OD pair $(i,j)$ and vehicle class $m$ become $x_{ij}^m$ and $y_{ij}^m$, respectively. The variables at steady state can be seen as the average of the real-time values.}.
We formulate the leader and follower models, provide a theoretical analysis of the equilibrium of the mixed fleet, and determine the planning variables, i.e., the trip prices, compensations, and the fleet size of AVs. 
The purpose of the steady-state formulation is two-fold. First, it helps the operator with making long-term plans on the number of AVs deployed  in the system. 
Second, the theoretical results of this section set the foundation for the real-time control model in Section~\ref{sec:tv}. 
This section is organized as follows. 
Section~\ref{sec:ssHV} develops the follower model and establishes the results on the equilibrium for HVs. 
Section~\ref{sec:ssOperator} presents the leader model, i.e., the optimization model solved by the operators. 
Section~\ref{sec:ssMixed} analyzes the equilibrium of the mixed fleet.

\subsection{Equilibrium of HVs}
\label{sec:ssHV}
In this subsection, we develop the follower model, which characterizes the behavior of HVs. 
We establish the equilibrium given remaining demand $0\leq \bm{\bar{q}} =\{\bar{q}_{ij}\}_{(i,j)\in\mathcal{E}}$ and compensations $0\leq \bm{c}=\{c_{ij}\}_{(i,j)\in\mathcal{E}}$. 
Without loss of generality, we only consider the case where there is at least one OD pair such that $\bar{q}_{ij}>0$, otherwise no HVs would appear in the system and the equilibrium is trivial. 
Then with the passenger flow $\bm{x}^h=\{x_{ij}^h\}_{(i,j)\in\mathcal{E}}$ and the rebalancing flow $\bm{y}^h=\{y_{ij}^h\}_{(i,j)\in\mathcal{E}}$, we have the following conservation equation for the passenger demand that HVs take:
\begin{align}
x_{ij}^h \leq \bar{q}_{ij},~(i,j)\in\mathcal{E}, \label{eq:HVDemand}
\end{align}
where the inequality represents the possibility that not all passengers can be served because 1) the number of HVs on the road network may be upper bounded; and 2) HVs would choose the requests to maximize their own interest, and may not be willing to serve the requests with low compensations. 

HVs that do not take any passenger would either rebalance themselves to other stations or stay at the same station. The passenger flow $\bm{x}^h$ and the rebalancing flow $\bm{y}^h$ satisfy the following conservation equation Eq.(\ref{eq:HVFlow}):
\begin{align}
\sum_{j \in \mathcal{N}\backslash\{i\}}x_{ji}^h  + \sum_{j \in \mathcal{N}\backslash\{i\} }y_{ji}^h  =  \sum_{j \in \mathcal{N} \backslash\{i\}}x_{ij}^h   + \sum_{j \in \mathcal{N}\backslash\{i\}}y_{ij}^h , ~i \in \mathcal{N},  \label{eq:HVFlow}
\end{align}
where Eq.(\ref{eq:HVFlow}) represents that, in steady state, the number of HVs entering a station equals the number of HVs leaving the station within each time step. 
 
HVs that decide to stay at station $i$ are interested in serving passenger requests with certain destinations. If passengers with some OD pairs are more valuable than others, HVs might choose to stay at the origin of these requests and wait to serve them. Let us denote $\bm{u}=\{u_{ij}\}_{(i,j)\in\mathcal{E}}$ with $u_{ij}\geq 0,~i,j \in \mathcal{N}$, as the number of HVs at station $i$ that are waiting to serve requests with an OD pair $(i,j)$. Clearly, a HV would wait for requests with OD pair $(i,j)$ (i.e., $u_{ij}>0$)  only if the demand $\bar{q}_{ij}$ has been fully served by other HVs. Such a relation can be represented by the following complementary condition, 
\begin{align}
u_{ij} (x_{ij}^h - \bar{q}_{ij}) = 0,~(i,j)\in\mathcal{E}. \label{eq:HVWaiting}
\end{align}
HVs waiting for serving requests with OD pair $(i,j)$ can be regarded as a queue, with a discharging rate equal to the passenger demand $\bar{q}_{ij}$. Hence, the waiting time for OD pair $(i,j)$ can be written as $\zeta_{ij} = u_{ij}/\bar{q}_{ij}$. 

The constraints on the total number of vehicles in the system can be described as
\begin{align}
\sum_{(i,j)\in\mathcal{E}} (x_{ij}^h+y_{ij}^h)\tau_{ij} + \sum_{(i,j)\in\mathcal{E}}u_{ij} \leq N^h,  \label{eq:HVUB}
\end{align}
where $N^h$ is the imposed upper bound on the number of HVs. The first term represents the number of HVs in passenger or rebalancing routes, and the second term represents the number of  HVs waiting for passengers.

It appears reasonable to assume that HVs enter and leave the system from the same station (e.g., their residence), so that their trail in the system is cyclic. We define the trail of HVs as in Definition \ref{def:trail}. 
 
\begin{definition}[Trail of HVs]\label{def:trail}
Denote $e=(i,j,\kappa)$ with $\kappa\in\{\rm{pax},~\rm{reb}\}$ as the trip HVs make between station $i$ and station $j$. We refer to $e$ as a passenger trip if 
$\kappa=\rm{pax}$ and as a rebalancing trip if $\kappa=\rm{reb}$. 
A trail of HVs $l=(e_1,e_2,\cdots,e_{|l|})\in\mathcal{L}$ is defined as a sequence of trips made by HVs such that $i_{r+1}=j_{r},~r=0,\cdots,|l|-1$, and $i_0=j_{|l|}$, where $|l|$ represents the number of trips in trail $l$. 
We say a trail is used if $ z_l =  \min_{r}\{x_{i_rj_r}\mathbb{I}_{\kappa_r=\rm{pax}} + y_{i_rj_r}\mathbb{I}_{\kappa_r=\rm{reb}}\}> 0$ and unused if $z_l=0$. 
\end{definition}

Following Definition~\ref{def:trail}, we can calculate the total earnings $U_l$ and the total traversing time $T_l$ along trail $l$ via Eq.(\ref{eq:U}) and Eq.(\ref{eq:T}), respectively,
\begin{align}
U_l &= \sum_{r=1}^{|l|}\Big((c_{i_rj_r}-\sigma\delta_{i_rj_r})\mathbb{I}_{\kappa_r=\rm{pax}}-\sigma\delta_{i_rj_r}\mathbb{I}_{\kappa_r=\rm{reb}}\Big),~l\in\mathcal{L}, \label{eq:U}\\
T_l &= \sum_{r=1}^{|l|}\Big((\tau_{i_rj_r}+\zeta_{i_rj_r})\mathbb{I}_{\kappa_r=\rm{pax}} + \tau_{i_rj_r}\mathbb{I}_{\kappa_r=\rm{reb}} \Big),~l\in\mathcal{L}. \label{eq:T}
\end{align}
Specifically, Eq.(\ref{eq:U}) calculates $U_l$ as the expected earning HVs make by taking trail $l$, defined as the difference between compensations and operational costs. The first term of Eq.(\ref{eq:T}) represents the time it takes to serve passenger trip $e_r$, including the travel time $\tau_{i_rj_r}$ and waiting time $\zeta_{i_rj_r}$, and the second term of Eq.(\ref{eq:T}) represents the time spent on rebalancing trips. 

We consider a scenario where HVs are interested in maximizing their \emph{earning rate}, defined as the ratio between the total earnings and the total travel time as in Definition \ref{def:earning}. 
  \begin{definition}[Earning rate]\label{def:earning}
  The earning rate of trail $l$, $V_l$, is defined as the ratio between the total earning $U_l$ and the total travel time $T_l$, i.e. $V_l=U_l/T_l$.  
 \end{definition}

We can now formally define the equilibrium of HVs as in Definition \ref{def:HVEquilibrium}. 

\begin{definition}[Equilibrium of HVs]\label{def:HVEquilibrium}
An equilibrium of HVs given compensations $\bm{c}$ and remaining demand $\bm{\bar{q}}$ is a tuple $(\bm{x}^h,\bm{y}^h,\bm{u})$ with $\bm{x}^h,\bm{y}^h,\bm{u}\geq 0$ satisfying the following conditions. 
\begin{enumerate}
\item[(a)~] Eq.(\ref{eq:HVDemand}) -- Eq.(\ref{eq:HVUB}) hold. 
\item[(b)~] For any used trail $l$, it holds that $U_{l'}/T_{l'}\leq U_l/T_l,~\forall l'\in\mathcal{L}$ and $\mu\leq U_l/T_l$, where $U_l$ and $T_l$ are defined in Eq.(\ref{eq:U}) and Eq.(\ref{eq:T}), respectively. 
\item[(c)~] For any used trail $l$, either $\mu = U_l/T_l$ or  Eq.(\ref{eq:HVUB}) holds with an equality. 
\end{enumerate}
\end{definition}

In Definition~\ref{def:HVEquilibrium}, condition (b) is consistent with the Waldrop's principle that the earning rate of any used trail is no less than the earning rate of any unused trail, i.e., only the best trails are chosen. Condition (b) also ensures that the earning rate of a used trail is no less than the VOT of HV drivers $\mu$. This condition implies that every HV in the system would have the same earning rate $v\geq \mu$. Condition (c) states that HVs will keep participating in the system until the earning rate equals the operational cost or the number of vehicles reaches the upper bound.   Denote by function $\Phi(\bm{c},\bm{\hat{q}})$ the set of equilibria defined in Definition~\ref{def:HVEquilibrium}.

Before we establish the equilibrium of HVs, we make the following assumption on the  
total earning of trails $U_l,l\in\mathcal{L}$, i.e., 
\begin{assumption}[Non-trivial equilibrium]\label{asm:trail}
There exists a trail $l\in\mathcal{L}$, such that $U_l\geq \mu T_l$. 
\end{assumption}
Assumption~\ref{asm:trail} states that at least one trail $l\in\mathcal{L}$  yields an earning rate no less than the VOT of HVs $\mu$. This ensures that some HVs would be interested in participating in the system. Otherwise, the equilibrium is trivial, whereby no HVs enter the system.  

It is challenging to compute an equilibrium in the sense of Definition~\ref{def:HVEquilibrium}, since there are an infinite number of trails, and Eq.(\ref{eq:HVWaiting}) represents a non-convex constraint. To tractably compute system equilibria, we present two auxiliary problems, namely  Problem~\ref{prb:HV1} and Problem~\ref{prb:HV2} (defined below). These two problems are linear programming (LP) problems characterized by the expected earning rate $v$. We will show that if the value of $v$ is properly chosen (e.g., via a line search algorithm), one can use the optimal primal and dual solutions to Problem~\ref{prb:HV1} and Problem~\ref{prb:HV2} to construct an equilibrium of HVs. These two problems are used to account for different cases of expected earning rate $v$. 


\begin{problem}[Auxiliary follower problem with given earning rate] \label{prb:HV1}
Given a fully connected transportation network $G=(\mathcal{N},\mathcal{E})$ with travel times $\bm{\tau}$, compensation $\bm{c}$, remaining demand $\bm{\bar{q}}$, costs associated with operating vehicles  per unit distance $\sigma$, and expected earning rate $v$, the auxiliary follower problem reads as follows: 
\begin{subequations}
\begin{align}
\max_{\bm{x}^h,\bm{y}^h} &\quad J^{\rm{SS}}_{L,v} = \sum_{(i,j)\in \mathcal{E}}c_{ij}x_{ij}^h - \sigma\sum_{(i,j)\in \mathcal{E}}\delta_{ij}(x_{ij}^h+y_{ij}^h) -v\sum_{(i,j)\in \mathcal{E}}\tau_{ij}(x_{ij}^h+y_{ij}^h) \label{eq:HVOpt0}\\
\rm{s.t.}&\quad x_{ij}^h \leq \bar{q}_{ij},~(i,j)\in\mathcal{E} \label{eq:HVOpt1}\\
&\quad \sum_{j \in \mathcal{N}\backslash\{i\}}x_{ji}^h  + \sum_{j \in \mathcal{N}\backslash\{i\} }y_{ji}^h  =  \sum_{j \in \mathcal{N} \backslash\{i\}}x_{ij}^h   + \sum_{j \in \mathcal{N}\backslash\{i\}}y_{ij}^h , ~i \in \mathcal{N}  \label{eq:HVOpt2} \\
&\quad x_{ij}^h,y_{ij}^h \geq 0,~(i,j)\in\mathcal{E}.  \label{eq:HVOpt3}
\end{align} \label{eq:HVOpt} 
\end{subequations}
\end{problem}

In Problem~\ref{prb:HV1}, the objective function Eq.(\ref{eq:HVOpt0}) represents the net earnings of HVs, where the first two terms represent the total earnings of HVs as the difference between the compensations (the first term) and operational costs (the second term), and the third term represents a proxy for expected earning of an outside option (i.e. an average trail). Since HVs are self interested, the third term encourages minimization of opportunity cost. Notice that the objective function does not consider the waiting time $\zeta_{ij}$ (or equivalently the number of waiting HVs $u_{ij}$). We will show that, at equilibrium, these variables can be represented using the dual solution to Problem~\ref{prb:HV1}.  Constraints Eq.(\ref{eq:HVOpt1}) and Eq.(\ref{eq:HVOpt2}) play the same role as Eq.(\ref{eq:HVDemand}) and Eq.(\ref{eq:HVFlow}), respectively, and constraints Eq.(\ref{eq:HVOpt3}) ensures that variables are nonnegative. Notice that, for a given expected earning rate $v$, Problem~\ref{prb:HV1} is a linear programming (LP) problem.  

\begin{problem}[Auxiliary follower problem with given earning rate and HV availability constraints] \label{prb:HV2}
Given a fully connected transportation network $G=(\mathcal{N},\mathcal{E})$ with travel times $\bm{\tau}$, compensation $\bm{c}$, remaining demand $\bm{\bar{q}}$, costs associated with operating vehicles  per unit distance $\sigma$, and expected earning rate $v$, the auxiliary follower problem considering the constraints on the number of HVs reads as follows: 
\begin{subequations}
\begin{align}
\max_{\bm{x}^h,\bm{y}^h} &\quad \tilde{J}^{\rm{SS}}_{L,v} = \rm{Eq.(\ref{eq:HVOpt0})}\notag \\
\rm{s.t.}&\quad \rm{Eq.(\ref{eq:HVOpt1})-\rm{Eq.(\ref{eq:HVOpt3})}}\notag \\
&\quad \frac{1}{v}\Big(\sum_{(i,j)\in \mathcal{E}}c_{ij}x_{ij}^h- \sigma\sum_{(i,j)\in \mathcal{E}}\delta_{ij}(x_{ij}^h+y_{ij}^h)\Big) = N^h \label{eq:HVOptNum}
\end{align}  
\end{subequations}
\end{problem}

Problem~\ref{prb:HV2} is also a LP. The only difference between Problem \ref{prb:HV1} and Problem \ref{prb:HV2} is the introduction of the constraints Eq.(\ref{eq:HVOptNum}). Notice that the earnings of HVs result only from the compensations. Hence, Eq.(\ref{eq:HVOptNum}) can be interpreted as the number of HVs in the system (see Theorem~\ref{thm:HVEquilibrium} for the formal proof of this equivalence). 

We aim to find an appropriate value of $v$ such that the optimal primal and dual solutions to Problem~\ref{prb:HV1} and Problem~\ref{prb:HV2} can be used to construct an equilibrium for the HVs. Notice that Problem~\ref{prb:HV1} and Problem~\ref{prb:HV2} do not include the number of waiting HVs $\bm{u}$, nor consider the constraint on the total number of vehicles in the system Eq.(\ref{eq:HVUB}) or the complementary condition Eq.(\ref{eq:HVWaiting}). We will determine $v$ such that the number of waiting HVs $\bm{u}$ can be derived using the dual variables of Problem~\ref{prb:HV1} and Problem~\ref{prb:HV2}, such that the constraint Eq.(\ref{eq:HVWaiting}) and Eq.(\ref{eq:HVUB}) hold.  Theorem~\ref{thm:HVEquilibrium} represents the main result of this section, 

\begin{theorem}[Equilibrium of HVs]\label{thm:HVEquilibrium}
Denote $(\bm{x}^{h*},\bm{y}^{h*})$ as the optimal solution to Problem~\ref{prb:HV1} with parameter $v>0$. Denote $\bm{\pi}^{*}=\{\pi^*_{ij}\}_{(i,j)\in\mathcal{E}}$ and $\bm{\phi}^{*}=\{\phi^*_{i}\}_{i\in\mathcal{N}}$ as the optimal dual variables corresponding to constraints Eq.(\ref{eq:HVOpt1}) and Eq.(\ref{eq:HVOpt2}), respectively. Set the waiting number of HVs $\bm{u}^*=\{u^*_{ij}\}_{(i,j)\in\mathcal{E}}$ as $u^*_{ij}=\pi^*_{ij}\bar{q}_{ij}/v,~(i,j)\in\mathcal{E}$. 
Define function $g(v)$ as
\begin{align}
g(v) = \sum_{(i,j)\in\mathcal{E}} (x^*_{ij}+y^*_{ij})\tau_{ij} +\sum_{(i,j)\in\mathcal{E}}u^*_{ij} . \label{eq:HVnumber}
\end{align}
Then the following statements hold: 
\begin{enumerate}
\item[(i)~] $g(v)$ is strictly decreasing if $v>0$, i.e., $g(\bar{v})< g(\hat{v})$, if $0<\bar{v}<\hat{v}$. The left-hand limit $g(v^-)$ and the right-hand limit $g(v^+)$ exists for every $v$, satisfying $g(v^+)\leq g(v^-)$,  and the discontinuities are countable. \item[(ii)~] Under Assumption~\ref{asm:trail}, there exists a $v_0>0$ such that $g(v_0^+)\leq N^h \leq g(v_0^-)$. 
\item[(iii)~] Consider a $v_0$ such that $g(v_0^+)\leq N^h \leq g(v_0^-)$, then $v = \max\{v_0,\mu\}$ is the earning rate at  equilibrium. If $v_0<\mu$ or $g(v^+) = g(v^-)$, $(\bm{x}^{h*},\bm{y}^{h*},\bm{u}^*)$ is an equilibrium of HVs. Otherwise, let 
$(\bm{\tilde{x}}^{h*},\bm{\tilde{y}}^{h*})$  be
the optimal solution to Problem~\ref{prb:HV2} with parameter $v$, then $(\bm{\tilde{x}}^{h*},\bm{\tilde{y}}^{h*})$ is an optimal solution to Problem~\ref{prb:HV1}, and there exists $\bm{\tilde{u}}^*$ such that  $(\bm{\tilde{x}}^{h*},\bm{\tilde{y}}^{h*},\bm{\tilde{u}}^*)$ is an equilibrium of HVs. 
\end{enumerate}
\end{theorem}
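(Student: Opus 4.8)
The plan is to exploit linear-programming duality for Problem~\ref{prb:HV1} together with a flow decomposition of its optimal solution into trails. First I would write down the dual of Problem~\ref{prb:HV1}: assigning multipliers $\pi_{ij}\ge 0$ to the demand caps Eq.(\ref{eq:HVOpt1}) and free multipliers $\phi_i$ to the conservation constraints Eq.(\ref{eq:HVOpt2}), the dual constraints read $\pi_{ij}+\phi_j-\phi_i \ge c_{ij}-\sigma\delta_{ij}-v\tau_{ij}$ for passenger edges and $\phi_j-\phi_i\ge -\sigma\delta_{ij}-v\tau_{ij}$ for rebalancing edges, with dual objective $\sum_{(i,j)}\bar q_{ij}\pi_{ij}$. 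Writing $E(v)=\sum c_{ij}x^*_{ij}-\sigma\sum\delta_{ij}(x^*_{ij}+y^*_{ij})$ for the net earning and $D(v)=\sum\tau_{ij}(x^*_{ij}+y^*_{ij})$ for the total driving time, and using $u^*_{ij}=\pi^*_{ij}\bar q_{ij}/v$, strong duality gives $\sum_{(i,j)}u^*_{ij}=\sum_{(i,j)}\bar q_{ij}\pi^*_{ij}/v=J^{\rm SS}_{L,v}/v=(E(v)-vD(v))/v$, whence the key identity $g(v)=D(v)+\sum u^*_{ij}=E(v)/v$. This also makes precise the text's claim that Eq.(\ref{eq:HVOptNum}), namely $E/v=N^h$, pins the fleet size to $N^h$.

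For statement (i) I would use that $J^{\rm SS}_{L,v}$ is a pointwise maximum of finitely many affine functions of $v$ — one per vertex of the $v$-independent feasible polyhedron, the maximum being attained at a vertex since for $v>0$ every rebalancing cycle has strictly negative value. Hence $J^{\rm SS}_{L,v}$ is convex and piecewise-linear with finitely many breakpoints; on each linear piece the optimal vertex is fixed, so $D$ and $E$ are constant and $g(v)=E/v$ is strictly decreasing there, while at each breakpoint $D$ jumps down and therefore so do $E$ and $g=E/v$. This gives strict monotonicity on $\{g>0\}$, existence of one-sided limits with $g(v^+)\le g(v^-)$, and (monotonicity) at most countably many discontinuities. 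For statement (ii) I would compute the two limiting regimes: as $v\to\infty$ all edge values become negative, so $x^*=y^*=0$, $E=0$ and $g=0\le N^h$; as $v\to 0^+$, Assumption~\ref{asm:trail} supplies a trail with $U_l\ge\mu T_l>0$, giving $E(0)>0$ and $g(v)=E(v)/v\to+\infty>N^h$. The standard intermediate-value argument for monotone functions then produces $v_0>0$ with $g(v_0^+)\le N^h\le g(v_0^-)$.

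For statement (iii), set $v=\max\{v_0,\mu\}$. The complementary condition Eq.(\ref{eq:HVWaiting}) follows from complementary slackness: $u^*_{ij}>0\Rightarrow\pi^*_{ij}>0\Rightarrow x^*_{ij}=\bar q_{ij}$. For condition (b) I would decompose the circulation $(\bm{x}^{h*},\bm{y}^{h*})$ into trails; along any used trail each edge's dual constraint is tight by complementary slackness, and summing $\phi_i-\phi_j$ around the cycle telescopes to zero, giving $U_l=vT_l$, i.e.\ earning rate exactly $v$, whereas dual feasibility yields $U_{l'}\le vT_{l'}$ for an arbitrary trail, so $v$ is the common maximal rate and $v\ge\mu$ by construction. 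It remains to check Eq.(\ref{eq:HVUB}) and condition (c). If $v_0<\mu$ then $v=\mu$, so $g(\mu)\le g(v_0^+)\le N^h$ gives Eq.(\ref{eq:HVUB}) while condition (c) holds through $U_l/T_l=\mu$; if $g$ is continuous at $v$ then $g(v)=N^h$ and Eq.(\ref{eq:HVUB}) is tight.

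The hard part is the remaining case, $v=v_0>\mu$ with a jump of $g$ at $v_0$, where the Problem~\ref{prb:HV1} optimum need not have exactly $N^h$ vehicles. I would argue that at the breakpoint the set of LP optima is a face whose two extreme vertices realize $E=E(v_0^-)$ and $E=E(v_0^+)$; since $vN^h\in[E(v_0^+),E(v_0^-)]$, a convex combination of these optima is again optimal and satisfies $E=vN^h$, i.e.\ Eq.(\ref{eq:HVOptNum}) — precisely a feasible optimal solution $(\bm{\tilde{x}}^{h*},\bm{\tilde{y}}^{h*})$ of Problem~\ref{prb:HV2} that also solves Problem~\ref{prb:HV1}. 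Reconstructing $\tilde u^*_{ij}=\pi^*_{ij}\bar q_{ij}/v$ from the dual optimum of Problem~\ref{prb:HV1} at $v$ preserves complementary slackness (hence Eq.(\ref{eq:HVWaiting})) and yields $g=N^h$, so Eq.(\ref{eq:HVUB}) holds with equality and condition (c) is met. Verifying this face/convex-combination step — that $E$ really sweeps $[E(v_0^+),E(v_0^-)]$ over the optimal face and that the reconstructed $\tilde u^*$ stays consistent with the chosen primal optimum — is the main obstacle; everything else reduces to duality and monotonicity bookkeeping.
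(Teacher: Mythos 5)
Your proposal is correct, and for parts (ii) and (iii) it follows essentially the paper's own argument: the strong-duality identity $g(v)=E(v)/v$ (the paper's Eq.(\ref{eq:HVnumber1})), the two limiting regimes for (ii) ($g=0$ for large $v$, $g\to+\infty$ as $v\to 0^+$ via the trail supplied by Assumption~\ref{asm:trail}) combined with monotone intermediate values, and for (iii) complementary slackness to get Eq.(\ref{eq:HVWaiting}), telescoping of tight dual constraints around used trails to get $U_l=vT_l$, dual feasibility to get $U_{l'}\le vT_{l'}$, and the convex-combination-at-the-jump argument producing a Problem~\ref{prb:HV2} optimizer that also solves Problem~\ref{prb:HV1}. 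Where you genuinely diverge is part (i). The paper proves monotonicity by comparing optimal solutions at two parameter values $\bar v<\hat v$: it writes the two optimality inequalities, multiplies them by $\bar v$ and $\hat v$ respectively, sums them to conclude $(\bar v-\hat v)\bigl(E(\bar v)-E(\hat v)\bigr)\le 0$, and then invokes the theory of monotone functions (Rudin) for one-sided limits and countability of discontinuities. You instead use the parametric-LP structure: $J^{\rm SS}_{L,v}$ is a pointwise maximum of finitely many affine functions of $v$ (one per vertex, the optimum being attained at vertices because rebalancing circulations have strictly negative value for $v>0$), hence piecewise linear with finitely many breakpoints; $E$ and $D$ are constant on each piece, $g=E/v$ is strictly decreasing there, and $E$ (hence $g$) jumps down at breakpoints. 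This buys you two things: a stronger conclusion (finitely many rather than countably many discontinuities), and it makes the delicate step of part (iii) --- that at a jump the optimal face contains points attaining $E(v_0^-)$ and $E(v_0^+)$, so $E$ sweeps the whole interval $[E(v_0^+),E(v_0^-)]$ and Eq.(\ref{eq:HVOptNum}) can be met by a convex combination --- nearly automatic, whereas the paper's identification of ``solutions associated with $v_0^\pm$'' is left informal. A further small point in your favor: you correctly restrict strict monotonicity to $\{g>0\}$; the paper's chain $g(\bar v)\ge \hat v g(\hat v)/\bar v> g(\hat v)$ silently requires $g(\hat v)>0$.
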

\begin{proof}
See \ref{proof:HVEquilibrium}. 
\end{proof}
We make the following remarks for Theorem \ref{thm:HVEquilibrium}. First, $g(v)$ characterizes the number of HVs in the system. Statement (i) demonstrates that $g(v)$ is a strictly decreasing function in $v$. This statement is intuitive, since
if more HVs participate in the system, each driver is expected to make less earnings due to the competition for requests.  Notice that $g(v)$ may not be 
continuous for certain values of $v$. This is because Problem~\ref{prb:HV1} may have multiple optimal solutions, and each solution may correspond to a specific number of HVs. Nevertheless, for any optimal solution to Problem~\ref{prb:HV1} with parameter $v$, the number of HVs are within the range $[g(v^+),g(v^-)]$. 
Second, statement (ii) exploits  monotonicity to find a proper $v_0$ such that the imposed restriction on the number of HVs can be satisfied with this $v_0$, i.e., $g(v_0^+)\leq N \leq g(v_0^-)$. Such a $v_0$ can be found efficiently using a bisection search algorithm.  
 Third, statement iii) shows that some solutions for Problem~\ref{prb:HV1} can be used to construct an equilibrium
 of HVs in the sense of Defintion~\ref{def:HVEquilibrium}.
 We consider two cases to account for the constraint of the imposed upper bound of HVs, i.e.,  Eq.(\ref{eq:HVUB}). In the first case, we show that $(\bm{x}^{h*},\bm{y}^{h*},\bm{u}^*)$ derived from the optimal primal and dual solutions to Problem~\ref{prb:HV1} is an equilibrium satisfying Eq.(\ref{eq:HVUB}). In the second case, there are multiple solutions to Problem~\ref{prb:HV1}, and we aim to find the solution $(\bm{\tilde{x}}^{h*},\bm{\tilde{y}}^{h*})$ such that Eq.(\ref{eq:HVUB})  holds with an equality, i.e., $g(v)=N^h$. To this end, we find such solutions as the  optimal solutions to Problem~\ref{prb:HV2} and use them to construct an equilibrium of HVs in the sense of Definition~\ref{def:HVEquilibrium}, where $\bm{\tilde{u}}^*$ can be computed with the dual variables associated with constraint Eq.(\ref{eq:HVOpt2}) at solution $(\bm{\tilde{x}}^{h*},\bm{\tilde{y}}^{h*})$.

\subsection{Operator's optimization model}\label{sec:ssOperator}
In this subsection, we present the leader model, i.e., the optimization problem solved by the MoD operator. 
The operator determines the prices for passengers $\bm{p} = \{p_{ij}\}_{(i,j)\in\mathcal{E}}$, compensations for HVs $\bm{c}=\{c_{ij}\}_{(i,j)\in\mathcal{E}}$, passenger flow for AVs $\bm{x}^a=\{x^a_{ij}\}_{(i,j)\in\mathcal{E}}$ and rebalancing flow for AVs $\bm{y}^a=\{y^a_{ij}\}_{(i,j)\in\mathcal{E}}$ to optimize system-level performance, while accounting for  the response of HVs.  

Passengers send requests with OD pair $(i,j)\in \mathcal{E}$ if their willingness to pay is at least the price $p_{ij}$. We characterize the induced demand with a given price $p_{ij}$ as $q_{ij} = \eta_{ij}(1-F_{ij}(p_{ij}))$, where $F_{ij}(\cdot)$ represents the cumulative distribution function (CDF) of the willingness to pay, for which we make the following assumption:
\begin{assumption}[CDF of the willingness to pay]\label{asm:CDF}
For each OD pair $(i,j)\in\mathcal{E}$, the CDF of the willingness to pay $F_{ij}$ is continuous and strictly increasing within its support. 
\end{assumption}

By Assumption~\ref{asm:CDF}, we can obtain an inverse demand function $p_{ij} = H_{ij}(q_{ij}),~q_{ij}\in [0, \eta_{ij}]$, where $ H_{ij}=F_{ij}^{-1}(1-q_{ij}/\eta_{ij})$ is a strictly decreasing function. 
We further make the following assumption for $ H_{ij}$, which holds true for many  commonly used CDF of the willingness to pay (e.g., uniform distribution, exponential distribution, log-normal distribution, etc.):
\begin{assumption}[Concave revenue function]\label{asm:inverseCDF}
For each OD pair $(i,j)\in \mathcal{E}$, the inverse demand function $H_{ij}(q),~q\in [0, \eta_{ij}]$ satisfies the condition that the revenue $H_{ij}(q)q$ is a concave function in $q$. 
\end{assumption}

The operator aims to optimize its utility, accounting for the reaction of HVs. Notice that although HVs can be self-interested, they generally have no inclination to undermine the utility of the operator. Hence, we assume that HVs are cooperative in response to the decision of the operator, as stated in Assumption~\ref{asm:HV}. 
\begin{assumption}[Cooperative HVs]
\label{asm:HV}
HVs are cooperative with the fleet operator. In other words, if there are multiple equilibria resulting in the same costs for the HVs, they choose the action  that maximizes the utility of the operator. 
\end{assumption}

We are now in a position to state the optimization problem for the fleet operator, where we consider the system-level earnings objective--this choice is made to model scenarios that account for not only  the profit of the operator, but also the welfare of drivers and the utility of passengers  (e.g., due to a direct participation of a public entity) and to align the theoretical results with the case study, with the understanding that the theory herein can be extended to also account for a profit objective (i.e., the operator-level earnings). 

\begin{problem}[Operator's optimization problem]\label{prb:operatorOpt}
Given a fully connected transportation network $G=(\mathcal{N},\mathcal{E})$ with travel times $\bm{\tau}=\{\tau_{ij}\}_{(i,j)\in\mathcal{E}}$, inverse demand functions $H_{ij},~(i,j)\in\mathcal{E}$, upper bounds for the number of vehicles $N^m,~m\in\mathcal{M}$, costs associated with operating vehicles  per unit distance $\sigma$, and the VOT of HV drivers $\mu$,  the operator determines its pricing and operational strategies for AVs, namely $(\bm{q},\bm{c},\bm{x}^a,\bm{y}^a)$, by solving the following optimization problem (\ref{eq:operatorOpt}): 
\begin{subequations}
\begin{align}
\max_{\bm{q},\bm{c},\bm{x}^a,\bm{y}^a,\bm{x}^h}&\quad J^{\rm{SS}}_{L} =
\sum_{(i,j)\in \mathcal{E}}H_{ij}(q_{ij})(x^h_{ij} +x^a_{ij})
-\mu\Big(\sum_{(i,j)\in\mathcal{E}} (x_{ij}^h+y_{ij}^h)\tau_{ij} + \sum_{(i,j)\in\mathcal{E}}u_{ij}\Big) \notag \\
& \quad \quad \quad \quad \quad \quad 
- 
\sigma\sum_{m\in\mathcal{M}}\sum_{(i,j)\in \mathcal{E}}\delta_{ij}(x^m_{ij}+y^m_{ij})
\label{eq:AVObj}\\
\rm{s.t.}&\quad x^a_{ij} \leq q_{ij}, ~(i,j)\in\mathcal{E} \label{eq:AVDemand} \\
&\quad \sum_{j\in\mathcal{N}\backslash\{j\}}x_{ij}^a + \sum_{j \in \mathcal{N}\backslash\{i\}}y_{ij}^a = \sum_{j\in\mathcal{N}\backslash\{j\}}x_{ji}^a + \sum_{j \in \mathcal{N}\backslash\{i\}}y_{ji}^a, ~i  \in \mathcal{N} \label{eq:AVFlow} \\
& \quad\sum_{(i,j)\in \mathcal{E}}\tau_{ij}(x^a_{ij}+y^a_{ij}) \leq N^a\label{eq:AVSize} \\
&\quad \bm{x}^h,\bm{y}^h,\bm{u} \in \Phi(\bm{c},\bm{q}-\bm{x}^a) \label{eq:HVCouple} \\
&\quad 0\leq q_{ij} \leq \eta_{ij}, ~(i,j)\in\mathcal{E} \label{eq:demandBounds} \\
&\quad x_{ij}^a\geq 0,~y^a_{ij}\geq 0, ~(i,j)\in\mathcal{E}.\label{eq:AVBounds}
\end{align} \label{eq:operatorOpt}
\end{subequations}
\end{problem}
In Problem \ref{prb:operatorOpt}, the objective function Eq.(\ref{eq:AVObj}) defines the system-level earnings, defined as the difference between the earnings from the passengers (the first term) and the operational costs, including the cost of time for the HVs (the second term) and the operational costs of both fleets (the third term). 
Constraints Eq.(\ref{eq:AVDemand}) ensure that the demand served by AVs does not exceed the total demand. 
Constraints Eq.(\ref{eq:AVFlow}) represent the conservation of AVs in steady state, meaning that the incoming AV flow to a station equals  the outgoing AV flow. The left hand side (LHS) represents the incoming flow to station $i$, and the right hand side (RHS) represents the outgoing flow from station $i$. Constraint Eq.(\ref{eq:AVSize}) sets an upper bound on the number of AVs deployed in the system. Constraints Eq.(\ref{eq:HVCouple}) calculate the passenger flow of HVs $\bm{x}^h=\{x_{ij}^h\}_{(i,j)\in\mathcal{E}}$, where function $\Phi$ represents the follower model that outputs the equilibria of HVs with respect to compensations $\bm{c}$ and remaining demand $\bm{q}-\bm{x}^a$ as per Section \ref{sec:ssHV}. 
Notice that by Assumption~\ref{asm:HV}, function $\Phi$ will output the equilibrium that favors the 
Notice that constraints Eq.(\ref{eq:HVCouple}) also ensure that $\bm{x}^h$ satisfies the conservation of passengers and HV flows.
Constraints Eq.(\ref{eq:demandBounds}) set a bound on the passenger demand. Constraints Eq.(\ref{eq:AVBounds}) ensure that all variables are nonnegative.

\subsection{Equilibrium of the Mixed Fleet}
\label{sec:ssMixed}

In this subsection, we characterize the equilibrium for the mixed fleet. 
Following the notion of Stackelberg games, we formally define the equilibrium for the MoD system  
with mixed fleets as follows, 
\begin{definition}[Equilibrium for the MoD system with mixed fleets]
\label{def:mixed}
An equilibrium for the MoD system with a mixed fleet of AVs and HVs is a tuple $(\bm{p}^*,\bm{c}^*,\bm{x}^{a*},\bm{y}^{a*},\bm{x}^{h*},\bm{y}^{h*},\bm{u}^*)$, satisfying:
\begin{enumerate}
\item[(a)~] tuple $(\bm{x}^{h*},\bm{y}^{h*},\bm{u}^*)$ is an equilibrium for HVs, and
\item[(b)~] $(\bm{q}^*,\bm{c}^*,\bm{x}^{a*},\bm{y}^{a*})$ is an optimal solution to the optimization problem for the fleet operator Problem~\ref{prb:operatorOpt}, where  $q^*_{ij}=\eta_{ij}(1-F_{ij}(p^*_{ij})),~(i,j)\in\mathcal{E}$.
\end{enumerate}
\end{definition}

For an equilibrium as defined in Definition \ref{def:mixed}, we observe that the number of waiting HVs $u_{ij}^{*}=0$, meaning that the compensations are chosen such that no HVs would be waiting for any OD  pair $(i,j)$, as otherwise the operator could always decrease the compensation $c_{ij}$ such that the number of waiting HVs $u_{ij}^{*}=0$ without affecting the number of passengers served for this OD pair (see Eq.(\ref{eq:HVWaiting})). The key challenge to compute an equilibrium as per Definition \ref{def:mixed}, is that Problem ~\ref{prb:operatorOpt} is a non-convex optimization problem, and thus difficult to solve. In the following we introduce a surrogate optimization problem,  namely Problem \ref{prb:global}, which is convex and allows one to tractably compute an equilibrium.
 
\begin{problem}[Surrogate optimization problem] \label{prb:global}
Given a fully connected transportation network $G=(\mathcal{N},\mathcal{E})$ with travel times $\bm{\tau}$, inverse demand functions $H_{ij},~(i,j)\in\mathcal{E}$, upper bounds on the number of vehicles $N^m,~m\in\mathcal{M}$, costs associated with operating vehicles  per unit distance $\sigma$, the VOT of HV drivers $\mu$, the surrogate problem entails optimizing  the  pricing and operational strategies of both AVs and HVs, namely $(\bm{q},\bm{x}^a,\bm{y}^a,\bm{x}^h,\bm{y}^h)$ by solving:
\begin{subequations}
\begin{align}
\max_{\bm{q},\bm{x}^a,\bm{y}^a,\bm{x}^h,\bm{y}^h}&\quad \tilde{J}^{\rm{SS}}_L = \sum_{(i,j)\in \mathcal{E}}q_{ij}H_{ij}(q_{ij}) - \sigma\sum_{(i,j)\in \mathcal{E}}\sum_{m\in\mathcal{M}}\delta_{ij}(x^m_{ij}+y^m_{ij})-\mu\sum_{(i,j)\in \mathcal{E}}\tau_{ij}(x^h_{ij}+y^h_{ij})\label{eq:AVObj2}\\
\rm{s.t.}&\quad x^a_{ij} + x^h_{ij} = q_{ij}, ~(i,j)\in\mathcal{E} \label{eq:AVDemand2} \\
&\quad \sum_{j\in\mathcal{N}\backslash\{j\}}x_{ij}^m + \sum_{j \in \mathcal{N}\backslash\{i\}}y_{ij}^m = \sum_{j\in\mathcal{N}\backslash\{j\}}x_{ji}^m + \sum_{j \in \mathcal{N}\backslash\{i\}}y_{ji}^m, ~i  \in \mathcal{N},~m\in\mathcal{M} \label{eq:AVFlow2} \\
& \quad\sum_{(i,j)\in \mathcal{E}}\tau_{ij}(x^m_{ij}+y^m_{ij}) \leq N^m,~m\in\mathcal{M}\label{eq:AVSize2} \\
&\quad 0\leq q_{ij} \leq \eta_{ij}, ~(i,j)\in\mathcal{E} \label{eq:demandBounds2} \\
&\quad x_{ij}^m\geq 0,~y^m_{ij}\geq 0, ~(i,j)\in\mathcal{E},~m\in\mathcal{M}\label{eq:AVBounds2}
\end{align} \label{eq:globalOpt}
\end{subequations}
\end{problem}
Note that under Assumption~\ref{asm:inverseCDF}, Problem \ref{prb:global} is a convex optimization problem. We next show that one can choose compensations $\bm{c}$, which along with the optimal solution to Problem \ref{prb:global}, provide an equilibrium in the sense of Definition \ref{def:mixed}.
\begin{theorem}[Computation of mixed-fleet equilibrium]\label{thm:mixed}
Suppose Assumption~\ref{asm:inverseCDF} holds. Denote $(\bm{q}^*,\bm{x}^{a*},\bm{y}^{a*},\bm{x}^{h*},\bm{y}^{h*})$ as the the optimal solution to Problem~\ref{prb:global}. Let $u^{*}_{ij}=0$ and $p^*_{ij}=H_{ij}(p_{ij}),~(i,j)\in\mathcal{E}$. Then, there exist compensations $\bm{c}^*$ such that $0\leq c^*_{ij}\leq p^*_{ij}$ and  $(\bm{p}^*,\bm{c}^*,\bm{x}^{a*},\bm{y}^{a*},\bm{x}^{h*},\bm{y}^{h*})$ is an equilibrium for the MoD system with mixed fleets. 
  
\end{theorem}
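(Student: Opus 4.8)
The plan is to treat the convex surrogate Problem~\ref{prb:global} as a centralized relaxation of the operator's Stackelberg Problem~\ref{prb:operatorOpt} and to \emph{decentralize} its optimum through a dual-based choice of compensations. The starting observation is that the two objectives agree term-by-term on any point with $x^a_{ij}+x^h_{ij}=q_{ij}$ and $u_{ij}=0$: the revenue $\sum H_{ij}(q_{ij})(x^a_{ij}+x^h_{ij})$ in Eq.(\ref{eq:AVObj}) collapses to $\sum q_{ij}H_{ij}(q_{ij})$ in Eq.(\ref{eq:AVObj2}), and the $\mu$- and $\sigma$-terms are identical. First I would prove the relaxation bound $\tilde J^{\rm SS*}_L\ge J^{\rm SS*}_L$ by mapping any feasible point of Problem~\ref{prb:operatorOpt} (operator variables together with their HV response) to a feasible point of Problem~\ref{prb:global}: keep all flows but replace the induced demand by the \emph{served} demand $\tilde q_{ij}=x^a_{ij}+x^h_{ij}\le q_{ij}$. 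Feasibility of Eq.(\ref{eq:AVDemand2})--Eq.(\ref{eq:AVBounds2}) is inherited, using Eq.(\ref{eq:HVUB}) and $u\ge0$ for the HV fleet bound, and the objective changes by $\sum_{(i,j)}\tilde q_{ij}\big(H_{ij}(\tilde q_{ij})-H_{ij}(q_{ij})\big)+\mu\sum_{(i,j)}u_{ij}\ge0$ because $H_{ij}$ is decreasing and $u\ge0$.

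The heart of the argument is the reverse direction, i.e. realizing the surrogate optimum inside Problem~\ref{prb:operatorOpt}. Let $(\bm q^*,\bm x^{a*},\bm y^{a*},\bm x^{h*},\bm y^{h*})$ solve Problem~\ref{prb:global} with KKT multipliers $\rho_{ij}$, $\phi^m_i$, and $\gamma^m\ge0$ for Eq.(\ref{eq:AVDemand2}), Eq.(\ref{eq:AVFlow2}), and Eq.(\ref{eq:AVSize2}). I set the earning rate $v^*=\mu+\gamma^h$ and define $c^*_{ij}=-\rho_{ij}$ on every edge with $x^{h*}_{ij}>0$ and $c^*_{ij}=0$ elsewhere. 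Stationarity in $x^h_{ij}$ then reads $-\rho_{ij}=\sigma\delta_{ij}+v^*\tau_{ij}+(\phi^h_i-\phi^h_j)$ on used edges, so the reduced cost of each HV passenger and rebalancing variable in the follower Problem~\ref{prb:HV1} evaluated at $v^*$ is zero on used edges and nonpositive elsewhere; hence $(\bm x^{h*},\bm y^{h*})$ is optimal for Problem~\ref{prb:HV1} with $\bar q=\bm x^{h*}$. The box constraint is checked from the same multipliers: stationarity in $y^h_{ij}$ gives $\phi^h_i-\phi^h_j\ge-\sigma\delta_{ij}-v^*\tau_{ij}$, whence $c^*_{ij}\ge0$; and stationarity in $q_{ij}$ identifies $-\rho_{ij}$ with the marginal revenue $\tfrac{d}{dq}\big[qH_{ij}(q)\big]\big|_{q^*_{ij}}$, which is at most the average revenue $H_{ij}(q^*_{ij})=p^*_{ij}$ because $qH_{ij}(q)$ is concave (Assumption~\ref{asm:inverseCDF}) and vanishes at $q=0$; thus $0\le c^*_{ij}\le p^*_{ij}$.

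It then remains to confirm that $(\bm x^{h*},\bm y^{h*},\bm 0)$ is an HV equilibrium in the sense of Definition~\ref{def:HVEquilibrium} and to close the sandwich. With $u^*=0$ the complementary condition Eq.(\ref{eq:HVWaiting}) and the count bound Eq.(\ref{eq:HVUB}) hold, the latter tightly precisely when $\gamma^h>0$ by complementary slackness on Eq.(\ref{eq:AVSize2}). Summing the zero-reduced-cost identities around any used trail, the potentials $\phi^h$ telescope to zero and the waiting times vanish, giving $U_l=v^*T_l$, while the nonpositive reduced costs give $U_{l'}\le v^*T_{l'}$ for unused trails; hence condition (b) holds with the common earning rate $v^*\ge\mu$, and condition (c) holds because either $\gamma^h=0$ (so $v^*=\mu$) or $\gamma^h>0$ (so Eq.(\ref{eq:HVUB}) is tight). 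Therefore $(\bm x^{h*},\bm y^{h*},\bm 0)\in\Phi(\bm c^*,\bm q^*-\bm x^{a*})$, which is Definition~\ref{def:mixed}(a). Finally, the tuple $(\bm q^*,\bm c^*,\bm x^{a*},\bm y^{a*},\bm x^{h*},\bm y^{h*})$ is feasible for Problem~\ref{prb:operatorOpt}, and since $x^{a*}+x^{h*}=q^*$ and $u^*=0$ its operator objective equals $\tilde J^{\rm SS*}_L$; combined with the bound $\tilde J^{\rm SS*}_L\ge J^{\rm SS*}_L$ from the first step, this feasible point must be operator-optimal, giving Definition~\ref{def:mixed}(b), with Assumption~\ref{asm:HV} ensuring the selected equilibrium is consistent.

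I expect the dual-based construction to be the main obstacle. The difficulty is to make a single formula for $\bm c^*$ do three things at once: render $(\bm x^{h*},\bm y^{h*})$ follower-optimal at the induced earning rate, force $u^*=0$ (equivalently, keep the follower demand-constraint multipliers at zero on used edges), and stay inside the box $0\le c^*_{ij}\le p^*_{ij}$. This requires careful bookkeeping of the KKT sign conventions across the coupled $x^h$-, $y^h$-, $x^a$- and $q$-stationarity conditions, and separate treatment of the boundary cases $q^*_{ij}\in\{0,\eta_{ij}\}$, where the demand-bound multipliers of Eq.(\ref{eq:demandBounds2}) enter the marginal-revenue identity used for the upper bound on $c^*$.
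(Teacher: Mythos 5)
Your proposal is correct and takes essentially the same route as the paper's own proof: you extract the compensations from the KKT multipliers of the demand constraint Eq.(\ref{eq:AVDemand2}) in the surrogate Problem~\ref{prb:global}, verify the HV equilibrium of Definition~\ref{def:HVEquilibrium} by telescoping reduced costs along cyclic trails with earning rate $\mu+\gamma^h$, and establish operator optimality by sandwiching $J^{\rm{SS}*}_{L}$ between the value of the constructed feasible point and the surrogate optimum $\tilde{J}^{\rm{SS}*}_{L}$. The only deviations are cosmetic: you set $c^*_{ij}=0$ on unused edges where the paper uses $\max\{\pi^*_{ij},0\}$, you bound $c^*_{ij}\leq p^*_{ij}$ via concavity of $qH_{ij}(q)$ with $R(0)=0$ rather than via $H'_{ij}<0$ directly, and you prove the relaxation bound by a feasible-point mapping rather than the paper's chain of intermediate problems $P_1,P_2,P_3$ -- none of which changes the substance of the argument.
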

\begin{proof}
See \ref{proof:mixed}. 
\end{proof}
While the full proof of this theorem is provided in the Appendix, here we highlight that the compensations $\bm{c}^*$ can be readily obtained from the dual variables of constraint Eq.(\ref{eq:AVDemand2}). Notice that since Problem~\ref{prb:global} globally optimizes system-level earnings, one can infer from Theorem~\ref{thm:mixed} that the chosen compensations can incentivize HVs to behave in a system-optimal manner. 

To summarize, we have modeled a mixed fleet MoD systems operating AVs and HVs as a Stackelberg game, whereby the operator serves as the leader to  optimize the quality of service and the HVs serve as the self-interested followers to optimize their earnings. The equilibria of such a game are defined in Definition~\ref{def:mixed}, while Theorem~\ref{thm:mixed} provides a numerically efficient procedure to compute them.

\section{Real-time Control of MoD Systems With Mixed Fleets}
\label{sec:tv}

In this section, we present real-time control algorithms for MoD systems with mixed fleets. We assume that  prices $\bm{p}$ and compensations $\bm{c}$ have already been determined as the result of a planning phase  (either through the model introduced in Section \ref{sec:ss}, or by other means), and devise a Stackelberg game-based Model Predictive Control (MPC) approach to coordinate AVs in real-time, with the objective of optimizing system-level earnings (as defined in the previous section). Section~\ref{sec:tvMPC} presents the MPC formulation, and Section~\ref{sec:tvpickup} refines it to more explicitly consider the pickup process. 

\subsection{A Stackelberg game-based MPC approach} \label{sec:tvMPC}

We develop an MPC approach for the real-time control of  MoD systems with mixed fleets. The proposed MPC approach relies on an  embedded optimization model based on a Stackelberg game, where the leader is the MoD operator which optimizes  passenger and rebalancing routes for the AVs in order to improve system-level earnings, while the followers are the HVs which strategically respond to the operator's decisions. 
At each time step, we take as an input the predicted passenger demand and the vehicle states (i.e., the number of idle vehicles or the vehicles en-route for passenger pickup or delivery), and solve an optimization problem over a  receding time horizon to compute passenger and rebalancing routes for the AVs that maximize system-level earnings. As is typical for MPC-style algorithms, only the passenger and rebalancing routes of AVs at the current time step are executed, and the process is repeated. This mechanism has the advantage of taking future system performance into account when optimizing current actions. 

Unlike the steady-state formulation, we characterize the rebalancing strategy of HVs as  rebalancing probabilities $P_{ijt}$ for HVs to move from station $i\in\mathcal{N}$ to station $j\in\mathcal{N}$ at time $t\in\mathcal{T}$, where station $j$ can be the same or different from station $i$. We consider such probabilities to be determined \emph{externally} in the MPC formulation, i.e., they are predicted from experience (e.g., from historical data). 
The reasons for this modeling choice are as follows. 
First, we expect HVs not to have  real-time global information about passenger demand at other stations or positions of other vehicles, and thus they can only slowly adapt their rebalancing strategy after experiencing  difficulty in getting  passengers at some stations. 
As the MPC considers a relatively short time horizon (e.g., 20min), it appears reasonable to assume that HVs stick to the planned rebalancing strategies over such a horizon. 
Second, we expect the MPC algorithm, due to its repeated optimizations, to be robust to small errors in the prediction--robustness will be experimentally evaluated in Section~\ref{sec:simulation}. 
Third, from a computational standpoint, this choice  significantly reduces the number of decision variables, and thus makes the algorithm much more scalable. Essentially, with this modeling choice we simplify the representation of the HV response to the passenger flows of HVs, while still retaining its essential features.

Denote by $r_{it}^m$  the number of vacant vehicles of class $m\in\mathcal{M}$ located at station $i\in\mathcal{N}$ at time step $t$. Denote by $w_{ijt}$ the number of passengers with OD $(i,j)$ who are waiting to be matched with drivers at time step $t$. We establish the leader model and follower model in the MPC formulation as Problem~\ref{prb:rvleader} and Problem~\ref{prb:rvfollower}, respectively. 

\begin{problem}[Leader model in the MPC formulation]\label{prb:rvleader}
 Let $K$ be the length of the planning horizon, $\mathcal{T}_0=\{t_0,t_0+1,~t_0+K-1\}$ be the set of time steps in the planning horizon from time step $t_0$, and $\mathcal{T}_{-}$ be the set of time steps prior to time step $t_0$. Given a tuple $\{\bm{q}_t,\bm{p}_t,\bm{c}_t\}_{t\in \mathcal{T}_0}$ of predicted demand, compensations, and prices, as well as the initial conditions $(\bm{r}_0^a,\bm{r}_0^h,\{\bm{x}_t^h,\bm{y}_t^h,\bm{x}_t^a,\bm{y}_t^a\}_{t\in\mathcal{T}_{-}})$, the operator determines the passenger and rebalancing flow of AVs $\{\bm{x}_t^a,\bm{y}_t^a\}_{t\in\mathcal{T}_{0}}$ by solving the optimization problem:
\begin{subequations}
\begin{align}
\max_{\bm{x}^a, \bm{y}^a, \bm{w}, \bm{r}^a, \bm{r}^h} \quad & J_L^{\rm{TV}} = \sum_{m \in \mathcal{M}} \sum_{t\in\mathcal{T}_0}\sum_{(i,j) \in \mathcal{E}} p_{ijt}x_{ijt}^m- \sigma\sum_{m \in \mathcal{M}}\sum_{t\in\mathcal{T}_0}\sum_{(i,j) \in \mathcal{E}}\delta_{ij}x_{ijt}^m - \sigma\sum_{t \in\mathcal{T}_0}\sum_{(i,j)\in\mathcal{E}}{\delta_{ij}(y_{ijt}^a+P_{ijt}r_{i,t}^{h})} \notag \\
&\quad\quad\quad - \psi\sum_{t\in\mathcal{T}}\sum_{(i,j) \in \mathcal{E}}w_{ij}^t \label{eq:rvleader0}\\
\rm{s.t.} \quad  & r_{i,t+1}^{a}  = r_{it}^{a} + \sum_{j \in \mathcal{N}_i}y_{ji,t-\tau_{ji}}^{a} - \sum_{j \in \mathcal{N}_i}y_{ijt}^{a} + \sum_{(i,j) \in \mathcal{E}}x_{ji,t-\tau_{ji}}^{a}-  \sum_{(i,j) \in \mathcal{E}}x_{et}^{a},~i \in\mathcal{N},~t\in\mathcal{T}_0 \label{eq:rvleader1}\\
& r_{i,t+1}^{h} = \sum_{j\in\mathcal{N}}P_{jit}r_{j,t}^{h} + \sum_{(i,j) \in \mathcal{E}}x_{ji,t-\tau_{ji}}^{h}-  \sum_{(i,j) \in \mathcal{E} }x_{ijt}^{h},~i \in\mathcal{N},~t\in\mathcal{T}_0 \label{eq:rvleader2}  \\
& w_{ij,t+1} = \epsilon w_{ijt} + q_{ijt} - x_{ijt}^{a}-x_{ijt}^{h},~(i,j) \in \mathcal{E},~t\in\mathcal{T}_0 \label{eq:rvleader3} \\
& \bm{x}^h = \Phi\Big(\bm{w}+\bm{q}-\bm{x}^a,\bm{r}^h\Big) \label{eq:rvleader4} \\
&\bm{x}^a, \bm{y}^a, \bm{w}, \bm{r}^a, \bm{r}^h  \geq 0  \label{eq:rvleader5}
\end{align}\label{eq:tvleader}
\end{subequations}
\end{problem}
In Problem \ref{prb:rvleader}, the objective function (\ref{eq:rvleader0}) is  the system-level earnings, which is defined as the difference between the earnings of the operator from both HVs and AVs (the first term) and costs, including the operational cost for the passenger routes of HVs and AVs (the second term), the operational cost for the rebalancing routes of HVs and AVs (the third term), and the cost associated with passengers waiting to be matched with a driver (the fourth term), where $\psi$ represents a penalty for passenger waiting, which can be chosen, for example, as the VOT of passengers. Notice that Eq.(\ref{eq:rvleader0}) does not include the cost of time for HVs, since these vehicles have already participated in the system and would experience a fixed cost as the product between their VOT and their time in the system.  Constraints Eq.(\ref{eq:rvleader1}) and Eq.(\ref{eq:rvleader2}) represent the evolution of vehicle accumulation of AVs and HVs, respectively,  at each station.  Constraints Eq.(\ref{eq:rvleader3}) represent the evolution of waiting passengers with respect to each origin-destination pair, where $\epsilon\in [0,1]$ represents the probability that unserved passengers choose to remain in the system if currently unserved. Constraints Eq.(\ref{eq:rvleader4}) model the behaviors of HVs, where the specific form of function $\Phi(\cdot)$ is specified by the follower model as detailed below in Problem~\ref{prb:rvfollower}. Constraints Eq.(\ref{eq:rvleader5}) ensure that all variables are nonnegative. We next specify the follower model.
 \begin{problem}[Follower model in the MPC formulation] \label{prb:rvfollower}
For any $t\in\mathcal{T}_0$, given parameters $\bm{\phi}=\{\phi_i\}_{i\in\mathcal{N}}$ (modeling expected earnings), the expected earning rate $v$, compensations $\bm{c}_t$, remaining demand $\bar{\bm{q}}_t=\bm{w}_t + \bm{q}_t - \bm{x}_t^{a}$, vehicle availability $\bar{\bm{r}}_t=\bm{P}_t\bm{r}_t+\bm{A}^T\bm{x}_t^h$, where $\bm{A}^T$ represents the transpose of the adjacency matrix of graph $\mathcal{G}$, we derive the passenger flow of HVs at time step $t$ by solving the  optimization problem: 
\begin{subequations}
\begin{align}
\max_{\bm{x}^h} &\quad J_F^{\rm{TV}} = \sum_{(i,j)\in \mathcal{E}}(c_{ijt}-\phi_i+\phi_j-v\tau_{ij}-\sigma\delta_{ij})x_{ijt}^h \label{eq:rvfollower0}\\
\rm{s.t.}&\quad 0\leq x_{ijt}^h \leq \bar{q}_{ij},~(i,j)\in\mathcal{E} \label{eq:rvfollower1}\\
&\quad \sum_{j \in \mathcal{N}\backslash\{i\}}x_{ijt}^h \leq \bar{r}_{it} , ~i \in \mathcal{N}  \label{eq:rvfollower2} 
\end{align} \label{eq:rvfollower} 
\end{subequations}
\end{problem}

In Problem~\ref{prb:rvfollower}, the objective function Eq.(\ref{eq:rvfollower0}) 
is a proxy for the utility of HVs, where $\bm{\phi}$ and $v$ are parameters learned from data. Here $\phi_i$ is an estimate of expected earnings for vehicles located at station $i$, and as before $v$ represents system-wide average earning rate.    
Constraints Eq.(\ref{eq:rvfollower1}) ensure that the passenger flow is always nonnegative, and reflect the fact that HVs might decline a subset of assigned passenger requests. Constraints (\ref{eq:rvfollower2}) ensure that the resulting passenger flow does not violate the availability of HVs. 
Problem~\ref{prb:rvfollower} follows the intuition of the HV equilibrium analyzed in Section~\ref{sec:ssHV}.  More formally, we connect the definition of HV equilibrium as per Definition~\ref{def:HVEquilibrium} to Problem~\ref{prb:rvfollower} by establishing the following result.

\begin{proposition}[Connection of Problem~\ref{prb:rvfollower} to HV equilibrium] \label{prp:implication}
Given compensations $\bm{c}_t$ and  remaining demand $\bm{\bar{q}}_t$, there exists number of vehicles at stations $\bar{\bm{r}}_t=\{r_i\}_{i\in\mathcal{N}}$, expected earning starting from each station $\bm{\phi}$, and the system-wide expected earning rate $v$, such that the equilibrium of HVs defined in Definition~\ref{def:HVEquilibrium} is a solution to Problem~\ref{prb:rvfollower} with parameters $(\bm{c}_t,\bm{\bar{q}}_t,\bar{\bm{r}}_t,\bm{\phi},v)$. 
 \end{proposition}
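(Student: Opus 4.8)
The plan is to exhibit the required parameters $(\bar{\bm{r}}_t,\bm{\phi},v)$ by reading them off from the linear-programming characterization of the HV equilibrium established in Theorem~\ref{thm:HVEquilibrium}. Concretely, I would first invoke Theorem~\ref{thm:HVEquilibrium} to realize the given equilibrium $(\bm{x}^{h*},\bm{y}^{h*},\bm{u}^*)$ as the primal part of an optimal solution to Problem~\ref{prb:HV1} evaluated at $v$ equal to the equilibrium earning rate, and let $\bm{\pi}^*$ and $\bm{\phi}^*$ denote the optimal dual variables of the demand constraints Eq.(\ref{eq:HVOpt1}) and the flow-conservation constraints Eq.(\ref{eq:HVOpt2}), respectively. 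I would then set the parameters of Problem~\ref{prb:rvfollower} as $v:=$ the equilibrium earning rate, $\bm{\phi}:=\bm{\phi}^*$ (choosing the dual sign convention so that the reduced cost of $x_{ij}^{h}$ in Problem~\ref{prb:HV1} reads $c_{ij}-\phi_i+\phi_j-v\tau_{ij}-\sigma\delta_{ij}-\pi^*_{ij}$), and $\bar{r}_{it}:=\sum_{j}(x^{h*}_{ij}+y^{h*}_{ij})$, the total HV outflow from station $i$ at equilibrium.

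Feasibility would be immediate: constraint Eq.(\ref{eq:rvfollower1}) coincides with Eq.(\ref{eq:HVDemand}) together with nonnegativity, and the availability constraint Eq.(\ref{eq:rvfollower2}) holds because $\sum_j x^{h*}_{ij}\le \sum_j (x^{h*}_{ij}+y^{h*}_{ij})=\bar{r}_{it}$. The crux is optimality, which I would establish through the reduced-cost structure. The key identity is that the objective coefficient of $x^h_{ij}$ in Eq.(\ref{eq:rvfollower0}), namely $c_{ij}-\phi_i+\phi_j-v\tau_{ij}-\sigma\delta_{ij}$, equals the Problem~\ref{prb:HV1} reduced cost of $x^h_{ij}$ plus $\pi^*_{ij}$. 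I would then run the standard complementary-slackness case analysis on each edge: when $x^{h*}_{ij}=0$ (and $\bar{q}_{ij}>0$) slackness gives $\pi^*_{ij}=0$ and a nonpositive reduced cost, so the coefficient is $\le 0$ and leaving $x^h_{ij}=0$ is optimal; when $0<x^{h*}_{ij}<\bar{q}_{ij}$ the reduced cost vanishes and $\pi^*_{ij}=0$, so the coefficient is $0$ and $x^{h*}_{ij}$ lies among the optimizers; and when $x^{h*}_{ij}=\bar{q}_{ij}$ the reduced cost vanishes while $\pi^*_{ij}\ge 0$, so the coefficient is $\ge 0$ and pushing $x^h_{ij}$ to its upper bound $\bar{q}_{ij}=x^{h*}_{ij}$ is optimal. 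Edgewise optimality over the box $[0,\bar{q}_{ij}]$ then yields optimality of $\bm{x}^{h*}$ for the box-constrained relaxation of Problem~\ref{prb:rvfollower}.

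To incorporate the availability constraint Eq.(\ref{eq:rvfollower2}) I would assemble a full KKT certificate: since $\bm{x}^{h*}$ is feasible, one can take the availability multipliers $\beta_i=0$, which trivially satisfies complementary slackness whether or not the constraint binds, and append it to the box-problem certificate; because Problem~\ref{prb:rvfollower} is a linear program, these KKT conditions are sufficient for global optimality, so $\bm{x}^{h*}$ solves Problem~\ref{prb:rvfollower}. I expect the main obstacle to be the careful alignment of the steady-state flow-conservation dual $\bm{\phi}^*$ with the node-potential terms $-\phi_i+\phi_j$ appearing in Eq.(\ref{eq:rvfollower0}) (fixing the sign convention consistently across both problems), together with the bookkeeping needed to argue that the availability constraint can be deactivated without destroying the dual certificate; the degenerate case $\bar{q}_{ij}=0$, where the demand bound forces $x^h_{ij}=0$ in both problems irrespective of the coefficient sign, should be dispatched separately as a minor remark.
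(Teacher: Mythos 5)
Your proposal is correct and follows essentially the same route as the paper's proof: both realize the equilibrium as an optimal solution to Problem~\ref{prb:HV1} via Theorem~\ref{thm:HVEquilibrium}, transfer its optimal duals $(\bm{\pi}^*,\bm{\phi}^*)$ and the equilibrium earning rate $v$ into a KKT certificate for Problem~\ref{prb:rvfollower}, and handle the availability constraint Eq.(\ref{eq:rvfollower2}) by assigning it a zero multiplier (your edgewise reduced-cost case analysis is just an unpacked form of the paper's direct comparison of the KKT systems Eq.(\ref{eq:KKT3}) and Eq.(\ref{eq:KKT1})). If anything, your argument is slightly more complete, since you explicitly construct $\bar{r}_{it}=\sum_j(x^{h*}_{ij}+y^{h*}_{ij})$ to guarantee primal feasibility of the availability constraint, a choice the paper leaves implicit.
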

 \begin{proof}
 See \ref{proof:implication}.
 \end{proof}
Proposition~\ref{prp:implication} indicates that the follower model as stated  Problem~\ref{prb:rvfollower}  can produce the same passenger flow as in the steady-state scenario, if the constraints of HV availability  Eq.(\ref{eq:rvfollower2}) are not active, i.e., the vehicle provision at each station is sufficient. In other words, the follower problem can be seen as an approximation to the equilibrium of HVs. Moreover, the expected earnings $\bm{\phi}=\{\phi_i\}_{i\in\mathcal{
N}}$  can also be obtained from the optimal dual variables of constraints Eq.(\ref{eq:HVOpt2}) in Problem~\ref{prb:HV1}, if the real data is limited.
   
Since the follower model as stated in Problem~\ref{prb:rvfollower} is a linear programming problem, the leader and follower models can be combined to yield a mixed  integer linear programming  (MILP). However, this approach is computationally expensive, especially for a large-scale city-level transportation network. To  improve scalability, we propose the following relaxation to the MPC model.  

\begin{problem}[Relaxation to the MPC formulation] \label{prb:combined}
Given a tuple $\{\bm{q}_t,\bm{p}_t,\bm{c}_t\}_{t\in \mathcal{T}_0}$ of predicted demand, compensations, and prices, as well as the initial conditions $(\bm{r}_0^a,\bm{r}_0^h,\{\bm{x}_t^h,\bm{y}_t^h,\bm{x}_t^a,\bm{y}_t^a\}_{t\in\mathcal{T}_{-}})$ and a weight parameter $\lambda\geq0$, the operator determines the passenger and rebalancing flow of AVs $\{\bm{x}_t^a,\bm{y}_t^a\}_{t\in\mathcal{T}_{0}}$ by solving the following optimization model~(\ref{eq:combined}).

\begin{subequations}
\begin{align}
\max_{\bm{x}^a, \bm{y}^a, \bm{w}, \bm{r}^a, \bm{r}^h} \quad & J^{\rm{TV}} = J_L^{\rm{TV}}(\bm{x}^a, \bm{y}^a, \bm{w}, \bm{r}^a, \bm{r}^h)+\lambda \sum_{t\in\mathcal{T}_0}J_{F,t}^{\rm{TV}}(\bm{x}^h) \label{eq:combined0}\\
\rm{s.t.} \quad  & \rm{Eq.(\ref{eq:rvleader1})-Eq.(\ref{eq:rvleader5}), Eq.(\ref{eq:rvfollower1})-Eq(\ref{eq:rvfollower2})}.\notag 
\end{align}\label{eq:combined}
\end{subequations}
\end{problem}
In Problem \ref{prb:combined}, the objective function combines the objective function of the leader model as stated in  Problem~\ref{prb:rvleader} and the  objective function of the follower model as stated in Problem~\ref{prb:rvfollower} via a weight parameter $\lambda\geq 0$. 
Such a parameter can be determined by the operator by using a  sensitivity analysis. We will indeed analyze the sensitivity to parameter $\lambda$ in Section~\ref{sec:simulation}. We also performed numerical experiments to evaluate the quality of the relaxation on small-scale  case studies (e.g., with 8 stations) with realistic problem data (e.g., in terms of system-level earnings, passenger acceptance rate, etc.) Results showed that the relaxation is on average able to find a solution with an optimality gap within 2\%--this motivates the use of Problem \ref{prb:combined} in our large-scale case studies. 

\subsection{Extension to general pick-up locations}
\label{sec:tvpickup}
In many existing MoD systems, vehicles can be matched with passengers 
even if they are not located within the same station as the passengers.  In this subsection, we extend the above framework to account for this possibility. We represent the passenger routes $\xi = (s,o,d) \in \mathcal{P}$ 
as the hyper-arc that vehicles at station $s\in\mathcal{N}$ take to pick up passengers at station $o\in\mathcal{N}$, and deliver them to their destination $d\in\mathcal{N}$. With such a notation,  we can extend Problem \ref{prb:combined} to Problem~\ref{prb:pickup} by replacing the variables for passenger flow. 

\begin{problem}[Extension to general  pick-up locations]\label{prb:pickup}
 Let $K$ be the length of planning horizon, $\mathcal{T}_0=\{t_0,t_0+1,~t_0+K-1\}$ be the set of time steps in the planning horizon from time step $t_0$, and $\mathcal{T}_{-}$ be the set of time steps prior to time step $t_0$. Given a tuple $\{\bm{q}_t,\bm{p}_t,\bm{c}_t\}_{t\in \mathcal{T}_0}$ of predicted demand, compensations, and prices, as well as the initial conditions $(\bm{r}_0^a,\bm{r}_0^h,\{\bm{x}_t^h,\bm{y}_t^h,\bm{x}_t^a,\bm{y}_t^a\}_{t\in\mathcal{T}_{-}})$, the operator determines the passenger and rebalancing flow of AVs $\{\bm{x}_t^a,\bm{y}_t^a\}_{t\in\mathcal{T}_{0}}$ by solving the  optimization problem
\begin{subequations}
\begin{align}
\max_{\substack{\bm{x}^a, \bm{y}^a,\bm{x}^h\\ \bm{w}, \bm{r}^a, \bm{r}^h}} ~ & J_L^{\rm{TV}} =  \sum_{t\in\mathcal{T}_0}\sum_{\xi=(s,o,d) \in \mathcal{P}} p_{odt}(x_{\xi t}^h+x_{\xi t}^a) 
- \psi \sum_{t\in\mathcal{T}_0}\sum_{\xi =(s,o,d)\in \mathcal{P}} \tau_{so}(x_{\xi t}^h+x_{\xi t}^a)
-\psi\sum_{t\in\mathcal{T}_0}\sum_{(i,j) \in \mathcal{E}}w_{ij}^t \notag \\
& - \sigma \sum_{t\in\mathcal{T}_0}\sum_{\xi =(s,o,d)\in \mathcal{P}} (\delta_{so}+\delta_{od})(x_{\xi t}^h+x_{\xi t}^a) - \sigma\sum_{t\in\mathcal{T}_0}\sum_{(i,j) \in \mathcal{E}}\delta_{ij}(y_{ijt}^a+P_{ijt}r_{i,t}^{h})  \notag \\
&+ \lambda \sum_{t\in\mathcal{T}_0}\sum_{\xi=(s,o,d)\in \mathcal{P}}(c_{odt}-\phi_s+\phi_d-v\tau_{so}-v\tau_{od})x_{\xi t}^h \label{eq:tvpickup0}\\
\rm{s.t.} ~  & r_{i,t+1}^{a}  = r_{it}^{a} + \sum_{j \in \mathcal{N}_i}y_{ji,t-\tau_{ji}}^{a} - \sum_{j \in \mathcal{N}_i}y_{ijt}^{a} + \sum_{\xi=(s,o,i) \in \mathcal{P}}x_{\xi,t-\tau_{so}-\tau_{oi}}^{a}-  \sum_{\xi=(i,o,d) \in \mathcal{P}}x_{\xi t}^{a},  ~i \in\mathcal{N},~t\in\mathcal{T}_0 \label{eq:tvpickup1}\\
& r_{i,t+1}^{h} = \sum_{j\in\mathcal{N}}P_{jit}r_{j,t}^{h} + \sum_{\xi=(s,o,i) \in \mathcal{P}}x_{\xi,t-\tau_{so}-\tau_{oi}}^{h}-  \sum_{\xi=(i,o,d) \in \mathcal{P}}x_{\xi t}^{h},~i \in\mathcal{N},~t\in\mathcal{T}_0 \label{eq:tvpickup2}  \\
& w_{ij,t+1} = \epsilon w_{ijt} + q_{ijt} - \sum_{\xi=(s,i,j) \in \mathcal{P}}(x_{\xi,t}^{a}+x_{\xi,t}^{h}),~(i,j) \in \mathcal{E},~t\in\mathcal{T}_0 \label{eq:tvpickup3} \\
&\bm{x}^a, \bm{y}^a, \bm{x}^h, \bm{w}, \bm{r}^a, \bm{r}^h  \geq 0 \label{eq:tvpickup4}
\end{align}\label{eq:tvpickup}
\end{subequations}
\end{problem}
In Problem~\ref{prb:pickup}, the objective function Eq.(\ref{eq:tvpickup0}) is to maximize system-level earnings, defined as the difference between the total prices of the requests (the first term) and costs, including the passenger waiting cost associated with the pickup process (the second term), the cost for passengers waiting to be matched with a driver (the third term), the operational costs for passenger routes (the fourth term), the operational costs for rebalancing routes (the fifth term), and a penalty term (the sixth term) that characterizes the objective function of the follower problem (Problem~\ref{prb:rvfollower}) to account for the self-interested behavior of HVs. Notice that the differences between Eq.(\ref{eq:tvpickup0}) and Eq.(\ref{eq:rvleader0}) are 1) we explicitly penalize the passenger pickup time, and 2) we allow positive passenger pickup times within the same region (i.e., $\tau_{ii}>0,~i\in\mathcal{N}$). Constraints  Eq.(\ref{eq:tvpickup1}) and Eq.(\ref{eq:tvpickup1}) represent the evolution of AVs and HVs, respectively, and Constraints  Eq.(\ref{eq:tvpickup3}) describes the evolution of passengers. Eq.(\ref{eq:tvpickup4}) ensures that all variables are nonnegative. 

To summarize this section, we have presented a Stackelberg game-based MPC approach to control the MoD systems with mixed fleet in real-time. The leader problem solved by the operator is formulated as Problem~\ref{prb:rvleader} to optimize the system-level earnings, and the follower problem approximating the behavior of HVs is formulated as Problem~\ref{prb:rvfollower}. The Stackelberg game based MPC approach is approximated as Problem~\ref{prb:combined} to improve scalability, and is extended to consider general pick-up locations as Problem~\ref{prb:pickup}.

\section{Case Study and Numerical Evaluation}

\label{sec:simulation}
\begin{figure}[htbp]
\center
\includegraphics[width=0.45\textwidth]{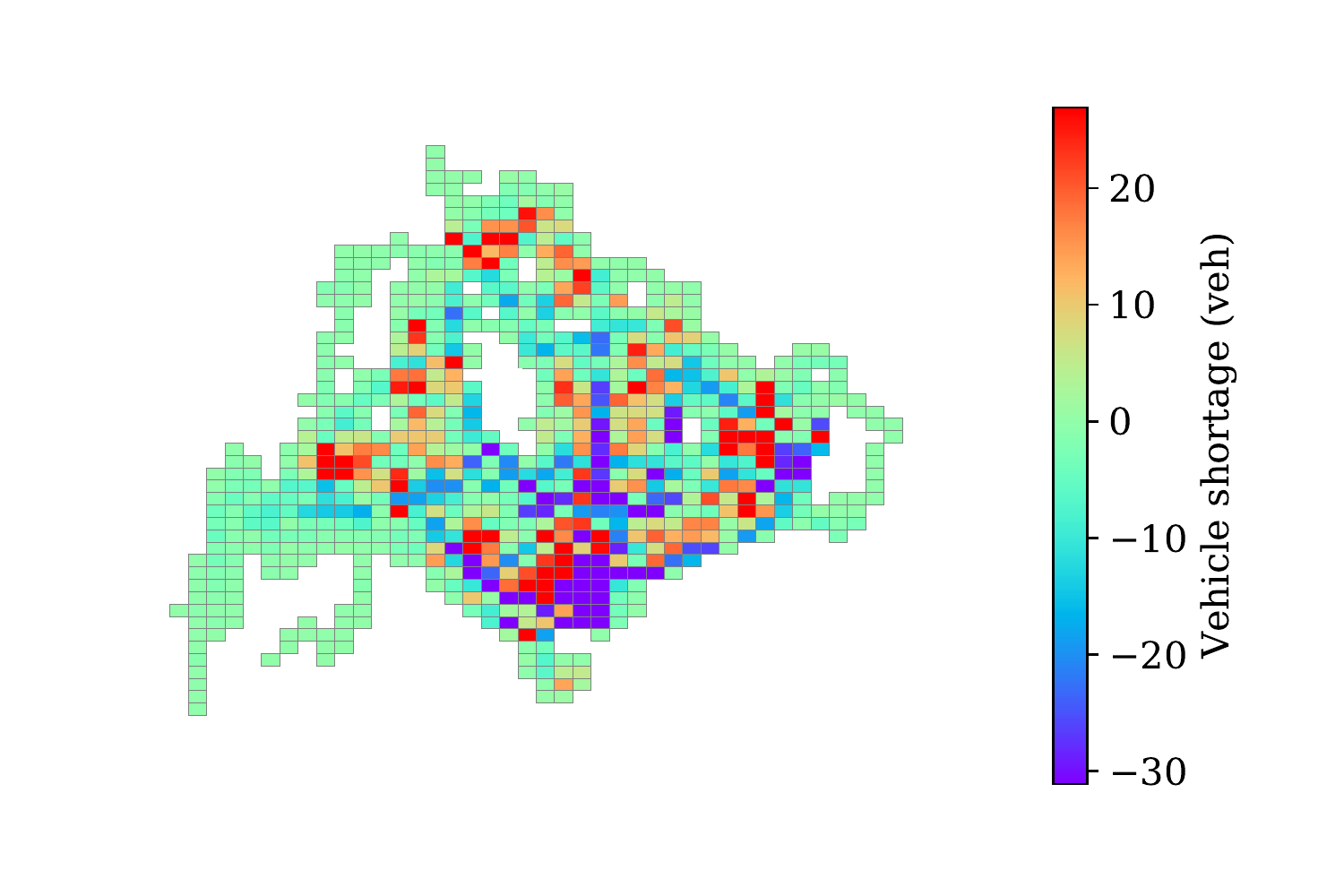}
\caption{Vehicle imbalances in Singapore as a function of locations, where the red blocks represent the locations with vehicle shortage and the blue blocks represent the locations with vehicle surplus.   }
\label{fig:network}
\end{figure}

\subsection{Experiment design and data}

We perform two case studies: a steady-state analysis with a static demand and a real-time simulation with a realistic demand. The steady-state analysis determines the planning variables (e.g., trip prices and compensations) and conceptually quantifies the long-term benefits of AVs on system-level earnings. The real-time simulation, on the other hand, validates the proposed MPC approach by evaluating the system-level earnings in scenarios with given prices and compensations (derived using the method in Section~\ref{sec:ss} or from real data).  

In both case studies, we simulate a typical weekday morning peak between 8 a.m. and 10 a.m. in Singapore, where the road network is discretized into 126 blocks, each with an area of 2.4\,km$\times$2.4\,km.  
The Singapore dataset is provided by Grab Holding Inc., which includes the passenger demand, the average travel times between OD station pairs, the pick up times, and fares.  
Figure~\ref{fig:network} shows the studied area and the vehicle imbalances in the road network in the scenario where all vehicles are human-driven. Here, vehicle imbalances are characterized as the difference between the number of vehicles needed and the vehicle supply. It is clear that vehicles are oversupplied in some of the regions but insufficient in others. The VOT of passengers is assumed to be the average salary of Singapore, i.e., 0.45\,SGD/min \citep{MoMSingapore2020}. The operational cost of vehicles is chosen to be 0.15\,SGD/km \citep[estimated from][]{BoeschBeckerEtAl2018}. The expected earning rate of HVs is calibrated to be $v=0.25$\,SGD/min. 


Simulations are run in Python on a laptop computer with Intel Core i5-8265U and 8 GB memory. The optimization models are solved using CPLEX.

\subsection{Steady-state analysis of the MoD system with mixed fleets}

We perform a steady-state analysis to quantify the long term benefits of AVs. 
In practice, such a steady-state analysis can be performed for different periods of the day (morning commute, afternoon, evening commute, etc) where the demand and supply are relatively static. 
Here, we use the average transportation demand and travel times within the studied period of 8 a.m. to 10 a.m. in the Singapore dataset. The CDF of the willingness to pay $F_{ij}$ is assumed to be a uniform distribution between a minimum price and a maximum price calibrated from the Singapore dataset as a function of travel times and travel distances.
We evaluate scenarios with various AV and HV provision (i.e., the upper bounds on the fleet sizes). The upper bound on the fleet size for AVs, $N^a$, varies between 0 and 24,000 vehicles, and the upper bound on the fleet size for HVs, $N^h$, varies between 0 and 16000 vehicles.  
Prices, compensations, and the passenger and rebalancing flows of vehicles of both classes are calculated based on Theorem~\ref{thm:mixed}. Recall that these compensations can encourage HVs to act in a system-optimal manner should they participate in the system.  
The results are shown in Figure~\ref{fig:ss}. 

\begin{figure}[t]
\centering
\begin{subfigure}{0.48\textwidth}
    \includegraphics[width=\textwidth]{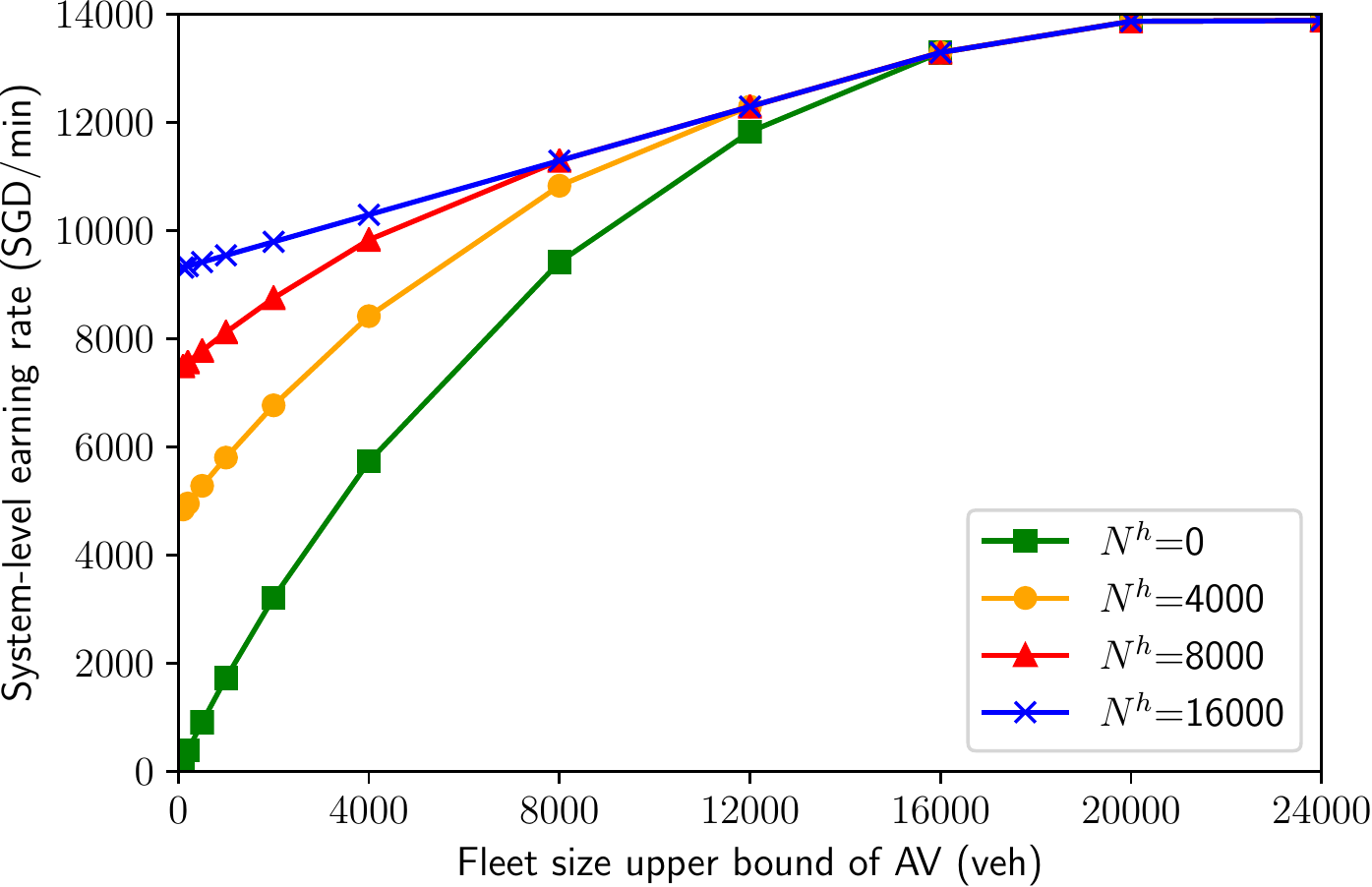}
    \caption{System-level earning rate.} \label{fig:ss_social}
\end{subfigure}
\begin{subfigure}{0.48\textwidth}
    \includegraphics[width=\textwidth]{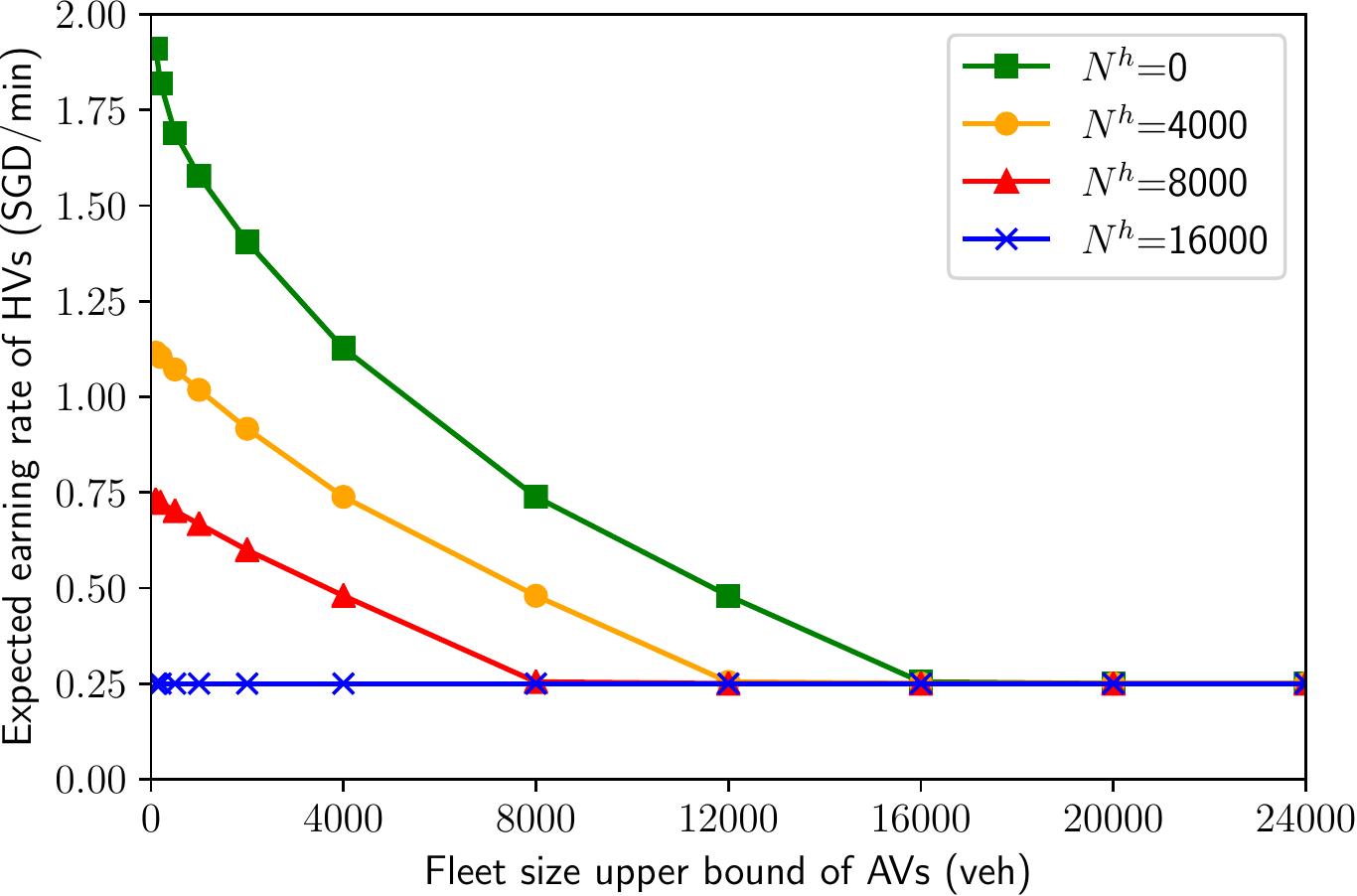}
    \caption{Expected earning rate of HVs.} \label{fig:ss_mu}
\end{subfigure}  
\begin{subfigure}{0.48\textwidth}
    \includegraphics[width=\textwidth]{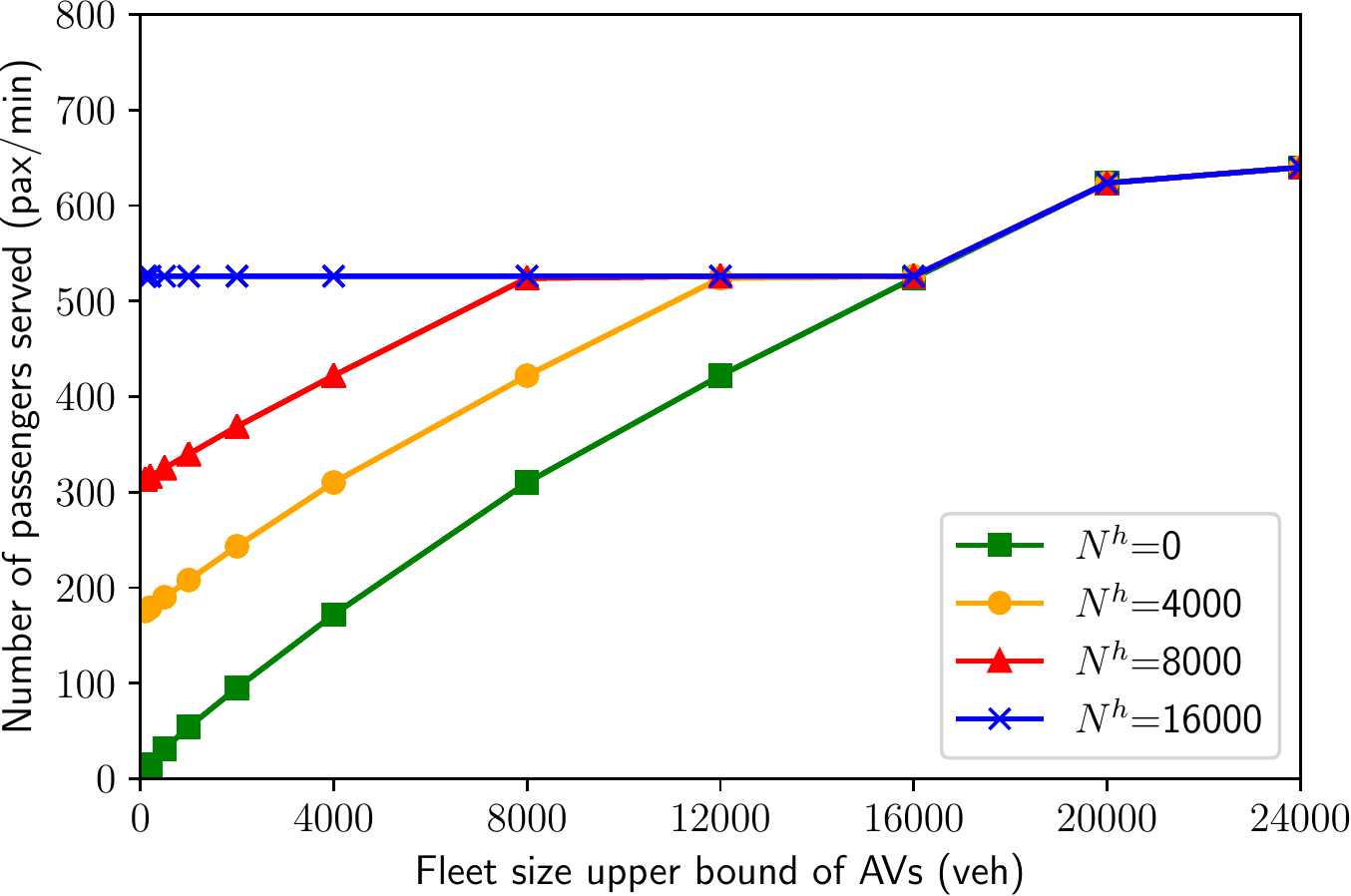}
    \caption{Number of passengers served} \label{fig:ss_demand}
\end{subfigure}  
    \caption{Steady-state analysis in scenarios with various AV ($N^a$ ) and HV ($N^h$) provision.}
    \label{fig:ss}
\end{figure}

Figure~\ref{fig:ss_social} shows the value of having more AVs by evaluating the resulting system-level earning rate (i.e., the system earnings per time step) in scenarios with various upper bounds on the fleet size of HVs (i.e., $N^h$). We can see that as the fleet size of AVs increases, the system-level earning rate increases and the marginal benefits are more evident if the fleet size of AVs is low. Moreover, restricting the number of HVs in the system would  reduce the system-level earnings. 

Figure~\ref{fig:ss_mu} shows how the introduction of AVs would impact the earning rate of HVs in scenarios with various upper bounds on the fleet size of HVs.  We can see that the earning rate of HVs is in general reduced by the introduction of AVs. This is because the AVs have lower costs and would thus replace HVs. 
We can also see that by imposing a stricter bound on the number of HVs in the system (i.e., with smaller $N^h$), the earning rate of HVs can be improved since fewer HVs would be competing for requests.  

Figure~\ref{fig:ss_demand} shows the evolution of the number of served passengers per time step with the introduction of AVs, in scenarios with various upper bounds on the fleet size of HVs. We can see that, in general, the system is able to serve more passengers by having more AVs in the system. However, we notice that the number of served passengers remain constant in some scenarios (e.g., if the upper bound of HVs is 16,000 vehicles). Comparing with Figure~\ref{fig:ss_mu}, we can see that in these scenarios, the earning rate of HVs equal the VOT of HVs, and thus every AV entering the system will replace the routes of a HV since it has lower costs, and the number of served passengers per time step would remain constant in these scenarios.

\subsection{Performance of the proposed Stackelberg game-based MPC approach}\label{sec:performance}

The real-time simulation aims to validate the proposed MPC approach. We take the real-time transportation demand, travel times, and trip fares directly from the Singapore dataset. We first evaluate a scenario where the proposed MPC approach has perfect information on demand, travel times, and the rebalancing probabilities, and then test the robustness of the MPC approach against the prediction errors in these variables. The total fleet size is 14,000 vehicles, which is assumed to be fixed and time-invariant throughout the studied time period.  We vary the share of AVs in the mixed fleet to be between 0 and 14,000 vehicles. The length of a time step is chosen as 1\,min. At the beginning of each time step, passengers make requests, and idle vehicles located at the same or adjacent blocks can be assigned to the requests. 
Notice that the pickup times are positive even if the passengers are picked up by vehicles located at the same station. 
This is to account for the size of the blocks.
 The planning horizon for the proposed MPC is chosen as 15\,min as a trade-off between computational efficiency, prediction accuracy, and system performance. The weight parameter $\lambda$ is set via sensitivity analysis equal to be 6.

We evaluate the proposed Stackelberg game-based MPC approach by comparing the following three approaches.
\begin{itemize}
   \item Baseline approach, defined as an MPC approach that coordinates AVs assuming fully compliant HVs, i.e., without accounting for the strategic interactions with the HVs  (Problem~\ref{prb:pickup} with $\lambda=0$). Prices, demand, and compensations are taken from the Singapore dataset (as opposed to be computed through the steady-state planning model). 
   \item Stackelberg game-based MPC with actual compensations. Prices, demand, and compensations  are taken from the Singapore dataset, and the control actions are obtained by solving Problem~\ref{prb:pickup} with the $\lambda=6$ that provides the optimal system-level earnings in the sensitivity analysis. 
    \item Stackelberg game-based MPC with compensations set by using the steady-state formulation.  Prices and demand are taken from the Singapore dataset to facilitate comparison, whereas the compensations are set by using the dual variables associated with constraint Eq.(\ref{eq:AVDemand2}) (see Section~\ref{sec:ssMixed} for more details). The control actions are obtained by solving Problem~\ref{prb:pickup} with  $\lambda=6$.  
\end{itemize}
These approaches  are evaluated  in scenarios with different fleet sizes of AVs to illustrate the value of introducing AVs in the MoD system. The comparison of the first two approaches sheds light on the value of considering the interactions between AVs and HVs, while the comparison of the later two approaches demonstrates the value of optimizing the compensations.  
In these comparisons, we evaluate  the system-level earnings in general, but also look into more detailed performance criteria, e.g., operator's profit, average passenger waiting time, passenger acceptance rate, vehicle utilization, and the empty vehicle  kilometers traveled. Here, since we assume passengers do not wait to be assigned, the passenger waiting time refers specifically to the time that a passenger waits to be picked up by the matched vehicle. The passenger acceptance rate refers to the percentage of passengers that are successfully matched with a vehicle, the vehicle utilization represents the percentage of time a vehicle is occupied by a passenger, and the empty vehicle  kilometers traveled are defined as the total distance traveled by vehicles to pick up passengers or rebalance themselves. 
The results are summarized in Figure~\ref{fig:performance}. 
 \begin{figure}[htbp]
    \centering
    \begin{subfigure}{0.45\textwidth}
    \includegraphics[width=\textwidth]{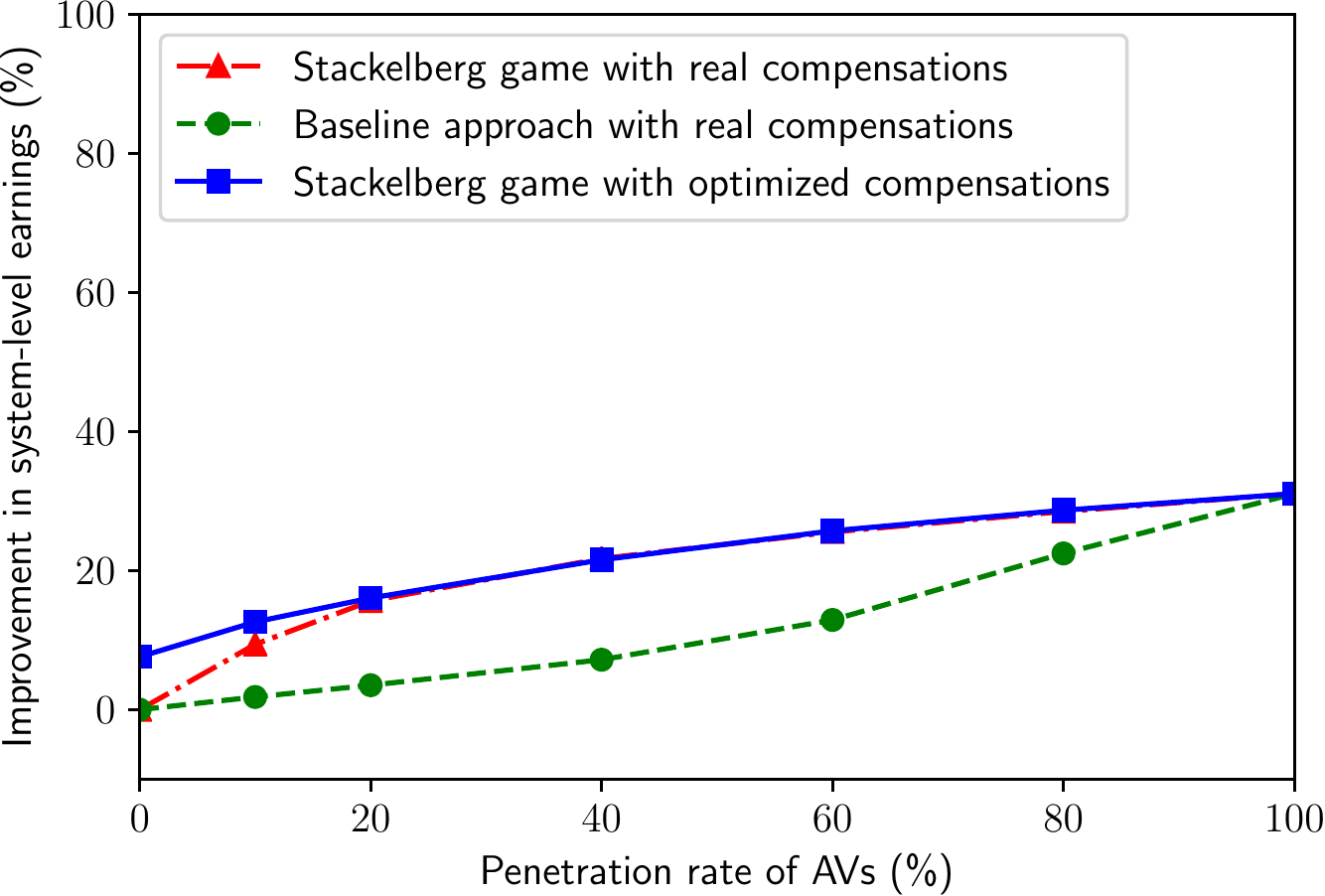}
    \caption{Improvement in system-level earnings} \label{fig:social}
    \end{subfigure} \quad \quad
    \begin{subfigure}{0.45\textwidth}
    \includegraphics[width=\textwidth]{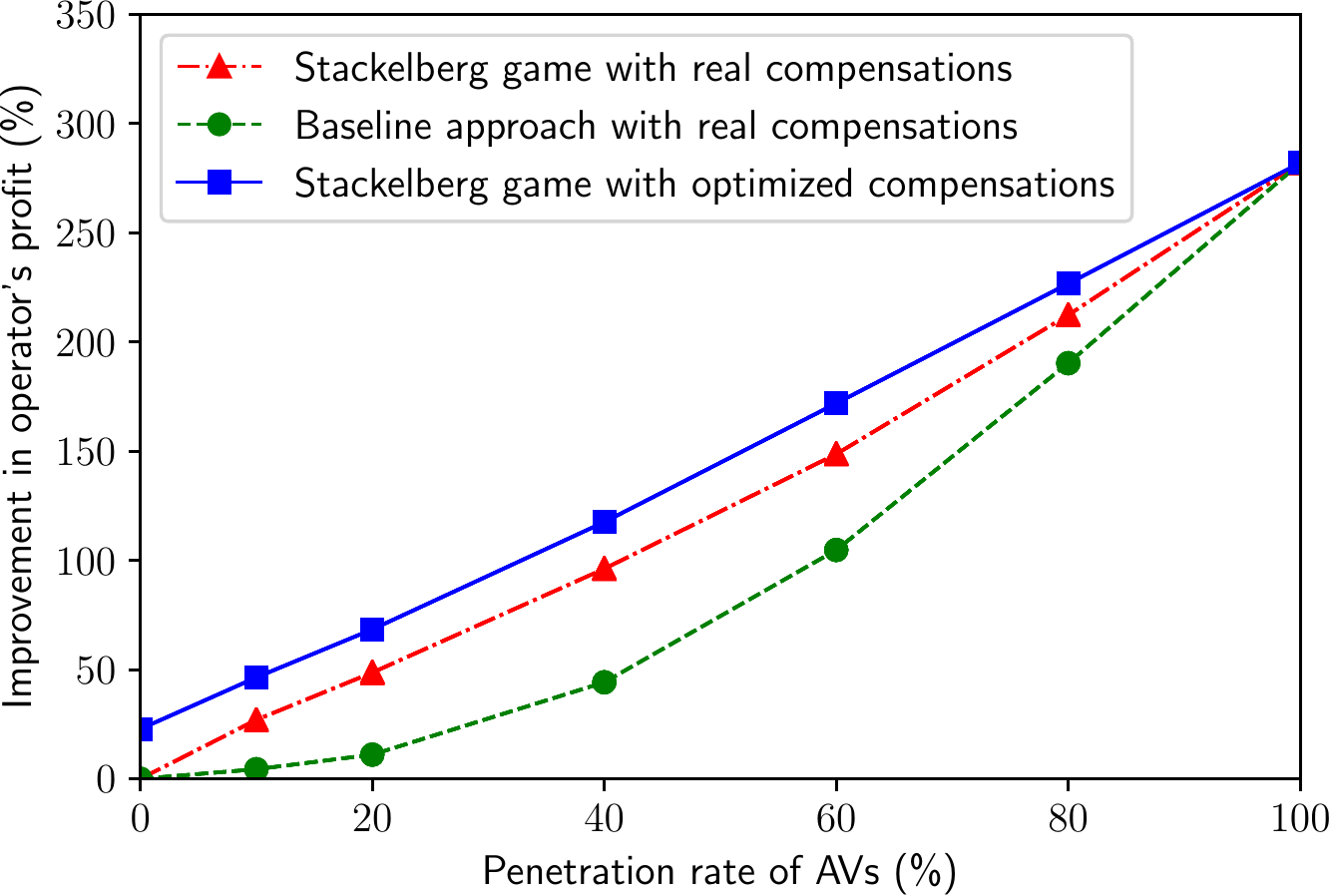}
    \caption{Improvement in operator's profit} \label{fig:operator_profit}
    \end{subfigure}
    \begin{subfigure}{0.45\textwidth}
    \includegraphics[width=\textwidth]{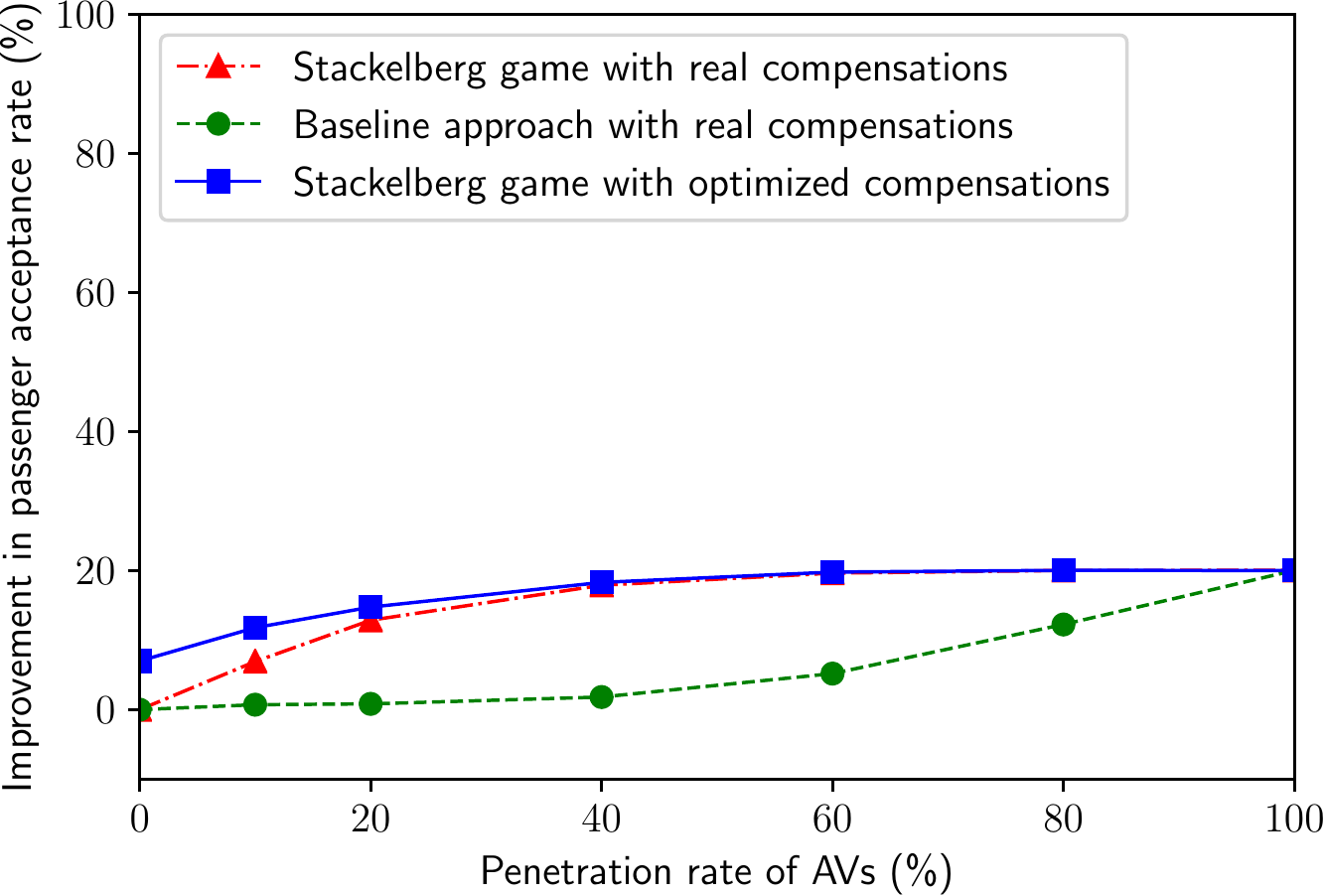}
    \caption{Improvement in passenger acceptance rate} \label{fig:pax_acceptance}
    \end{subfigure}\quad \quad
    \begin{subfigure}{0.45\textwidth}
    \includegraphics[width=\textwidth]{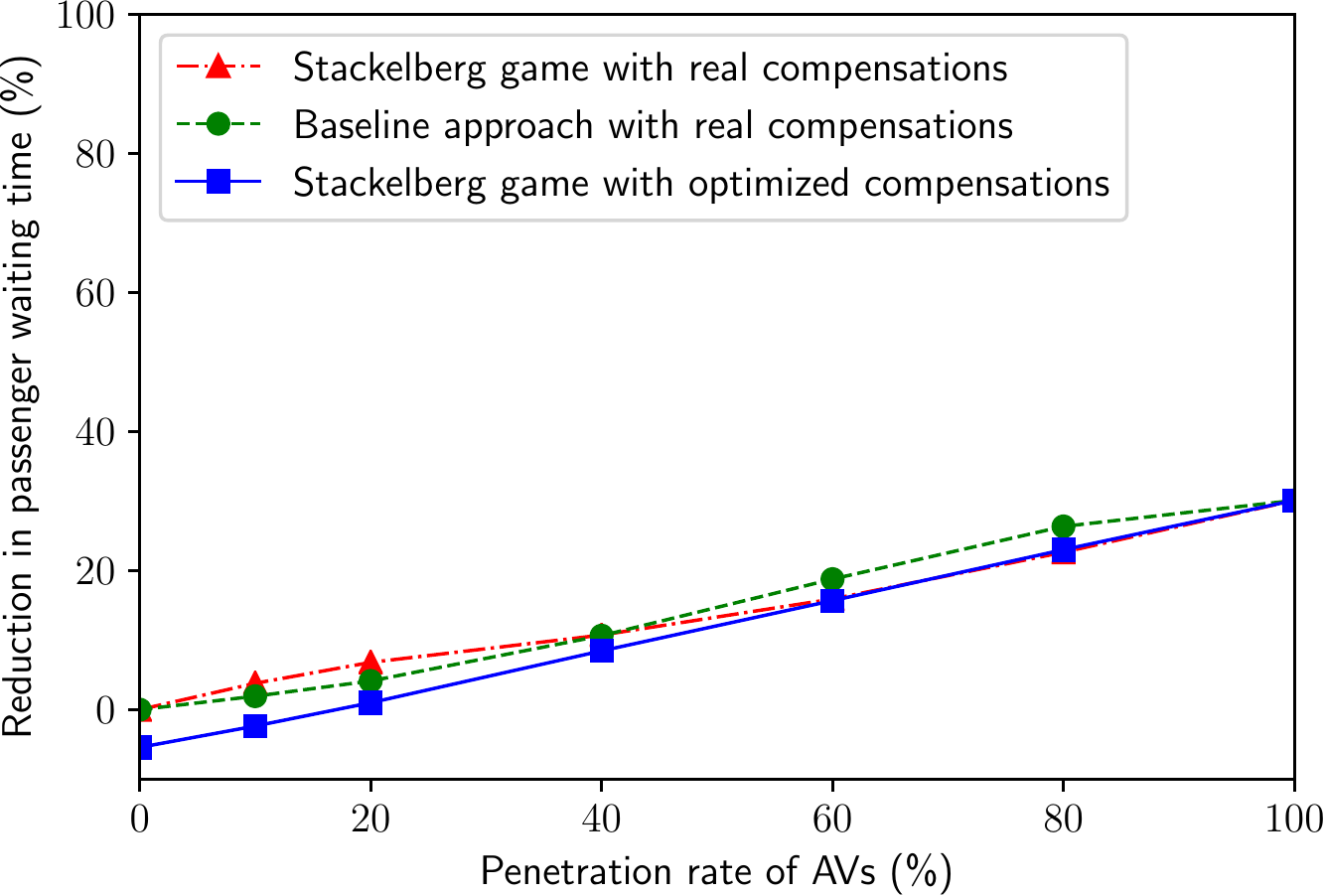}
    \caption{Reduction in passenger waiting time} \label{fig:pax_wait}
    \end{subfigure}
    \begin{subfigure}{0.45\textwidth}
    \includegraphics[width=\textwidth]{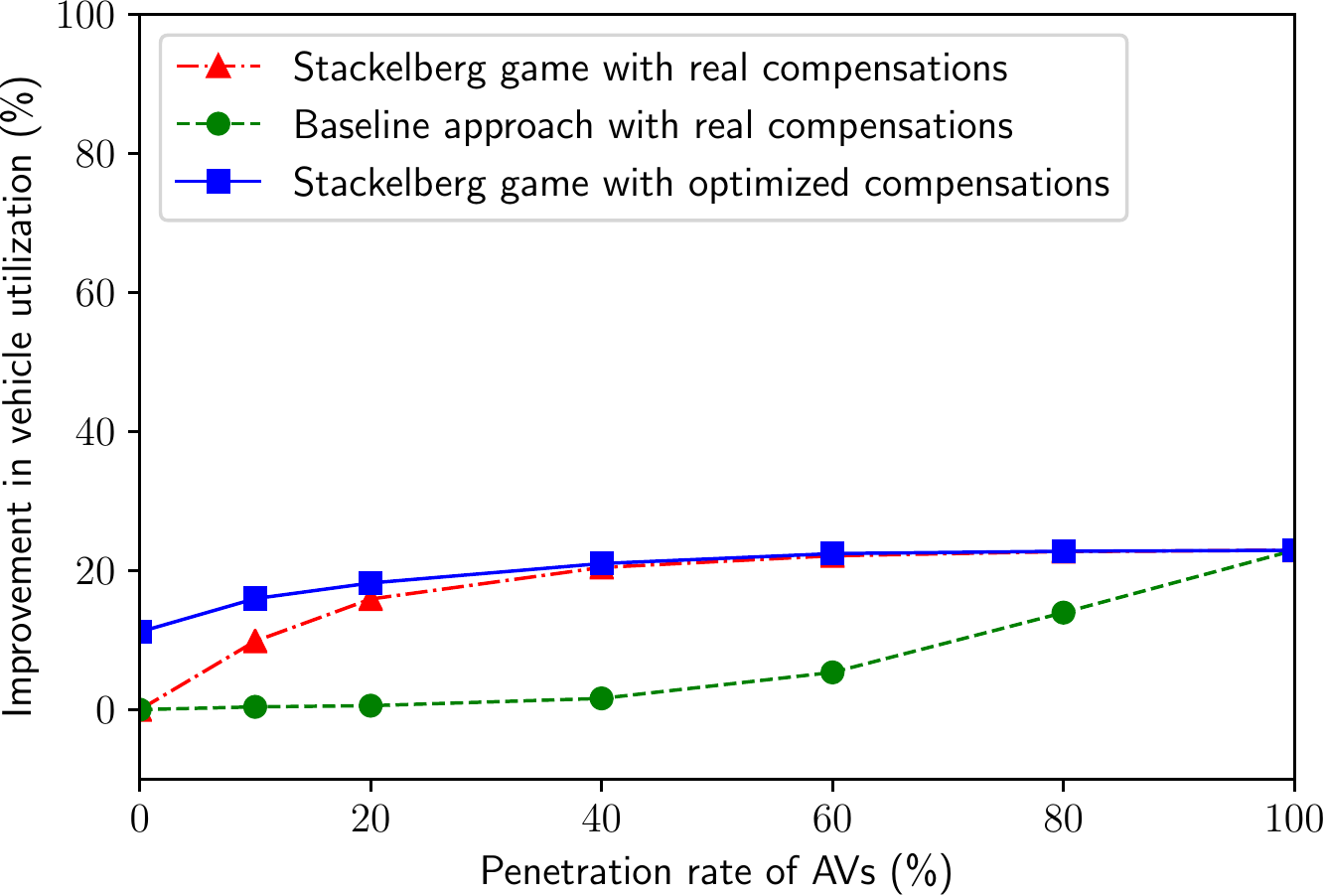}
    \caption{Improvement in vehicle utilization} \label{fig:veh_util}
    \end{subfigure}\quad \quad
    \begin{subfigure}{0.45\textwidth}
    \includegraphics[width=\textwidth]{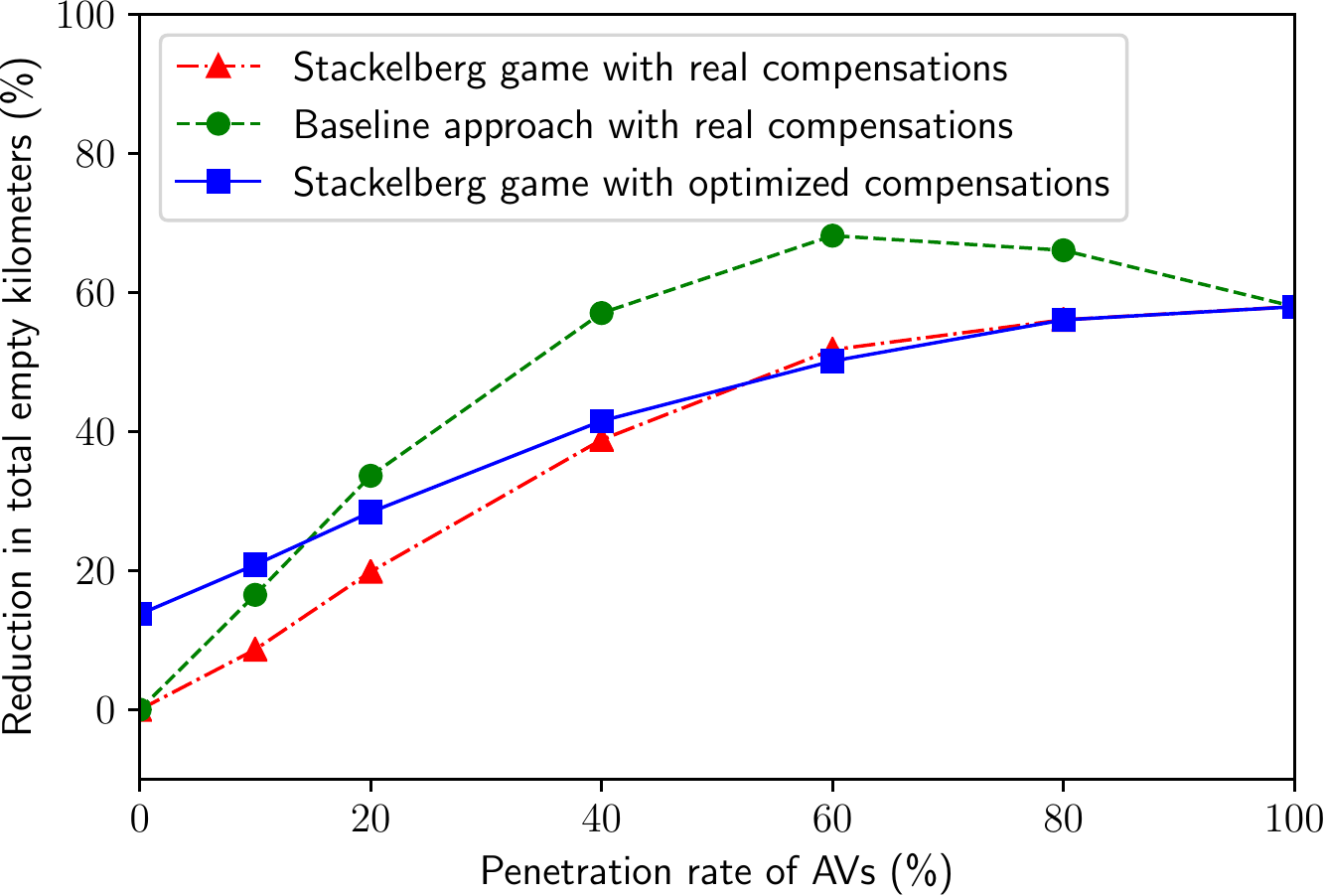}
    \caption{Reduction in total empty kilometers traveled} \label{fig:veh_cost}
    \end{subfigure} 
    \caption{Performance of the Stackelberg game based MPC approach compared to the scenario where all vehciles are HVs. }
    \label{fig:performance}
\end{figure}

\paragraph{Value of introducing AVs} We show the value of introducing AVs by comparing the system performance in scenarios with various penetration rates of AVs. 
We can see from Figure~\ref{fig:performance} that the MoD system can be improved significantly by replacing HVs with AVs by using all three approaches. 
Specifically, we can improve the system-level earnings by up to 32\%, reduce the passenger waiting time by up to 30\%, increase the passenger acceptance rate by up to 20\%, improve the vehicle utilization by up to 21\%, and reduce the empty kilometers traveled by up to 67\%.  
Moreover, the benefits of AVs is more significant at the early stage of deployment. In fact, the system-level earnings can be improved by 13\% by replacing 10\% of vehicles with AVs, and more than 97\% passengers can be served if the AV fleet size is larger than 40\%.  This shows that promising value of AVs in improving MoD services.

\paragraph{Value of considering the interactions between AVs and HVs} We show the value of considering the interactions between AVs and HVs by comparing the baseline approach (green dashed lines) with the Stackelberg game-based MPC approach  with actual compensations (red dotted lines).  We can see from Figure~\ref{fig:performance} that the proposed Stackelberg game-based MPC can improve system-level earnings by 12\% by considering the interactions between HVs and AVs. The benefit of considering such interactions is especially evident in scenarios with moderate penetration rates of AVs (i.e., 20\% -- 60\%). This is because the interactions between AVs and HVs are more intense in these scenarios. We also observe that the benefits of considering such interactions lie mainly in the improvement in the passenger acceptance rate. There are two reasons for this. First, the Stackelberg game-based MPC can predict the behaviors of HVs more accurately, and can match AVs to the passengers that HVs are not willing to serve. Second, by modeling the system more accurately, the Stackelberg game-based MPC is able to coordinate AVs in a more efficient manner, and thus serve more passengers. The passenger waiting time and empty kilometers traveled, however, may not be necessarily improved. This is expected because the Stackelberg game-based MPC can coordinate AVs to take the passengers with relatively low values and long pickup times that would otherwise not be taken by the HVs. Accepting these passengers may increase the average passenger waiting time and the total empty kilometers traveled. 

\paragraph{Value of optimizing the compensations.} We show the value of optimizing compensations by comparing the Stackelberg  game-based MPC approach with actual compensations (red dotted lines) and optimized compensations (blue solid lines). We can see from Figure~\ref{fig:performance} that by optimizing the compensations, we can improve the system-level earnings by up to 8\%, and the improvement is more evident when the penetration rate is low. This is reasonable, since  compensations aim to drive the behaviors of HVs to the system optimal actions and would be more beneficial if there were more HVs in the system. Similar trends can be observed for the passenger acceptance rate, vehicle utilization, the operator's profit, and the total empty kilometers traveled. Notice that the passenger waiting time may not be improved by optimizing compensations. One potential reason is that the passenger waiting time does not contribute significantly to the objective functions, since the value is typically not too large.    

\subsection{Robustness to prediction errors}
The proposed MPC approach relies on the prediction of several 
variables: the future travel demand,  the rebalancing flow of HVs, and the travel times. 
We analyze how the prediction errors in these variables would affect the performance of the proposed MPC approach. 
To this end, we allow these variables to be stochastic in the simulation, but use the mean values in the proposed MPC to calculate the actions for AVs. 
Specifically, the travel demand is assumed to follow a Poisson distribution.
The number of rebalancing HVs from each station is assumed to follow a multinomial distribution with the given rebalancing probabilities as the parameters. The prediction error in the travel times is assumed to follow a Gaussian distribution with mean 0 and a standard deviation of 10\% and 20\% of the value, respectively, to simulate scenarios with moderate and large noises in travel times. 
Notice that the MPC can obtain the true travel demand at the current time step from passenger requests, but uses the mean value of the Poisson distribution to predict the future demand.
  We derive the resulting system-level earnings in scenarios with different levels of prediction errors and different fleet sizes of AVs, each with 5 random seeds. The results are as shown in Figure~\ref{fig:robustness}. 

\begin{figure}[htbp]
    \centering
    \includegraphics[width=0.5\textwidth]{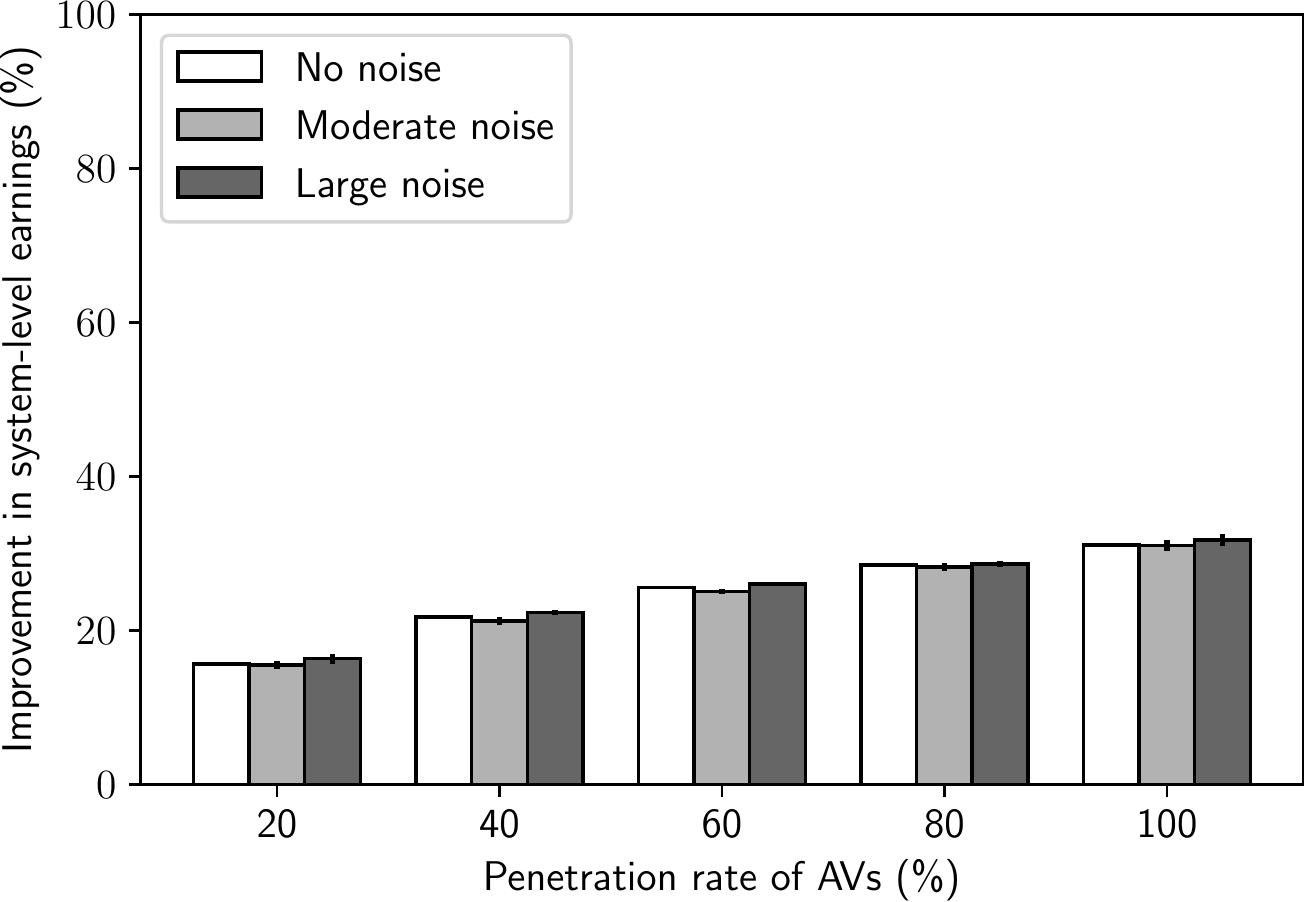}
    \caption{Robustness to prediction noises, where the vertical axis represents the improvement in system-level earnings compared to the scenario with all HVs; moderate noises and large noises represent the scenarios with 10\% and 20\% errors in travel time prediction, respectively; and the error bars represent the standard deviation in the improvement across different random seeds. }
    \label{fig:robustness}
\end{figure}

Figure~\ref{fig:robustness} shows that the proposed Stackelberg game-based approach is quite robust to these prediction errors. This is expected for three reasons. First, we expect MPC to be robust against a certain level of prediction errors, due to its built-in feedback mechanism. Second, transportation requests at the current time step are submitted by the passengers and are accessible to the operator. Hence, the controller has perfect knowledge of the demand at the current time step, which is used to effectively match passengers with drivers. 
Third, the knowledge of transportation demand and rebalancing probabilities  enables MPC to frequently rebalance AVs to blocks with vehicle shortages.

\subsection{Sensitivity to model parameters}

\begin{figure}[htbp]
    \centering
    \begin{subfigure}{0.45\textwidth}
    \includegraphics[width=\textwidth]{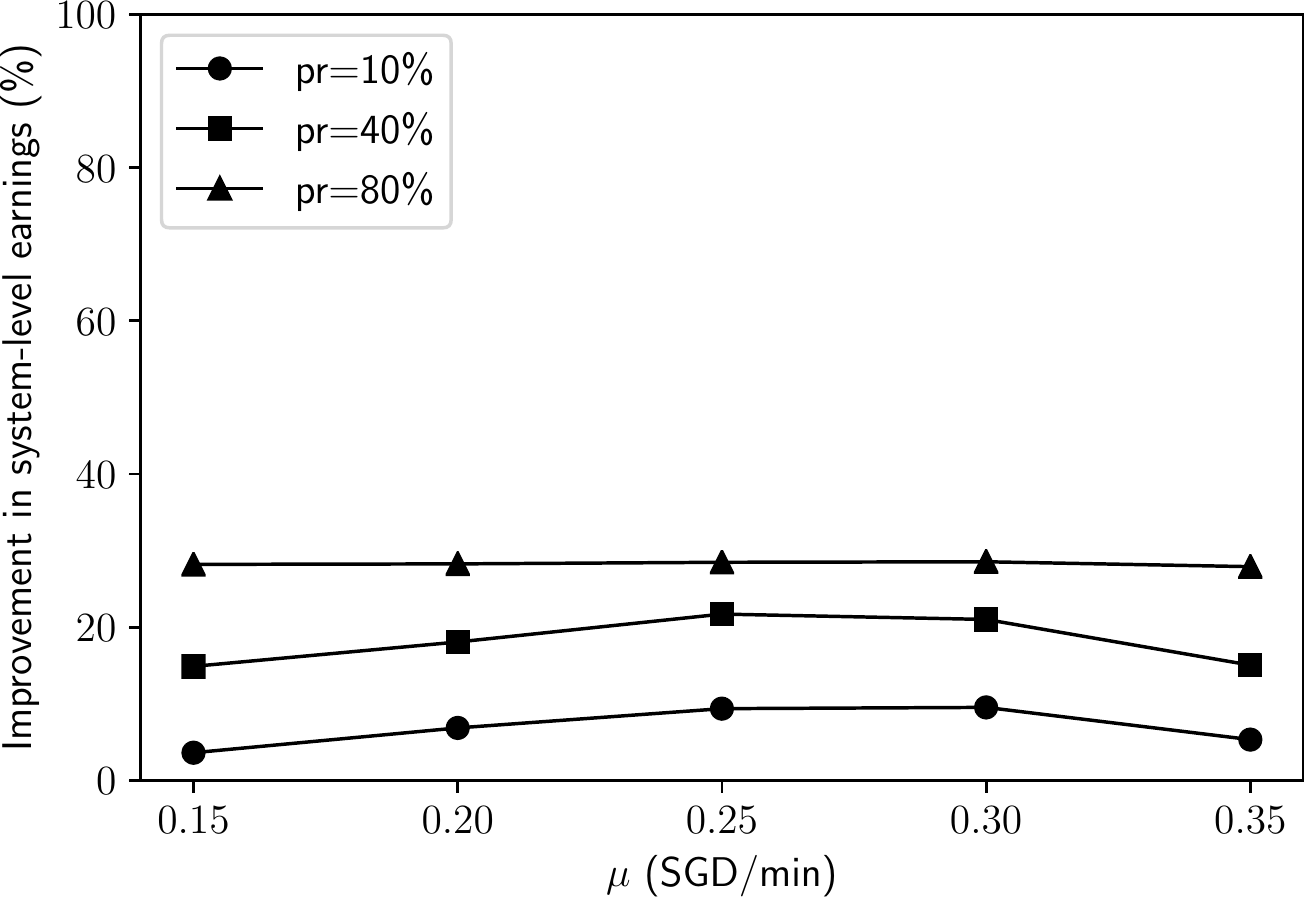}
    \caption{Sensitivity to the assumed VOT of HVs.}
    \label{fig:sa_vh}
    \end{subfigure}\quad\quad
    \begin{subfigure}{0.45\textwidth}
    \includegraphics[width=\textwidth]{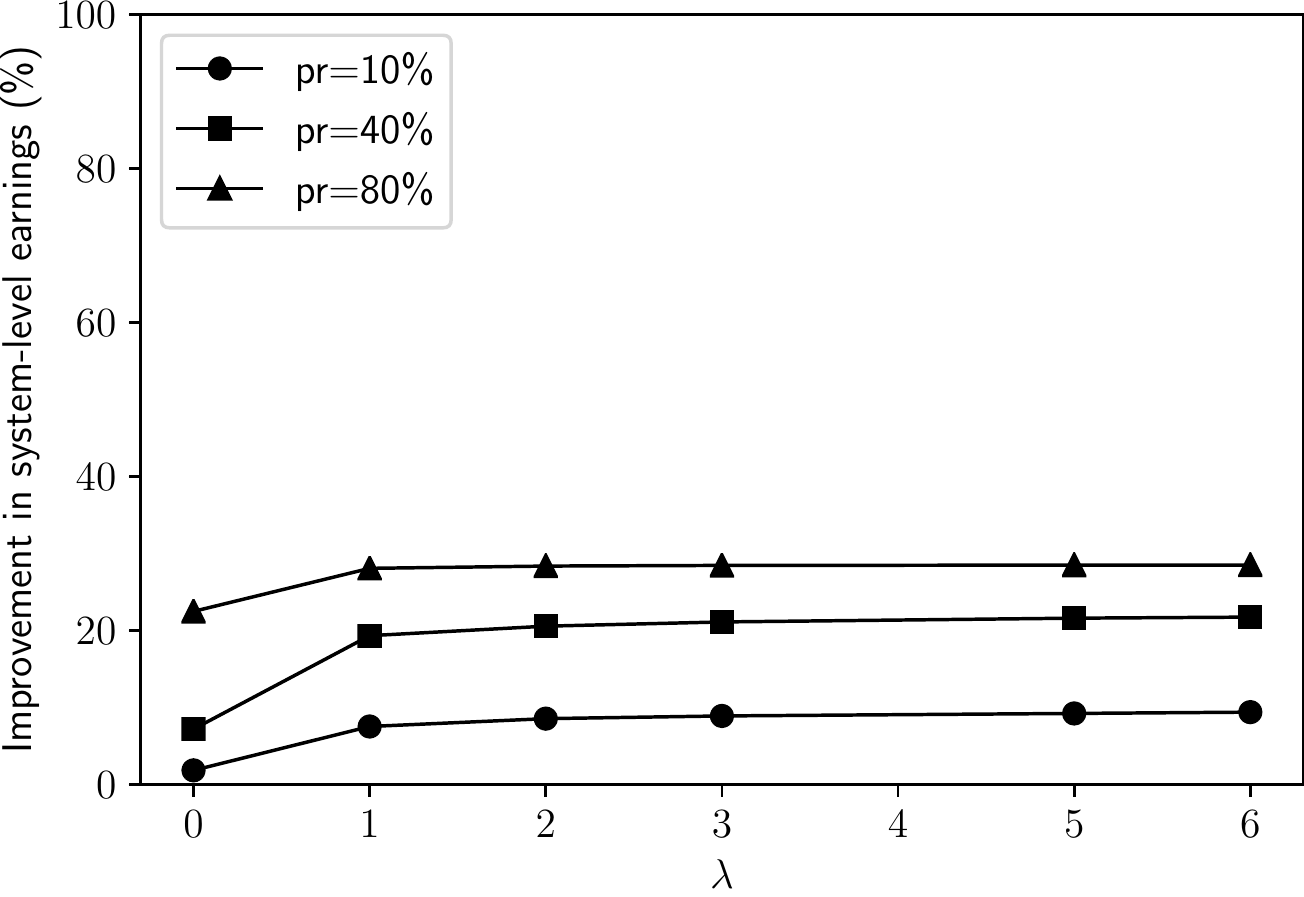}
    \caption{Sensitivity to the weighting parameter $\lambda$.}
    \label{fig:sa_lambda}
    \end{subfigure}
    \caption{Sensitivity analysis for the Stackelberg game based MPC approach in scenarios with various penetration rates of AVs, where the prices, demand, and compensations are taken from the Singapore dataset.  }
    \label{fig:sa}
\end{figure}

We analyze the sensitivity of the proposed 
MPC to the model parameters, i.e., the assumed VOT of HV drivers $\mu$ and the weight parameter $\lambda$ in Problem~\ref{prb:pickup}, to evaluate the impact on the performance of the proposed MPC if these parameters deviate from the values that yield the optimal system-level earnings. 
To this end, we employ a local sensitivity analysis method, i.e., the One-at-a-Time method by varying one parameter at each time  in the MPC and fixing the other parameter. 
Such a local sensitivity analysis suffices for practical applications, because we do not expect the parameter values
to deviate significantly from the optimal values. Specifically, we vary the assumed $\mu$ between 0.15\,SGD/min to 0.35\,SGD/min in Problem~\ref{prb:pickup}, while keeping $\mu=0.25$ in the simulation. The value of $\lambda$ varies between 0 and 6. 
The results are illustrated in Figure~\ref{fig:sa}. 

We can see that the performance of the proposed MPC is not sensitive to either parameter, as long as the VOT of HV drivers $\mu$ is within a reasonable range (i.e., between 0.2\,SGD/min and 0.3\,SGD/min) and the weight parameter $\lambda$ is greater than 3.  The performance of the proposed MPC would drop if the parameters are outside these ranges. If $\mu$ lies too far from the true value, the MPC may not model the system accurately, and therefore might sacrifice system performance.  On the other hand, if $\lambda$ is too small, the operator would not account for the behavior of HVs so that certain assigned trips may not be taken by the HVs, and thus worsen the performance of the proposed MPC.

\section{Conclusion}
\label{sec:conclusion}
We proposed a Stackelberg game-based framework to investigate 
an MoD system with a mixed fleet of AVs and HVs where AVs can be fully coordinated by the operator, but HVs act on their own interest. 
We  developed two formulations: 1) a steady-state formulation to model the behaviors of HVs and analyze the properties of the problem; and 2) a time-varying formulation to devise a real-time MPC based control algorithm. We further conducted real-world case studies based on a Singapore dataset to validate the proposed formulation and algorithms. Results show that the system-level earnings can be significantly improved (by up to 32\%) by introducing AVs into the MoD system. We further show that by considering the interactions between AVs and HVs and by optimizing the compensations, the system-level earnings can be improved by 12\% and 8\%, respectively. Overall, this sheds light on the promising value of the proposed approach.

This research opens the field for several research directions.
First, we would like to extend the proposed approaches to facilitate the integration with real-world MoD services by considering formulations with different objectives (e.g., operator's profit), higher spatio-temporal resolution,
real-time pricing strategies (e.g., surge pricing), and HVs with heterogeneous properties. On the practical side, the strategy proposed in this work has the potential to improve current MoD systems by globally coordinating a small fleet of paid contractor drivers who act ``as AVs".  
Second, it is of interest to account for service externalities, such as congestion effects of vehicle routing, fuel consumption, emissions, etc., which would be important factors to consider as the market share of the MoD service keeps growing. Third, we would like to consider  interactions with other MoD operators and with public transport operators. 
Fourth, we would like to account for the limited driving capability of AVs during the transition period, in the sense that they may not be capable of handling every type of road condition and thus be restricted to operate within a certain subnetwork. Along this line, this work can also be integrated with a co-design framework~\citep[see][for example]{ZardiniLanzettiEtAl2020} to jointly optimize the services of the MoD systems with the allocation of dedicated or smart infrastructure for AVs to provide guidelines to traffic authorities on how to manage and advocate shared AVs in cities.


\section*{Acknowledgement}
The authors would like to thank S.~Hughes and  M.~Yamato from the Toyota Research Institute  and G.~Lin, P.~Kotamarthi, A.~Madelil, J.~Zhou, B.~Adipradana, Z.~K.~Wong, and V.~Torka from Grab Holding Inc. for sharing the data used for the case study and for helpful discussions. We also would like thank M. Zallio for help with the graphics. This research was supported by the
National Science Foundation (NSF) under CAREER Award CMMI1454737 and the Toyota Research Institute (TRI). The first author would like to acknowledge the support of the Swiss National Science Foundation (SNSF) Postdoc.Mobility Fellowship (P2EZP2\_184244). This article
solely reflects the opinions and conclusions of its authors and not NSF, TRI, SNSF, or any other entity.

 \bibliography{main,ASL_papers}

\newcommand{\noopsort}[1]{} \newcommand{\printfirst}[2]{#1}
  \newcommand{\singleletter}[1]{#1} \newcommand{\switchargs}[2]{#2#1}
\begin{thebibliography}{43}
\expandafter\ifx\csname natexlab\endcsname\relax\def\natexlab#1{#1}\fi
\providecommand{\url}[1]{\texttt{#1}}
\providecommand{\href}[2]{#2}
\providecommand{\path}[1]{#1}
\providecommand{\DOIprefix}{doi:}
\providecommand{\ArXivprefix}{arXiv:}
\providecommand{\URLprefix}{URL: }
\providecommand{\Pubmedprefix}{pmid:}
\providecommand{\doi}[1]{\href{http://dx.doi.org/#1}{\path{#1}}}
\providecommand{\Pubmed}[1]{\href{pmid:#1}{\path{#1}}}
\providecommand{\bibinfo}[2]{#2}
\ifx\xfnm\relax \def\xfnm[#1]{\unskip,\space#1}\fi
\bibitem[{Afeche et~al.(2018)Afeche, Liu and Maglaras}]{AfecheLiuEtAl2018}
\bibinfo{author}{Afeche, P.}, \bibinfo{author}{Liu, Z.},
  \bibinfo{author}{Maglaras, C.}, \bibinfo{year}{2018}.
\newblock \bibinfo{title}{Ride-hailing networks with strategic drivers: The
  impact of platform control capabilities on performance}.
\newblock \bibinfo{journal}{Columbia Business School Research Paper} .
\bibitem[{Bagloee et~al.(2017)Bagloee, Sarvi, Patriksson and
  Rajabifard}]{BagloeeSarviEtAl2017}
\bibinfo{author}{Bagloee, S.}, \bibinfo{author}{Sarvi, M.},
  \bibinfo{author}{Patriksson, M.}, \bibinfo{author}{Rajabifard, A.},
  \bibinfo{year}{2017}.
\newblock \bibinfo{title}{A mixed user-equilibrium and system-optimal traffic
  flow for connected vehicles stated as a complementarity problem}.
\newblock \bibinfo{journal}{{Computer-Aided Civil and Infrastructure
  Engineering}} \bibinfo{volume}{32}, \bibinfo{pages}{562--580}.
\bibitem[{Banerjee et~al.(2015)Banerjee, Johari and
  Riquelme}]{BanerjeeJohariEtAl2015}
\bibinfo{author}{Banerjee, S.}, \bibinfo{author}{Johari, R.},
  \bibinfo{author}{Riquelme, C.}, \bibinfo{year}{2015}.
\newblock \bibinfo{title}{Pricing in ride-sharing platforms: A
  queueing-theoretic approach}, in: \bibinfo{booktitle}{{ACM Conf.\ on
  Economics and Computation}}.
\bibitem[{Battifarano and Qian(2019)}]{BattifaranoQian2019}
\bibinfo{author}{Battifarano, M.}, \bibinfo{author}{Qian, Z.},
  \bibinfo{year}{2019}.
\newblock \bibinfo{title}{Predicting real-time surge pricing of ride-sourcing
  companies}.
\newblock \bibinfo{journal}{{Transportation Research Part C: Emerging
  Technologies}} \bibinfo{volume}{107}, \bibinfo{pages}{444--462}.
\bibitem[{Bimpikis et~al.(2019)Bimpikis, Candogan and
  Saban}]{BimpikisCandoganEtAl2019}
\bibinfo{author}{Bimpikis, K.}, \bibinfo{author}{Candogan, O.},
  \bibinfo{author}{Saban, D.}, \bibinfo{year}{2019}.
\newblock \bibinfo{title}{Spatial pricing in ride-sharing networks}.
\newblock \bibinfo{journal}{{Operations Research}} .
\bibitem[{Boesch et~al.(2018)Boesch, Becker, Becker and
  Axhausen}]{BoeschBeckerEtAl2018}
\bibinfo{author}{Boesch, P.M.}, \bibinfo{author}{Becker, F.},
  \bibinfo{author}{Becker, H.}, \bibinfo{author}{Axhausen, K.W.},
  \bibinfo{year}{2018}.
\newblock \bibinfo{title}{Cost-based analysis of autonomous mobility services}.
\newblock \bibinfo{journal}{{Transport Policy}} \bibinfo{volume}{64},
  \bibinfo{pages}{76--91}.
\bibitem[{Boewing et~al.(2020)Boewing, Schiffer, Salazar and
  Pavone}]{BoewingSchifferEtAl2020}
\bibinfo{author}{Boewing, F.}, \bibinfo{author}{Schiffer, M.},
  \bibinfo{author}{Salazar, M.}, \bibinfo{author}{Pavone, M.},
  \bibinfo{year}{2020}.
\newblock \bibinfo{title}{A vehicle coordination and charge scheduling
  algorithm for electric autonomous mobility-on-demand systems}, in:
  \bibinfo{booktitle}{{American Control Conference}}.
\newblock \bibinfo{note}{Submitted}.
\bibitem[{Buchholz(2019)}]{Buchholz2019}
\bibinfo{author}{Buchholz, N.}, \bibinfo{year}{2019}.
\newblock \bibinfo{title}{Spatial equilibrium, search frictions and efficient
  regulation in the taxi industry}.
\newblock \bibinfo{note}{{Available at
  }\url{https://scholar.princeton.edu/sites/default/files/nbuchholz/files/taxi_draft.pdf}}.
\bibitem[{Castillo et~al.(2017)Castillo, Knoepfle and
  Weyl}]{CastilloKnoepfle2017}
\bibinfo{author}{Castillo, J.}, \bibinfo{author}{Knoepfle, D.},
  \bibinfo{author}{Weyl, G.}, \bibinfo{year}{2017}.
\newblock \bibinfo{title}{Surge pricing solves the wild goose chase}, in:
  \bibinfo{booktitle}{{ACM Conf.\ on Economics and Computation}}, pp.
  \bibinfo{pages}{241--242}.
\bibitem[{Chen et~al.(2015)Chen, Low, Yao and Jaillet}]{ChenLowEtAl2015}
\bibinfo{author}{Chen, J.}, \bibinfo{author}{Low, K.}, \bibinfo{author}{Yao,
  Y.}, \bibinfo{author}{Jaillet, P.}, \bibinfo{year}{2015}.
\newblock \bibinfo{title}{Gaussian process decentralized data fusion and active
  sensing for spatiotemporal traffic modeling and prediction in
  mobility-on-demand systems}.
\newblock \bibinfo{journal}{{IEEE Transactions on Automation Sciences and
  Engineering}} \bibinfo{volume}{12}, \bibinfo{pages}{901--921}.
\bibitem[{Chen(2016)}]{Chen2016}
\bibinfo{author}{Chen, K.}, \bibinfo{year}{2016}.
\newblock \bibinfo{title}{Dynamic pricing in a labor market: Surge pricing and
  flexible work on the uber platform}, in: \bibinfo{booktitle}{{ACM Conf.\ on
  Economics and Computation}}, pp. \bibinfo{pages}{455--455}.
\bibitem[{Chen et~al.(2020)Chen, Zheng, Ke and Yang}]{ChenZhengEtAl2020}
\bibinfo{author}{Chen, X.}, \bibinfo{author}{Zheng, H.}, \bibinfo{author}{Ke,
  H.}, \bibinfo{author}{Yang, H.}, \bibinfo{year}{2020}.
\newblock \bibinfo{title}{Dynamic optimization strategies for on-demand ride
  services platform: Surge pricing, commission rate, and incentives}.
\newblock \bibinfo{journal}{{Transportation Research Part B: Methodological}}
  \bibinfo{volume}{138}, \bibinfo{pages}{23--45}.
\bibitem[{Chen et~al.(2017)Chen, He, Yin and Du}]{ChenHeEtAl2017}
\bibinfo{author}{Chen, Z.}, \bibinfo{author}{He, F.}, \bibinfo{author}{Yin,
  Y.}, \bibinfo{author}{Du, Y.}, \bibinfo{year}{2017}.
\newblock \bibinfo{title}{Optimal design of autonomous vehicle zones in
  transportation networks}.
\newblock \bibinfo{journal}{{Transportation Research Part B: Methodological}}
  \bibinfo{volume}{99}, \bibinfo{pages}{44--61}.
\bibitem[{Dia and Javanshour(2017)}]{DiaJavanshour2017}
\bibinfo{author}{Dia, H.}, \bibinfo{author}{Javanshour, F.},
  \bibinfo{year}{2017}.
\newblock \bibinfo{title}{Autonomous shared mobility-on-demand: Melbourne pilot
  simulation study}.
\newblock \bibinfo{journal}{{Transportation Research Procedia}}
  \bibinfo{volume}{22}, \bibinfo{pages}{285--296}.
\bibitem[{Estandia et~al.(2019)Estandia, Schiffer, Rossi, Kara, Rajagopal and
  Pavone}]{EstandiaSchifferEtAl2019}
\bibinfo{author}{Estandia, A.}, \bibinfo{author}{Schiffer, M.},
  \bibinfo{author}{Rossi, F.}, \bibinfo{author}{Kara, E.C.},
  \bibinfo{author}{Rajagopal, R.}, \bibinfo{author}{Pavone, M.},
  \bibinfo{year}{2019}.
\newblock \bibinfo{title}{On the interaction between autonomous mobility on
  demand systems and power distribution networks -- an optimal power flow
  approach}.
\newblock \bibinfo{journal}{arXiv preprint arXiv:1905.00200} .
\bibitem[{Fagnant and Kockelman(2018)}]{FagnantKockelman2018}
\bibinfo{author}{Fagnant, D.J.}, \bibinfo{author}{Kockelman, K.M.},
  \bibinfo{year}{2018}.
\newblock \bibinfo{title}{Dynamic ride-sharing and fleet sizing for a system of
  shared autonomous vehicles in austin, texas}.
\newblock \bibinfo{journal}{{Transportation}} \bibinfo{volume}{45},
  \bibinfo{pages}{143--158}.
\bibitem[{Florian and Morosan(2014)}]{FlorianMorosan2014}
\bibinfo{author}{Florian, M.}, \bibinfo{author}{Morosan, C.},
  \bibinfo{year}{2014}.
\newblock \bibinfo{title}{On uniqueness and proportionality in multi-class
  equilibrium assignment}.
\newblock \bibinfo{journal}{{Transportation Research Part B: Methodological}}
  \bibinfo{volume}{70}, \bibinfo{pages}{173--185}.
\bibitem[{Guda and Subramanian(2019)}]{GudaSubramanian2019}
\bibinfo{author}{Guda, H.}, \bibinfo{author}{Subramanian, U.},
  \bibinfo{year}{2019}.
\newblock \bibinfo{title}{Your uber is arriving: Managing on-demand workers
  through surge pricing, forecast communication, and worker incentives}.
\newblock \bibinfo{journal}{{Management Science}} .
\bibitem[{He et~al.(2017)He, Chen and Chen}]{HeChenEtAl2017}
\bibinfo{author}{He, Z.}, \bibinfo{author}{Chen, K.}, \bibinfo{author}{Chen,
  X.}, \bibinfo{year}{2017}.
\newblock \bibinfo{title}{A collaborative method for route discovery using taxi
  drivers’ experience and preferences}.
\newblock \bibinfo{journal}{{IEEE Transactions on Intelligent Transportation
  Systems}} \bibinfo{volume}{19}, \bibinfo{pages}{2505--2514}.
\bibitem[{H{\"o}rl et~al.(2019)H{\"o}rl, Ruch, Becker, Frazzoli and
  Axhausen}]{HorlRuchEtAl2019}
\bibinfo{author}{H{\"o}rl, S.}, \bibinfo{author}{Ruch, C.},
  \bibinfo{author}{Becker, F.}, \bibinfo{author}{Frazzoli, E.},
  \bibinfo{author}{Axhausen, K.}, \bibinfo{year}{2019}.
\newblock \bibinfo{title}{Fleet operational policies for automated mobility: A
  simulation assessment for zurich}.
\newblock \bibinfo{journal}{{Transportation Research Part C: Emerging
  Technologies}} \bibinfo{volume}{102}, \bibinfo{pages}{20--31}.
\newblock \DOIprefix\doi{https://doi.org/10.1016/j.trc.2019.02.020}.
\bibitem[{H\"orl et~al.(2018)H\"orl, Ruch and Frazzoli}]{Hoerl2018}
\bibinfo{author}{H\"orl, S.}, \bibinfo{author}{Ruch, C.},
  \bibinfo{author}{Frazzoli, E.}, \bibinfo{year}{2018}.
\newblock \bibinfo{title}{Amodeus, a simulation-based testbed for autonomous
  mobility-on-demand systems}, in: \bibinfo{booktitle}{{Proc.\ IEEE Int.\
  Conf.\ on Intelligent Transportation Systems}}.
\bibitem[{Iglesias et~al.(2019)Iglesias, Rossi, Zhang and
  Pavone}]{IglesiasRossiEtAl2017}
\bibinfo{author}{Iglesias, R.}, \bibinfo{author}{Rossi, F.},
  \bibinfo{author}{Zhang, R.}, \bibinfo{author}{Pavone, M.},
  \bibinfo{year}{2019}.
\newblock \bibinfo{title}{A {BCMP} network approach to modeling and controlling
  autonomous mobility-on-demand systems}.
\newblock \bibinfo{journal}{{Int.\ Journal of Robotics Research}}
  \bibinfo{volume}{38}, \bibinfo{pages}{357--374}.
\bibitem[{Li et~al.(2018)Li, Wang and Wang}]{LiWangEtAl2018}
\bibinfo{author}{Li, L.}, \bibinfo{author}{Wang, S.}, \bibinfo{author}{Wang,
  F.}, \bibinfo{year}{2018}.
\newblock \bibinfo{title}{An analysis of taxi driver’s route choice behavior
  using the trace records}.
\newblock \bibinfo{journal}{{IEEE Transactions on Computational Social
  Systems}} \bibinfo{volume}{5}, \bibinfo{pages}{576--582}.
\bibitem[{Liu et~al.(2018)Liu, Bansal, Daziano and
  Samaranayake}]{LiuBansalEtAl2018}
\bibinfo{author}{Liu, Y.}, \bibinfo{author}{Bansal, P.},
  \bibinfo{author}{Daziano, R.}, \bibinfo{author}{Samaranayake, S.},
  \bibinfo{year}{2018}.
\newblock \bibinfo{title}{A framework to integrate mode choice in the design of
  mobility-on-demand systems}.
\newblock \bibinfo{journal}{{Transportation Research Part C: Emerging
  Technologies}} \bibinfo{note}{In press. Available online at
  \url{https://doi.org/10.1016/j.trc.2018.09.022}}.
\bibitem[{Lokhandwala and Cai(2018)}]{LokhandwalaCai2018}
\bibinfo{author}{Lokhandwala, M.}, \bibinfo{author}{Cai, H.},
  \bibinfo{year}{2018}.
\newblock \bibinfo{title}{Dynamic ride sharing using traditional taxis and
  shared autonomous taxis: A case study of nyc}.
\newblock \bibinfo{journal}{{Transportation Research Part C: Emerging
  Technologies}} \bibinfo{volume}{97}, \bibinfo{pages}{45--60}.
\bibitem[{Ma et~al.(2020)Ma, Xu, Meng and Cheng}]{MaXuEtAl2020}
\bibinfo{author}{Ma, J.}, \bibinfo{author}{Xu, M.}, \bibinfo{author}{Meng, Q.},
  \bibinfo{author}{Cheng, L.}, \bibinfo{year}{2020}.
\newblock \bibinfo{title}{Ridesharing user equilibrium problem under od-based
  surge pricing strategy}.
\newblock \bibinfo{journal}{{Transportation Research Part B: Methodological}}
  \bibinfo{volume}{134}, \bibinfo{pages}{1--24}.
\bibitem[{of~Manpower(2020)}]{MoMSingapore2020}
\bibinfo{author}{of~Manpower, S.M.}, \bibinfo{year}{2020}.
\newblock \bibinfo{title}{Summary table: Income}.
\newblock \bibinfo{note}{{Available at
  }\url{https://stats.mom.gov.sg/Pages/Income-Summary-Table.aspx}.}
\bibitem[{Pinto et~al.(2019)Pinto, Hyland, Mahmassani and
  Verbas}]{PintoHylandEtAl2019}
\bibinfo{author}{Pinto, H.K.R.F.}, \bibinfo{author}{Hyland, M.F.},
  \bibinfo{author}{Mahmassani, H.S.}, \bibinfo{author}{Verbas, I.O.},
  \bibinfo{year}{2019}.
\newblock \bibinfo{title}{Joint design of multimodal transit networks and
  shared autonomous mobility fleets}.
\newblock \bibinfo{journal}{{Transportation Research Part C: Emerging
  Technologies}} \bibinfo{note}{In press. Available online at
  \url{https://doi.org/10.1016/j.trc.2019.06.010}}.
\bibitem[{Rudin(1964)}]{Rudin1964}
\bibinfo{author}{Rudin, W.}, \bibinfo{year}{1964}.
\newblock \bibinfo{title}{Principles of mathematical analysis}.
  volume~\bibinfo{volume}{3}.
\newblock \bibinfo{publisher}{McGraw-hill New York}.
\bibitem[{Salazar et~al.(2018)Salazar, Rossi, Schiffer, Onder and
  Pavone}]{SalazarRossiEtAl2018}
\bibinfo{author}{Salazar, M.}, \bibinfo{author}{Rossi, F.},
  \bibinfo{author}{Schiffer, M.}, \bibinfo{author}{Onder, C.H.},
  \bibinfo{author}{Pavone, M.}, \bibinfo{year}{2018}.
\newblock \bibinfo{title}{On the interaction between autonomous
  mobility-on-demand and the public transportation systems}, in:
  \bibinfo{booktitle}{{Proc.\ IEEE Int.\ Conf.\ on Intelligent Transportation
  Systems}}.
\newblock \bibinfo{note}{{Extended Version, Available} at
  \url{https://arxiv.org/abs/1804.11278}}.
\bibitem[{Tsao et~al.(2018)Tsao, Iglesias and Pavone}]{TsaoIglesiasEtAl2018}
\bibinfo{author}{Tsao, M.}, \bibinfo{author}{Iglesias, R.},
  \bibinfo{author}{Pavone, M.}, \bibinfo{year}{2018}.
\newblock \bibinfo{title}{Stochastic model predictive control for autonomous
  mobility on demand}, in: \bibinfo{booktitle}{{Proc.\ IEEE Int.\ Conf.\ on
  Intelligent Transportation Systems}}.
\newblock \bibinfo{note}{In Press. {Extended Version, Available} at
  \url{https://arxiv.org/pdf/1804.11074}}.
\bibitem[{Tucker et~al.(2019)Tucker, Turan and Alizadeh}]{TuckerTuranEtAl2019}
\bibinfo{author}{Tucker, N.}, \bibinfo{author}{Turan, B.},
  \bibinfo{author}{Alizadeh, M.}, \bibinfo{year}{2019}.
\newblock \bibinfo{title}{Online charge scheduling for electric vehicles in
  autonomous mobility on demand fleets}, in: \bibinfo{booktitle}{{Proc.\ IEEE
  Int.\ Conf.\ on Intelligent Transportation Systems}}.
\bibitem[{Wei et~al.(2019)Wei, Pedarsani and Coogan}]{WeiPedarsaniEtAl2019}
\bibinfo{author}{Wei, Q.}, \bibinfo{author}{Pedarsani, R.},
  \bibinfo{author}{Coogan, S.}, \bibinfo{year}{2019}.
\newblock \bibinfo{title}{Mixed autonomy in ride-sharing network}.
\newblock \bibinfo{note}{Available at \url{https://arxiv.org/abs/1908.11711}}.
\bibitem[{Wen et~al.(2019)Wen, Nassir and Zhao}]{WenNassirEtAl2019}
\bibinfo{author}{Wen, J.}, \bibinfo{author}{Nassir, N.}, \bibinfo{author}{Zhao,
  J.}, \bibinfo{year}{2019}.
\newblock \bibinfo{title}{Value of demand information in autonomous
  mobility-on-demand systems}.
\newblock \bibinfo{journal}{{Transportation Research Part A: Policy and
  Practice}} \bibinfo{volume}{121}, \bibinfo{pages}{346--359}.
\bibitem[{Xie and Xie(2014)}]{XieXie2014}
\bibinfo{author}{Xie, J.}, \bibinfo{author}{Xie, C.}, \bibinfo{year}{2014}.
\newblock \bibinfo{title}{An improved tapas algorithm for the traffic
  assignment problem}, in: \bibinfo{booktitle}{{Proc.\ IEEE Int.\ Conf.\ on
  Intelligent Transportation Systems}}, \bibinfo{organization}{IEEE}. pp.
  \bibinfo{pages}{2336--2341}.
\bibitem[{Yang et~al.(2007)Yang, Zhang and Meng}]{YangZhangEtAl2007}
\bibinfo{author}{Yang, B.}, \bibinfo{author}{Zhang, X.}, \bibinfo{author}{Meng,
  Q.}, \bibinfo{year}{2007}.
\newblock \bibinfo{title}{Stackelberg games and multiple equilibrium behaviors
  on networks}.
\newblock \bibinfo{journal}{{Transportation Research Part B: Methodological}}
  \bibinfo{volume}{41}, \bibinfo{pages}{841--861}.
\bibitem[{Yang et~al.(2020)Yang, Shao, Wang and Ye}]{YangShaoEtAl2020}
\bibinfo{author}{Yang, H.}, \bibinfo{author}{Shao, C.}, \bibinfo{author}{Wang,
  H.}, \bibinfo{author}{Ye, J.}, \bibinfo{year}{2020}.
\newblock \bibinfo{title}{Integrated reward scheme and surge pricing in a
  ridesourcing market}.
\newblock \bibinfo{journal}{{Transportation Research Part B: Methodological}}
  \bibinfo{volume}{134}, \bibinfo{pages}{126--142}.
\bibitem[{the {C}ity of~{N}ew {Y}ork(2019)}]{NYC2019}
\bibinfo{author}{the {C}ity of~{N}ew {Y}ork}, \bibinfo{year}{2019}.
\newblock \bibinfo{title}{{Mayor de Blasio Announces Extending FHV Caps to
  Protect Hardworking Drivers, Increase Their Pay \& Reduce Cruising by Empty
  Cars in Manhattan}}.
\newblock \bibinfo{note}{{Available at
  }\url{https://www1.nyc.gov/office-of-the-mayor/news/301-19/mayor-de-blasio-extending-fhv-caps-protect-hardworking-drivers-increase-their-pay-\#/0}.}
\bibitem[{Zardini et~al.(2020)Zardini, Lanzetti, Salazar, Censi, Frazzoli and
  Pavone}]{ZardiniLanzettiEtAl2020}
\bibinfo{author}{Zardini, G.}, \bibinfo{author}{Lanzetti, N.},
  \bibinfo{author}{Salazar, M.}, \bibinfo{author}{Censi, A.},
  \bibinfo{author}{Frazzoli, E.}, \bibinfo{author}{Pavone, M.},
  \bibinfo{year}{2020}.
\newblock \bibinfo{title}{Towards a co-design framework for future mobility
  systems}, in: \bibinfo{booktitle}{{Annual Meeting of the Transportation
  Research Board}}.
\bibitem[{Zha et~al.(2018)Zha, Yin and Du}]{ZhaYinEtAl2018}
\bibinfo{author}{Zha, L.}, \bibinfo{author}{Yin, Y.}, \bibinfo{author}{Du, Y.},
  \bibinfo{year}{2018}.
\newblock \bibinfo{title}{Surge pricing and labor supply in the ride-sourcing
  market}.
\newblock \bibinfo{journal}{{Transportation Research Part B: Methodological}}
  \bibinfo{volume}{117}, \bibinfo{pages}{708--722}.
\bibitem[{Zha et~al.(2017)Zha, Chen, Peng and Gu}]{ZhaChenEtAl2017}
\bibinfo{author}{Zha, W.}, \bibinfo{author}{Chen, J.}, \bibinfo{author}{Peng,
  Z.}, \bibinfo{author}{Gu, D.}, \bibinfo{year}{2017}.
\newblock \bibinfo{title}{Construction of barrier in a fishing game with point
  capture}.
\newblock \bibinfo{journal}{{IEEE Transactions on Cybernetics}}
  \bibinfo{volume}{47}, \bibinfo{pages}{1409--1422}.
\newblock \DOIprefix\doi{10.1109/TCYB.2016.2546381}.
\bibitem[{Zhang and Pavone(2016)}]{ZhangPavone2016}
\bibinfo{author}{Zhang, R.}, \bibinfo{author}{Pavone, M.},
  \bibinfo{year}{2016}.
\newblock \bibinfo{title}{Control of robotic {Mobility-on-Demand} systems: A
  queueing-theoretical perspective}.
\newblock \bibinfo{journal}{{Int.\ Journal of Robotics Research}}
  \bibinfo{volume}{35}, \bibinfo{pages}{186--203}.
\bibitem[{Zhu and Levinson(2015)}]{ZhuLevinson2015}
\bibinfo{author}{Zhu, S.}, \bibinfo{author}{Levinson, D.},
  \bibinfo{year}{2015}.
\newblock \bibinfo{title}{Do people use the shortest path? an empirical test of
  wardrop’s first principle}.
\newblock \bibinfo{journal}{{PLoS ONE}} \bibinfo{volume}{10},
  \bibinfo{pages}{e0134322}.

\end{thebibliography}

\appendix
\section{Proofs}

\subsection{Proof for Theorem \ref{thm:HVEquilibrium}}
\label{proof:HVEquilibrium}

\begin{proof}[Proof of Theorem \ref{thm:HVEquilibrium}] We ignore index $h$ for the simplicity of presentation. 

(i)  Integrating Eq.(\ref{eq:HVOpt0}) into Eq.(\ref{eq:HVnumber}), we have
\begin{align}
g(v) &= \sum_{(i,j)\in\mathcal{E}} (x^*_{ij}+y^*_{ij})\tau_{ij}  + \sum_{(i,j)\in\mathcal{E}}u^*_{ij} \notag \\
&= \frac{1}{v}\Big(\sum_{(i,j)\in\mathcal{E}}c_{ij}x^*_{ij} - J^{\rm{SS}*}_{L,v} +  \sum_{(i,j)\in\mathcal{E}}\pi^*_{ij}\bar{q}_{ij}- \sigma\sum_{(i,j)\in \mathcal{E}}\delta_{ij}(x_{ij}^*+y_{ij}^*)\Big)\notag\\
&=\frac{1}{v}\Big(\sum_{(i,j)\in\mathcal{E}} c_{ij}x^*_{ij}- \sigma\sum_{(i,j)\in \mathcal{E}}\delta_{ij}(x_{ij}^*+y_{ij}^*)\Big). \label{eq:HVnumber1}
\end{align}
The last equality is due to the strong duality of Problem~\ref{prb:HV1}, i.e.,  $J^{\rm{SS}*}_{L,v} = \sum_{(i,j)\in\mathcal{E}}\pi^*_{ij}\bar{q}_{ij}$,  where the RHS represents the dual optimum.

 Denote the optimal solutions to Problem~\ref{prb:HV1} with parameters $\bar{v}$ and $\hat{v}$, respectively, as $(\bm{\bar{x}}^*,\bm{\bar{y}}^*)$ and $(\bm{\hat{x}}^*,\bm{\hat{y}}^*)$. Then by the optimality of these solutions, we have, 
\begin{align}
&\sum_{(i,j)\in\mathcal{E}}{c_{ij}\bar{x}^*_{ij}} -  \hat{v}\sum_{(i,j)\in\mathcal{E}} \tau_{ij}(\bar{x}^*_{ij} + \bar{y}^*_{ij}) - \sigma\sum_{(i,j)\in \mathcal{E}}\delta_{ij}(\bar{x}_{ij}^*+\bar{y}_{ij}^*) \notag \\
\leq &  \sum_{(i,j)\in\mathcal{E}}{c_{ij}\hat{x}^*_{ij}} -  \hat{v}\sum_{(i,j)\in\mathcal{E}} \tau_{ij}(\hat{x}^*_{ij} + \hat{y}^*_{ij})- \sigma\sum_{(i,j)\in \mathcal{E}}\delta_{ij}(\hat{x}_{ij}^*+\hat{y}_{ij}^*)\label{eq:jp1}\\
 &\sum_{(i,j)\in\mathcal{E}}{c_{ij}\hat{x}^*_{ij}} -  \bar{v}\sum_{(i,j)\in\mathcal{E}} \tau_{ij}(\hat{x}^*_{ij} + \hat{y}^*_{ij})  - \sigma\sum_{(i,j)\in \mathcal{E}}\delta_{ij}(\hat{x}_{ij}^*+\hat{y}_{ij}^*) \notag \\
 \leq & \sum_{(i,j)\in\mathcal{E}}{c_{ij}\bar{x}^*_{ij}} -  \bar{v}\sum_{(i,j)\in\mathcal{E}} \tau_{ij}(\bar{x}^*_{ij} + \bar{y}^*_{ij})- \sigma\sum_{(i,j)\in \mathcal{E}}\delta_{ij}(\bar{x}_{ij}^*+\bar{y}_{ij}^*).\label{eq:jp2} 
\end{align}

Multiplying Eq.(\ref{eq:jp1}) by $\bar{v}$ and multiplying Eq.(\ref{eq:jp2}) by $\hat{v}$, and summing up the resulting inequalities, we can obtain
\begin{align}
(\bar{v}-\hat{v})\Big(\sum_{(i,j)\in\mathcal{E}}{c_{ij}\bar{x}^*_{ij}}-\sigma\sum_{(i,j)\in \mathcal{E}}\delta_{ij}(\bar{x}_{ij}^*+\bar{y}_{ij}^*) - \sum_{(i,j)\in\mathcal{E}}{c_{ij}\hat{x}^*_{ij}} +\sigma\sum_{(i,j)\in \mathcal{E}}\delta_{ij}(\hat{x}_{ij}^*+\hat{y}_{ij}^*)\Big) \leq 0. 
\end{align}
Since $\bar{v}< \hat{v}$, we have $\bar{v}g(\bar{v}) \geq \hat{v}g(\hat{v})$, and hence $g(\bar{v}) \geq \hat{v}g(\hat{v})/\bar{v}>g(\hat{v})$. By Theorem 4.29 and Theorem 4.30 in \citet{Rudin1964}, for monotonic function $g(v)$, $g(v^-)$ and $g(v^+)$ exists for every $v\geq w$, and the discontinuities are countable. 

(ii) Notice that we can choose sufficiently large $v=\bar{v}$ such that $c_{ij}-\sigma\delta_{ij}-v\tau_{ij}<0,~\forall(i,j)\in\mathcal{E}$. We can then verify that the optimal solution to Problem~\ref{prb:HV1} would be the trivial solution $x_{ij}=0$ and $y_{ij}=0,~(i,j)\in\mathcal{E}$. This implies that $g(\bar{v})=0$. 

Under Assumption~\ref{asm:trail},  there exists $l = (e_1,e_2,\cdots,e_{|l|}) \in \mathcal{L}$, such that $U_l\geq vT_L$, where $e_r = (i_r,j_r,\kappa_r)$. Let $z_l=\min\{q_{ij}|(i,j,{\rm{pax}}) \in l\}>0$. We next show that the optimal solution $(\bm{x}^*,\bm{y}^*)$ to Problem~\ref{prb:HV1} with parameter $v=0$ satisfy
\begin{align}
    \sum_{(i,j)\in \mathcal{E}}c_{ij}x_{ij}^*- \sigma\sum_{(i,j)\in \mathcal{E}}\delta_{ij}(x_{ij}^*+y_{ij}^*) \geq \mu z_lT_l >0
\end{align}

To this end, we set $\{x_{ij},y_{ij}\}_{(i,j)\in\mathcal{E}}$ as 
\begin{align}
    x_{ij} =\left\{ \begin{aligned}
    &z_l,\quad & {\rm{if~}}(i,j,{\rm{pax}}) \in l\\
     &0,\quad & \rm{otherwise} 
    \end{aligned}\right.\\
    y_{ij} =\left\{ \begin{aligned}
    &z_l,\quad & {\rm{if~}}(i,j,{\rm{reb}}) \in l\\
     &0,\quad & \rm{otherwise} 
    \end{aligned}\right.
\end{align}
Clearly, $\{x_{ij},y_{ij}\}_{(i,j)\in\mathcal{E}}$ is a feasible solution to Problem~\ref{prb:HV1} with parameter $v=0$. By Eq.(\ref{eq:U}), Eq.(\ref{eq:T}), and Assumption~\ref{asm:trail}, we further have 
\begin{align}
    &\sum_{(i,j)\in \mathcal{E}}c_{ij}x_{ij}^h- \sigma\sum_{(i,j)\in \mathcal{E}}\delta_{ij}(x_{ij}^h+y_{ij}^h) = z_l\sum_{r=1}^{|l|}\Big((c_{i_rj_r}-\sigma\delta_{i_rj_r})\mathbb{I}_{\kappa_r=\rm{pax}}-\sigma\delta_{i_rj_r}\mathbb{I}_{\kappa_r=\rm{reb}}\Big) = z_lU_l \geq \mu z_lT_l >0
\end{align}
This implies $g(0)=+\infty$. Hence, by the monotonocity of $g(v)$,  for any $N^h>0$, there exists a $v_0> 0$ such that $g(v_0^+)\leq N^h \leq g(v_0^-)$.

(iii) We first handle the case with $v_0<\mu$ or $g(v^+)=g(v^-)$. The Karush–Kuhn–Tucker (KKT) conditions suggest that $(\bm{x}^*,\bm{y}^*)$ and $(\bm{\pi}^*,\bm{\alpha}^*,\bm{\beta}^*,\bm{\phi}^*)$ are the primal and dual solutions to Problem~\ref{prb:HV1}, respectively, if and only if they satisfy the constraints Eq.(\ref{eq:HVOpt1})--Eq.(\ref{eq:HVOpt3}) and the following conditions. 
\begin{subequations}
\begin{align}
0&=\phi^*_{i} - c_{ij} + v\tau_{ij} + \sigma\delta_{ij} + \pi^*_{ij} - \phi^*_{j} -\alpha^*_{ij},~(i,j)\in\mathcal{E}\label{eq:KKT1-1}\\
0&=\phi^*_{i} + v\tau_{ij} + \sigma\delta_{ij}- \phi^*_{j}-\beta^*_{ij} ,~(i,j)\in\mathcal{E} \label{eq:KKT1-2}\\ 
0&= \pi^*_{ij}(x^*_{ij}-\bar{q}_{ij}) ,~(i,j)\in\mathcal{E} \label{eq:KKT1-3}\\
0&=x^*_{ij}\alpha^*_{ij},~(i,j)\in\mathcal{E} \label{eq:KKT1-4}\\ 
0&=y^*_{ij}\beta^*_{ij},~(i,j)\in\mathcal{E} \label{eq:KKT1-5}\\ 
0&\leq \pi^*_{ij},\alpha^*_{ij},\beta^*_{ij},~(i,j)\in\mathcal{E}.  \label{eq:KKT1-6} 
\end{align}\label{eq:KKT1}
\end{subequations}

We next verify that $(\bm{x}^*,\bm{y}^*,\bm{u}^*)$ satisfy the conditions in Definition~\ref{def:HVEquilibrium}. 
For condition (a), we only need to show $(\bm{x}^*,\bm{y}^*)$ satisfy Eq.(\ref{eq:HVWaiting}) and Eq.(\ref{eq:HVUB}). Recall that $u^*_{ij} = \pi^*_{ij}\bar{q}_{ij}/v\geq 0$, and by Eq.(\ref{eq:KKT1-3}), Eq.(\ref{eq:HVWaiting}) holds.  Since $g(v)$ is strictly decreasing and $v=\{v_0,\mu\}$, we have $g(v)\leq g(v_0^+)\leq N^h$, hence Eq.(\ref{eq:HVUB}) holds.
 
For condition (b), for any used trail $l=(e_1,\cdots,e_{|l|})$, since $\displaystyle z_l=  \min_{1\leq r\leq |l|}\{x_{i_rj_r}\mathbb{I}_{\kappa_r=\rm{pax}} + y_{i_rj_r}\mathbb{I}_{\kappa_r=\rm{reb}}\}> 0$, we have $x_{i_rj_r}>0$ if $\kappa_r=\rm{pax}$ and $y_{i_rj_r}>0$ if $\kappa_r=\rm{reb}$. Then by Eq.(\ref{eq:KKT1-1}), Eq.(\ref{eq:KKT1-2}), Eq.(\ref{eq:KKT1-4}) and Eq.(\ref{eq:KKT1-5}), we have 
\begin{align}
\phi^*_{i_r} - \phi^*_{j_r} = \left\{
\begin{aligned}
&c_{i_rj_r} - \sigma\delta_{i_rj_r}- v\tau_{i_rj_r} - \pi^*_{i_rj_r},&{\rm{if}}~\kappa_r=\rm{pax} \\
&- \sigma\delta_{i_rj_r}-v\tau_{i_rj_r},&{\rm{if}}~\kappa_r=\rm{reb}.   
\end{aligned}\right.\label{eq:phi}
\end{align}
Then summing up Eq.(\ref{eq:phi}) along trail $l$, we have
\begin{align}
\sum_{r=1}^{|l|}{\Big(c_{i_rj_r} - \sigma\delta_{i_rj_r}- v\tau_{i_rj_r} - \pi^*_{i_rj_r}\Big)\mathbb{I}_{\kappa_r=\rm{pax}} + \Big(- \sigma\delta_{i_rj_r}-v\tau_{i_rj_r}\Big)\mathbb{I}_{\kappa_r=\rm{reb}}} = 0,
\end{align}
which implies $U_l-vT_l=0$, hence $U_l/T_l=v$. Since $v=\max\{v_0,\mu\}$, we have $U_l/T_l \geq \mu$. For any trail $l'=(e_1',\cdots,e_{|l|}')\in\mathcal{L}$, by Eq.(\ref{eq:KKT1-1}), Eq.(\ref{eq:KKT1-2}), and Eq.(\ref{eq:KKT1-6}), we can similar obtain  $U_{l'}/T_{l'} \leq v$. Hence, it holds that $U_{l'}/T_{l'}\leq U_l/T_l$. 

For condition (c), the case $v_0<\mu$ or $g(v^+)=g(v^-)$ implies that $v=\mu$ or $g(v)=N^h$ .  

Therefore, $\{x^*_{ij},y^*_{ij},u^*_{ij}\}_{(i,j)\in\mathcal{E}}$ is an equilibrium and $v$ is the earning rate at the equilibrium. 

For the case with $g(v_0^+)\leq N^h \leq g(v_0^-)$, we first show that any solution to Problem~\ref{prb:HV2} with parameter $v_0$ is also an optimal solution to Problem~\ref{prb:HV2} with parameter $v_0$. Clearly, any optimal solution to Problem~\ref{prb:HV2} is a feasible solution to Problem~\ref{prb:HV1}. We show that Problem~\ref{prb:HV1} has the same optimum as Problem~\ref{prb:HV2}.  In the case with $g(v_0^+)<g(v_0^-)$, Problem~\ref{prb:HV1} has multiple solutions. Let $\bm{x}_+^*$ be the optimal solution associated with $v_0^+$, and $\bm{x}_-^*$ be the optimal solution associated with $v_0^-$.  Since Problem~\ref{prb:HV2} is a LP, any convex combination of $\bm{x}_+^*$ and $\bm{x}_-^*$  is also a solution to Problem~\ref{prb:HV2}. Also notice that for given $v$, $g(v)$ is a continuous function of $\bm{x}^*$. Hence, there exists an optimal solution $\bm{x}^*$ to Problem~\ref{prb:HV2} satisfying 
\begin{align}
g(v) =\frac{1}{v}\Big(\sum_{(i,j)\in\mathcal{E}} c_{ij}x^*_{ij}- \sigma\sum_{(i,j)\in \mathcal{E}}\delta_{ij}(x_{ij}^*+y_{ij}^*)\Big) =N^h. \label{eq:HVnumber2}
\end{align}
That being said, $\bm{x}^*$ is  also a solution to Problem~\ref{prb:HV2}. Hence, Problem~\ref{prb:HV1} and Problem~\ref{prb:HV2} have the same optimum, i.e., $\tilde{J}^{\rm{SS}*}_{L,v}= J^{\rm{SS}*}_{L,v}$. Therefore, any optimal solution to Problem~\ref{prb:HV2} $ (\tilde{\bm{x}}^*,\tilde{\bm{y}}^*)$ is also an optimal solution to Problem~\ref{prb:HV1}.  

Then, for the optimal solution $ (\tilde{\bm{x}}^*,\tilde{\bm{y}}^*)$ in Problem~\ref{prb:HV1}, let $\bm{\tilde{\pi}}^*$ be the optimal dual variable corresponding to constraint Eq.(\ref{eq:HVOpt1}). Set the number of waiting HVs $\bm{\tilde{u}}^*$ as $\tilde{u}_{ij}=\tilde{\pi}_{ij}\bar{q}_{ij}/v$. Similar to the proof for the other case, we can derive the KKT conditions for Problem~\ref{prb:HV1} for solution $ (\tilde{x}^*,\tilde{y}^*)$ and show that  $ (\bm{\tilde{x}}^*,\bm{\tilde{y}}^*,\bm{\tilde{u}}^*)$ satisfies the conditions in Definition~\ref{def:HVEquilibrium}. 

This concludes the proof. 
 \end{proof}

\subsection{Proof for Theorem~\ref{thm:mixed}}
\label{proof:mixed}
\begin{proof}[Proof for Theorem~\ref{thm:mixed}]
Without the loss of generality, we assume $H(\cdot)$ is differentiable within range $[0,\eta_{ij}]$. Otherwise, we can replace the gradients with subgradients in the proof. 
Since Problem~\ref{prb:global} is convex,  the KTT conditions suggest that $(\bm{x}^{a*},\bm{y}^{a*},\bm{x}^{h*},\bm{y}^{h*},\bm{q}^*)$ and $(\bm{\phi}^*,\bm{\pi}^*, \bm{\gamma}^*,\bm{\alpha}^*,\bm{\beta}^*,\bm{\theta}^*,\bm{\rho}^*)$ are the optimal primal and dual solutions to Eq.(\ref{eq:globalOpt}), respectively, if and only if they satisfy the primal constraints Eq.(\ref{eq:AVDemand2})--Eq.(\ref{eq:AVBounds2}) and the following conditions.  
\begin{subequations}
\begin{align}
0&=- H_{ij}(q^*_{ij}) - q^*_{ij}H'_{ij}(q^*_{ij}) + \pi^{*}_{ij} + \theta^{*}_{ij} - \rho^{*}_{ij},~i\neq j\in\mathcal{N}  \label{eq:KKT2-1} \\
0&= \tau_{ij}\mu\mathbb{I}_{m=h} + \sigma\delta_{ij}  + \phi^{m*}_i - \phi^{m*}_j  + \tau_{ij}\gamma^{m*} - \pi^{*}_{ij}- \alpha_{ij}^{m*},~i\neq j\in\mathcal{N},~m\in\mathcal{M} \label{eq:KKT2-2} \\
0&=\tau_{ij}\mu\mathbb{I}_{m=h} + \sigma\delta_{ij}    + \phi^{m*}_i - \phi^{m*}_j + \tau_{ij}\gamma^{m*} - \beta_{ij}^{m*},~i\neq j\in\mathcal{N},~m\in\mathcal{M} \label{eq:KKT2-3} \\
0&=q^*_{ij}\rho^{*}_{ij},~i\neq j\in\mathcal{N} \label{eq:KKT2-4} \\
0&=(q^*_{ij} - \eta_{ij})\theta^{*}_{ij},~i\neq j\in\mathcal{N} \label{eq:KKT2-5}\\
0&=x^{m*}_{ij}\alpha_{ij}^{m*},~i\neq j\in\mathcal{N},~m\in\mathcal{M} \label{eq:KKT2-6} \\
0&=y^{m*}_{ij}\beta_{ij}^{m*},~i\neq j\in\mathcal{N},~m\in\mathcal{M} \label{eq:KKT2-7} \\
0&\leq\bm{\gamma}^*,\bm{\alpha}^*,\bm{\beta}^*,\bm{\theta}^*,\bm{\rho}^*.  \label{eq:KKT2-8}
\end{align}
\end{subequations}

Let $c^*_{ij}$ satisfy
\begin{align}
c^*_{ij} = \max\{\pi^{*}_{ij},0\} \label{eq:compensation} . 
\end{align} 

If $0<q^*_{ij}\leq \eta_{ij}$, notice that $H'_{ij}(q^*_{ij})<0$ and $H_{ij}(q^*_{ij})\geq 0$, then by Eq.(\ref{eq:KKT2-1}), Eq.(\ref{eq:KKT2-4}), Eq.(\ref{eq:KKT2-8}), and Eq.(\ref{eq:compensation}), we have
\begin{align}
0 \leq c^*_{ij} = \max\{\pi^{*}_{ij},0\} = \max\{H_{ij}(q^*_{ij})  + q^*_{ij}H'_{ij}(q^*_{ij}) - \theta_{ij},0\} \leq H_{ij}(q^*_{ij}) = p^*_{ij}
\end{align}

We first show that $(\bm{x}^{h*},\bm{y}^{h*},\bm{u})$ is an equilibrium for HVs under the vector of compensations and remaining demand $(\bm{c},\bm{\bar{q}})$, giving an earning rate of $\mu+\gamma\geq \mu$. To this end, we verify the conditions in Definition \ref{def:HVEquilibrium}. For condition (a) in Definition~\ref{def:HVEquilibrium}, notice that $u^{h*}_{ij}=0,~i\neq j\in\mathcal{N}$, then the the waiting time $\zeta_{ij}=0$, and by constraints Eq.(\ref{eq:AVFlow2}), Eq.(\ref{eq:AVDemand2}), and Eq.(\ref{eq:AVSize2}), we can derive Eq.(\ref{eq:HVDemand})-Eq.(\ref{eq:HVUB}). For condition (b), notice that $\alpha_{ij}^{h*}+\pi^{*}_{ij}=\beta_{ij}^{h*}$ by comparing Eq.(\ref{eq:KKT2-2}) and Eq.(\ref{eq:KKT2-3}), we have for any trail $l'$,
\begin{align}
U_{l'} - \mu T_{l'} &=  
\sum_{r=1}^{|l'|}{\Big((c^*_{i_rj_r} - (\mu+\gamma)\tau_{i_rj_r}-\sigma\delta_{i_rj_r})\mathbb{I}_{\kappa_r=\rm{pax}} + (-(\mu+\gamma)\tau_{i_rj_r}-\sigma\delta_{i_rj_r})\mathbb{I}_{\kappa_r=\rm{reb}}\Big)} \notag \\
&= 
\sum_{r=1}^{|l'|}{\Big((\max\{\pi^{*}_{i_rj_r},0\} + \phi^{h*}_{i_r} - \phi^{h*}_{j_r} - \pi^{*}_{i_rj_r} -\alpha^{h*}_{i_rj_r} )\mathbb{I}_{\kappa_r=\rm{pax}} +(\phi^{m*}_{i_r} - \phi^{m*}_{j_r}-\beta^{h*}_{i_rj_r})\mathbb{I}_{\kappa_r=\rm{reb}} \Big)} \notag \\
& = \sum_{r=1}^{|l'|}{\Big(\max\{-\alpha^{h*}_{i_rj_r},-\beta^{h*}_{i_rj_r}\}\mathbb{I}_{\kappa_r=\rm{pax}} -\beta^{h*}_{i_rj_r}\mathbb{I}_{\kappa_r=\rm{reb}} \Big)}
\end{align}
By Eq.(\ref{eq:KKT2-8}), we have $U_{l'} - \mu T_{l'} \leq 0$ . If trail $l$ is used, since $\alpha^{h*}_{i_rj_r}=\beta^{h*}_{i_rj_r}=0,\forall r$, we have $U_{l} - \mu T_{l} = 0$. Hence, condition (c) also holds. 

We next prove that $(\bm{q}^*,\bm{c}^*,\bm{x}^{a*},\bm{y}^{a*})$ is an optimal solution to Problem~\ref{prb:operatorOpt}. 
Clearly, is a feasible solution. Hence, $\tilde{J}^{\rm{SS}}_L(\bm{q}^*,\bm{c}^*,\bm{x}^{a*},\bm{y}^{a*}) \leq J^{\rm{SS},*}_L$. 
For any solution $(\bm{q},\bm{c},\bm{x}^{a},\bm{y}^{a})$ satisfying Eq.(\ref{eq:AVDemand2}) -- Eq.(\ref{eq:AVBounds2}), let $(\bm{x}^{h},\bm{y}^{h},\bm{u}^{h})$ be the induced HV equilibrium and $v$ is the earning rate at the equilibrium. 

We first construct the following problem by eliminating $\bm{u}$ from the objective function. 
\begin{subequations}
 \begin{align}
\max_{\bm{q},\bm{x}^a,\bm{y}^a,\bm{x}^h,\bm{y}^h}&\quad  P_1 = \quad  \sum_{(i,j)\in \mathcal{E}}H_{ij}(q_{ij})(x^a_{ij}+x^h_{ij}) -\mu\sum_{(i,j)\in\mathcal{E}} (x_{ij}^h+y_{ij}^h)\tau_{ij} 
- 
\sigma\sum_{m\in\mathcal{M}}\sum_{(i,j)\in \mathcal{E}}\delta_{ij}(x^m_{ij}+y^m_{ij}) \label{eq:AVOpt1obj}\\
\rm{s.t.}& \quad \rm{Eq.(\ref{eq:HVCouple}),~Eq.(\ref{eq:AVFlow2}),~Eq.(\ref{eq:AVSize2}),~Eq.(\ref{eq:AVBounds2})}\notag \\
&\quad   x_{ij}^a + x_{ij}^h \leq q_{ij},(i,j)\in\mathcal{E}  
\end{align}\label{eq:AVOpt1} 
\end{subequations}
Since the constraints for HVs are already satisfied by the follower model Problem~\ref{prb:HV1} or Problem~\ref{prb:HV2}, we can equivalently add these constraints to the optimization problem. Also notice that for the same feasible solution, the objective $P_1$ is no less than $J^{\rm{SS}}_L$.  Thus, we have $J^{\rm{SS},*}_L \leq P^*_1$.

 Problem Eq.(\ref{eq:AVOpt2}) removes constraints Eq.(\ref{eq:HVCouple}) from Problem Eq.(\ref{eq:AVOpt1}), hence we have $P^*_1\leq P^*_2$. 
  \begin{align}
\max_{}&\quad  P_2 = \rm{Eq.(\ref{eq:AVOpt1obj})} \notag \\
\rm{s.t.}& \quad \rm{Eq.(\ref{eq:AVFlow2}),~Eq.(\ref{eq:AVSize2}),~Eq.(\ref{eq:AVBounds2})} \notag \\
&\quad   x_{ij}^a + x_{ij}^h \leq q_{ij},(i,j)\in\mathcal{E} \label{eq:AVOpt2}
\end{align}

 Problem Eq.(\ref{eq:AVOpt3}) modifies the inequality constraints for demand in  Problem Eq.(\ref{eq:AVOpt2}) to equality constraints Eq.(\ref{eq:AVDemand2}). 
   \begin{align}
\max&\quad  P_3 = \rm{Eq}.(\ref{eq:AVOpt1obj}) \notag \\
\rm{s.t.}& \quad \rm{Eq.(\ref{eq:AVDemand2}) - Eq.(\ref{eq:AVBounds2})} \label{eq:AVOpt3}
\end{align}
We notice that for given $\bm{x},\bm{y}$, $P_2$ increases as $\bm{q}$ is reduced. Hence, $P_2^*\leq P_3^*$. Further notice that  Problem Eq.(\ref{eq:AVOpt3}) is equivalent to Problem~\ref{prb:global} if Eq.(\ref{eq:AVDemand2}) hold, and thus hence $P_3^*= \tilde{J}^{\rm{SS}}_L(\bm{q}^*,\bm{c}^*,\bm{x}^{a*},\bm{y}^{a*})$. 
Therefore, $\tilde{J}^{\rm{SS}}_L(\bm{q}^*,\bm{c}^*,\bm{x}^{a*},\bm{y}^{a*})=P^*_1\leq P^*_2\leq P_3^*= \tilde{J}^{\rm{SS}}_L(\bm{q}^*,\bm{c}^*,\bm{x}^{a*},\bm{y}^{a*})$. This completes the proof. 

 \end{proof}
 
\subsection{Proof of Proposition~\ref{prp:implication}}\label{proof:implication}
\begin{proof}[Proof of Proposition~\ref{prp:implication}]
Let $(\bm{x}^*,\bm{y}^*)$ be a solution to Problem~\ref{prb:HV1}. We only need to show that $\bm{x}^*$ is also an optimal solution to Problem~\ref{prb:rvfollower}. The Karush–Kuhn–Tucker (KKT) conditions suggest that $\bm{x}^*$ is  the primal solution to Problem~\ref{prb:rvfollower}, if and only if there exists a vector $(\bm{\pi}^*,\bm{\alpha}^*,\bm{\chi}^*)$ such that  Eq.(\ref{eq:rvfollower1})--Eq.(\ref{eq:rvfollower2}) and the following conditions hold. 
\begin{subequations}
\begin{align}
0&=\phi^*_{i} - c_{ij} + v\tau_{ij}+\sigma\delta_{ij} + \pi^*_{ij} - \phi^*_{j} -\alpha^*_{ij} + \beta^*_{ij},~(i,j)\in\mathcal{E}\label{eq:KKT3-1}\\
0&= \pi^*_{ij}(x^*_{ij}-\bar{q}_{ij}) ,~(i,j)\in\mathcal{E} \label{eq:KKT3-2}\\
0&=x^*_{ij}\alpha^*_{ij},~(i,j)\in\mathcal{E} \label{eq:KKT3-3}\\ 
0&\leq \pi^*_{ij},\alpha^*_{ij},~(i,j)\in\mathcal{E}.  \label{eq:KKT3-4} 
\end{align}\label{eq:KKT3}
\end{subequations}
Comparing Eq.(\ref{eq:KKT3}) with Eq.(\ref{eq:KKT1}), we can set $\bm{\pi}^*$, $\bm{\alpha}^*$, and $\bm{\phi}^*$ as the optimal dual solutions to Problem~\ref{prb:HV1}, and $\bm{\chi}=0$, and then conditions Eq.(\ref{eq:KKT1}) hold. Therefore,  $\bm{x}^*$  is an optimal solution to Problem~\ref{prb:rvfollower}. 
\end{proof}

\end{document}